\documentclass[preprint,11pt]{imsart}

\RequirePackage{amsthm,amsmath,amsfonts,amssymb}
\RequirePackage[authoryear]{natbib}
\RequirePackage[overload]{textcase}
\RequirePackage{graphicx}
\RequirePackage[margin=3.5cm]{geometry}

\RequirePackage{enumerate}
\RequirePackage{grffile}
\RequirePackage{booktabs,rotating}
\RequirePackage{bm}

\startlocaldefs


\numberwithin{equation}{section}
\numberwithin{table}{section}
\numberwithin{figure}{section}


\theoremstyle{plain}
\newtheorem{prop}{Proposition}[section]

\newtheorem{thm}[prop]{Theorem}

\newtheorem{lem}[prop]{Lemma}


\theoremstyle{remark}
\newtheorem{remark}[prop]{Remark}


\newcommand{\eps}{\varepsilon}
\newcommand{\N}{\mathbb{N}}
\newcommand{\R}{\mathbb{R}}
\newcommand{\Z}{\mathbb{Z}}
\newcommand{\dd}{\mathrm{d}}
\newcommand{\Bb}{\mathbb{B}}

\newcommand{\Db}{\mathbb{D}}
\newcommand{\Fb}{\mathbb{F}}
\newcommand{\Gb}{\mathbb{G}}
\newcommand{\Hb}{\mathbb{H}}
\newcommand{\Jb}{\mathbb{J}}
\newcommand{\Kb}{\mathbb{K}}
\newcommand{\Rb}{\mathbb{R}}
\newcommand{\Sb}{\mathbb{S}}
\newcommand{\Tb}{\mathbb{T}}

\newcommand{\Cc}{\mathcal{C}}

\newcommand{\Fc}{\mathcal{F}}
\newcommand{\Lc}{\mathcal{L}}
\newcommand{\Sc}{\mathcal{S}}
\newcommand{\Xc}{\mathcal{X}}
\newcommand{\Wc}{\mathcal{W}}

\def\disp{\displaystyle}

\newcommand{\Ex}{\mathbb{E}}
\newcommand{\Var}{\mathrm{Var}}
\newcommand{\Cov}{\mathrm{Cov}}
\newcommand{\1}{\mathbf{1}}
\newcommand{\ip}[1]{\lfloor #1 \rfloor}
\renewcommand{\Pr}{\mathbb{P}}

\newcommand{\p}{\overset{\Pr}{\to}}
\newcommand{\as}{\overset{as}{\to}}

\endlocaldefs

\begin{document}

\begin{frontmatter}


\title{Nonparametric sequential change-point detection for multivariate time series based on empirical distribution functions}
\runtitle{Sequential change-point detection based on empirical d.f.s}
  
\begin{aug}


\author[A]{\fnms{Ivan} \snm{Kojadinovic}\ead[label=e1,mark]{ivan.kojadinovic@univ-pau.fr}}
\and
\author[A]{\fnms{Ghislain} \snm{Verdier}\ead[label=e2]{ghislain.verdier@univ-pau.fr}}


\address[A]{CNRS / Universit\'e de Pau et des Pays de l'Adour / E2S UPPA, Laboratoire de math\'ematiques et applications, IPRA, UMR 5142, B.P. 1155, 64013 Pau Cedex, France, \printead{e1,e2}}

\runauthor{Kojadinovic and Verdier}

\end{aug}

\begin{abstract}
The aim of sequential change-point detection is to issue an alarm when it is thought that certain probabilistic properties of the monitored observations have changed. This work is concerned with nonparametric, closed-end testing procedures based on differences of empirical distribution functions that are designed to be particularly sensitive to changes in the contemporary distribution of multivariate time series. The proposed detectors are adaptations of statistics used in a posteriori (offline) change-point testing and involve a weighting allowing to give more importance to recent observations. The resulting sequential change-point detection procedures are carried out by comparing the detectors to threshold functions estimated through resampling such that the probability of false alarm remains approximately constant over the monitoring period. A generic result on the asymptotic validity of such a way of estimating a threshold function is stated. As a corollary, the asymptotic validity of the studied sequential tests based on empirical distribution functions is proven when these are carried out using a dependent multiplier bootstrap for multivariate time series. Large-scale Monte Carlo experiments demonstrate the good finite-sample properties of the resulting procedures. The application of the derived sequential tests is illustrated on financial data.
\end{abstract}

\begin{keyword}[class=MSC2010]
\kwd[Primary ]{62E20}
\kwd{62H15}
\kwd[; secondary ]{62G09}
\end{keyword}

\begin{keyword}
\kwd{asymptotic validity results}
\kwd{dependent multiplier bootstrap}
\kwd{online monitoring}
\kwd{resampling}
\kwd{threshold function estimation}
\end{keyword}

\tableofcontents

\end{frontmatter}




\section{Introduction}

Let $\bm X_1,\dots,\bm X_m$, $m \geq 1$,  be a stretch from a $d$-dimensional stationary times series of continuous random vectors with unknown contemporary distribution function (d.f.) $F$ given by $F(\bm x) = \Pr(\bm X_1 \leq \bm x)$, $\bm x \in \R^d$. These available observations will be referred to as the \emph{learning sample} as we continue. The context of this work is that of \emph{sequential change-point detection}: new observations $\bm X_{m+1},\bm X_{m+2},\dots$ arrive sequentially and we wish to issue an alarm as soon as possible if the contemporary distribution of the most recent observations is not equal to $F$ anymore. If there is no evidence of a change in the contemporary distribution, the monitoring stops after the arrival of observation $\bm X_n$ for some $n > m$.

The theoretical framework of our investigations is that adopted in the seminal paper of \cite{ChuStiWhi96}. Unlike classical approaches in statistical process control (SPC) usually calibrated in terms of \emph{average run length} (ARL) \citep[see, e.g.,][for an overview]{Lai01,Mon07} and leading in general to the rejection of the underlying null hypothesis of stationarity with probability one, the approach of \cite{ChuStiWhi96} guarantees that, asymptotically (that is, as the size $m$ of the learning sample tends to infinity), stationarity, if it holds, will only be rejected with a small probability $\alpha$ to be interpreted as a type I error and thus called the \emph{probability of false alarm}. The latter paradigm is increasingly considered in the literature; see, e.g., \cite{HorHusKokSte04}, \cite{AueHorKuhSte12}, \cite{AueHorHorHusSte12}, \cite{Fre15}, \cite{KirWeb18}, \cite{DetGos19} or \cite{KirSto19}.

Among the approaches \emph{à la} \cite{ChuStiWhi96}, one can distinguish between \emph{closed-end} and \emph{open-end} procedures. The latter can in principle continue indefinitely if no evidence against the null is observed. Our approach is of the former type in the sense that at most $n - m$ new observations will be considered before the monitoring stops.

As already mentioned, the null hypothesis of the procedure that we shall investigate is that of stationarity and can be more formally stated as follows:
\begin{equation}
  \label{eq:H0}
  \begin{split}
    H_0: \, &\bm X_1,\dots,\bm X_m,\bm X_{m+1},\dots, \bm X_n \text{ is a stretch from a stationary time series} \\
    &\text{with contemporary d.f.\ } F.
  \end{split}
\end{equation}
Notice that, if one is additionally ready to assume the independence of the observations, $H_0$ simplifies to
\begin{equation}
  \label{eq:H0:ind}
  H_0^{\text{ind}}: \, \bm X_1,\dots,\bm X_m,\bm X_{m+1},\dots, \bm X_n \text{ are independent random vectors with d.f.\ } F.
\end{equation}
The aim of this work is to derive nonparametric sequential change-point detection procedures particularly sensitive to the alternative hypothesis
\begin{equation}
  \label{eq:H1}
  \begin{split}
    H_1: \, &\exists \, k^\star \in \{m,\dots,n-1\} \text{ such that }  \bm X_1,\dots,\bm X_{k^\star} \text{ is a stretch from a stationary} \\
    &\text{time series with contemporary d.f.\ } F \text { and } \bm X_{k^\star+1},\dots,\bm X_{n} \text{ is a stretch} \\
    & \text{from a stationary time series with contemporary d.f.\ } G \neq F.
  \end{split}
\end{equation}
In other words, unlike some of the approaches reported for instance in \cite{KirWeb18}, \cite{DetGos19} or \cite{KirSto19}, we are not solely interested in being sensitive to a change in a given parameter of the $d$-dimensional time series such as the mean vector or the covariance matrix. We aim at deriving nonparametric monitoring procedures that, in principle, provided $m$ and $n$ are large enough, can detect all types of changes in the contemporary d.f.

In the considered context, the two main ingredients of a sequential change-point detection procedure are a sequence of positive statistics $D_m(k)$, $k \in \{m+1,\dots,n\}$, and a sequence of suitably chosen strictly positive thresholds $w_m(k)$, $k \in \{m+1,\dots,n\}$. For $k \in \{m+1,\dots,n\}$, the statistic $D_m(k)$ (called a \emph{detector} in the literature) is used to assess a possible departure from $H_0$ in~\eqref{eq:H0} using only observations $\bm X_1, \dots, \bm X_k$ and is such that the larger $D_m(k)$, the more evidence against $H_0$. After the arrival of observation $\bm X_k$, $k \in \{m+1,\dots,n\}$, the detector $D_m(k)$ is computed and compared to the threshold $w_m(k)$. If $D_m(k) > w_m(k)$, the available evidence against stationarity is considered to be large enough and an alarm is issued resulting in the monitoring to stop. If $D_m(k) \leq w_m(k)$ and $k < n$, a new observation $\bm X_{k+1}$ is collected and the previous iteration is repeated. This monitoring process can be naturally represented by a graph illustrating the evolution of the sequence of detectors against the sequence of thresholds; see, e.g., Figures~\ref{fig:det-thresh} and~\ref{fig:biv-det-thresh}. To ensure that the monitoring can be interpreted as a testing procedure, the sequence of thresholds $w_m(k)$, $k \in \{m+1,\dots,n\}$, needs to be chosen such that, under stationarity,
\begin{equation}
  \label{eq:neg:typeI}
\Pr \{ D_m(m+1) \leq w_m(m+1), \dots, D_m(n) \leq w_m(n)  \} \geq 1 - \alpha,
\end{equation}
for some small significance level $\alpha$.

The detectors $D_m(k)$, $k \in \{m+1,\dots,n\}$, proposed in this work are defined from differences of empirical distribution functions. Specifically, as shall be explained in detail in Section~\ref{sec:detectors}, particularly powerful monitoring procedures can be obtained by defining $D_m(k)$ as a maximized, suitably normalized difference of empirical distribution functions computed from $\bm X_1,\dots, \bm X_j$ and $\bm X_{j+1},\dots, \bm X_k$, respectively, where the maximum is taken over all $j \in \{m,\dots,k-1\}$. When sequentially investigating changes in real-valued parameters of multivariate time series, such an approach (related to what was called a Page-CUSUM procedure in \cite{KirWeb18} as a consequence of the work of \cite{Fre15}) was recently suggested in \cite{DetGos19} and justified through a likelihood ratio approach. The latter principle had actually already been considered in the SPC literature (to the best of our knowledge, without any asymptotic theory, however) since at least the work of \cite{HawQiuKan03}; see also \cite{HawZam05}, \cite{RosTasAda11} and \cite{RosAda12}. As we shall see, our detectors can be cast into the framework of \cite{DetGos19} with the difference that they involve a weighting allowing to give more or less importance to recent observations. In that sense, they could also be regarded as an adaptation of the statistics considered in \cite{CsoSzy94b} for a posteriori (offline) change-point detection to the sequential setting considered in this work.

As far as the thresholds $w_m(k)$, $k \in \{m+1,\dots,n\}$, are concerned, unlike in the recent literature in which these are calibrated through parametric functions defined up to a positive multiplicative constant such that~\eqref{eq:neg:typeI} holds \citep[see, e.g.,][]{KirWeb18,DetGos19,KirSto19}, we define them through simulations or resampling with the aim that, under $H_0$ in~\eqref{eq:H0}, the probability of rejection of $H_0$ be roughly the same at every step $k \in \{m+1,\dots,n\}$ of the procedure. In the considered sequential testing framework \emph{à la} \cite{ChuStiWhi96}, the latter requirement had actually already been proposed by \cite{AnaKos18}. Interestingly enough, as we shall see, it is strongly related to an approach appearing in SPC in \cite{MarConWooDra95}; see also, for instance, \cite{HawZam05}, \cite{Ros14}, \cite{RosAda12} and the \textsf{R} package \texttt{cpm} described in \cite{Ros15}.

From the point of view of their main ingredients, the sequential change-point tests studied in this work are related to the approach put forward in SPC in \cite{RosAda12} (without any asymptotic theory) and implemented in the \textsf{R} package {\tt cpm} \citep{Ros15}. However, unlike the latter, they are not restricted to univariate observations and they can deal with time series, among several other differences. To achieve this, as soon as the learning data set $\bm X_1,\dots,\bm X_m$ is multivariate ($d > 1$) or serially dependent, resampling, under the form of a \emph{dependent multiplier bootstrap} \emph{à la} \cite{BucKoj16}, is used to carry out the monitoring procedures. By adapting the theoretical framework considered in \cite{DetGos19}, the asymptotic validity of the investigated procedures is proven under strong mixing both under $H_0$ in~\eqref{eq:H0} and under sequences of alternatives related to $H_1$ in~\eqref{eq:H1}.

This paper is organized as follows. In the second section, we propose three classes of detectors based on differences of empirical d.f.s and study their asymptotics, both under $H_0$ and sequences of alternatives related to $H_1$. A more general perspective is adopted in the third section: for an arbitrary detector in the considered closed-end setting, a procedure for estimating the threshold function such that the probability of false alarm remains approximately constant over the monitoring period is investigated and its asymptotic validity is proven under both~$H_0$ and sequences of alternatives related to~$H_1$ when the estimation is based on an asymptotically valid resampling scheme. The results of the third section are applied in the fourth section to the three proposed classes of detectors based on differences of empirical d.f.s. To do so, the consistency of a dependent multiplier bootstrap for the detectors is proven under strong mixing. The fifth section presents a summary of large-scale Monte-Carlo experiments demonstrating the good finite-sample properties of the resulting sequential testing procedures. An application on real financial data concludes the article.

Auxiliary results and proofs are deferred to a sequence of appendices. A preliminary implementation of the studied tests is available in the package {\tt npcp} \citep{npcp} for the \textsf{R} statistical system \citep{Rsystem}.


\section{The detectors and their asymptotics}
\label{sec:detectors}

After defining three classes of detectors based on empirical distribution functions and noticing that they are margin-free under $H_0$ in~\eqref{eq:H0}, we study their asymptotics under the null and sequences of alternatives related to $H_1$ in~\eqref{eq:H1}.

\subsection{Detectors based on empirical distribution functions}
\label{sec:det:emp}

Let $F_{j:k}$ be the empirical d.f.\ computed from the stretch $\bm X_j,\dots,\bm X_k$ of available observations. More formally, for any integers $j,k \geq 1$ and $\bm x \in \R^d$, let
\begin{equation}
  \label{eq:Fjk}
  F_{j:k}(\bm x) =
  \left\{
    \begin{array}{ll}
      \disp \frac{1}{k-j+1}\sum_{i=j}^k \1(\bm X_i \leq \bm x), \qquad & \text{if } j \leq k, \\
      0, \qquad &\text{otherwise},
    \end{array}
  \right.
\end{equation}
where the inequalities $\bm X_i \leq \bm x$ are to be understood componentwise and $\1(\bm X_i \leq \bm x)$ is equal to 1 (resp.\ 0) if all (resp.\
some) of the $d$ underlying inequalities are true (resp.\ false).

After the $k$th observation has arrived, the available data take the form of the stretch $\bm X_1,\dots,\bm X_k$. If we were in the context of a posteriori change-point detection, a prototypical test statistic would be the maximally selected Kolmogorov--Smirnov-type statistic
\begin{equation}
  \label{eq:Rk}
R_k = \max_{1 \leq j \leq k-1}  \frac{j (k - j)}{k^{3/2}} \sup_{\bm x \in \R^d} | F_{1:j}(\bm x) - F_{j+1:k}(\bm x) |,
\end{equation}
practically considered for instance in \citet{GomHor99} or \cite{HolKojQue13}. The intuition behind $R_k$ is the following: every $j \in \{1,\dots,k-1\}$ is treated as a potential break point in the sequence and the maximum in~\eqref{eq:Rk} implies that $R_k$ will be large as soon as the difference between $F_{1:j}$ and $F_{j+1:k}$ is large for some $j$. The weighting $j (k - j) / k^{3/2}$  ensures that $R_k$ converges in distribution under stationarity as $k \to \infty$. As explained for instance in \cite{CsoSzy94b} in the case of independent observations, replacing for instance these weights simply by $\sqrt{k}$ would result in a statistic that diverges in probability to $\infty$ under stationarity. The part $j (k - j)$ in the weighting favors however the detection of potential break points in the middle of the sequence. Test statistics that are more sensitive to changes at the beginning or at the end of the sequence but still converge in distribution under stationarity can be obtained by considering weights of the form $j (k - j) / \{ k^{3/2} q(j/k) \}$ for some suitable strictly positive function $q$ on $(0,1)$; see, e.g., \citet{CsoSzy94,CsoSzy94b}, \cite{CsoHorSzy97} and \cite{CsoHor97}.


Going back to the setting of sequential change-point detection considered in this work, a first meaningful modification of~\eqref{eq:Rk} is to restrict the maximum over $j$ to $j \in \{m,\dots,k-1\}$ since a change cannot occur at the beginning of the sequence given that $\bm X_1, \dots, \bm X_m$ is the learning sample known to be a stretch from a stationary sequence. Another modification is a rescaling consisting of replacing $k^{3/2}$ by $m^{3/2}$ in the weighting. The latter is made solely for asymptotic reasons as shall become clear in Sections~\ref{sec:asym:det:H0} and~\ref{sec:asym:det:H1}. These modifications essentially lead to the first detectors considered in this work:
\begin{equation}
  \label{eq:Rmq}
  R_{m,q}(k) =  \max_{m \leq j \leq k-1} \frac{j (k-j)}{m^{3/2} q(j/m,k/m)}  \sup_{\bm x \in \R^d} | F_{1:j}(\bm x) - F_{j+1:k}(\bm x) |, \qquad k \in \{m+1,\dots,n\}.
\end{equation}
In the previous display, $q$ is a strictly positive function whose role is to potentially give more weight to recent observations. In the sequel, we consider the parametric form
\begin{equation}
  \label{eq:q}
q(s,t) = \max\{ s^\gamma (t-s)^\gamma, \delta \}, \qquad 0 \leq s \leq t,
\end{equation}
where $\delta \in (0,1)$ is a small constant and $\gamma$ is a parameter in $[0,1/2]$. If $\gamma = 0$, $q$ is the constant function $1$ and $R_{m,q}(k)$ is then a straightforward adaptation of $R_k$ in~\eqref{eq:Rk} to sequential change-point detection. In that case, the general form of $R_{m,q}(k)$ can also be heuristically justified through a likelihood ratio approach; see Section~2 in \cite{DetGos19}. When $\gamma = 1/2$, as shall be discussed in Remark~\ref{rem:q} using asymptotic arguments, $R_{m,q}(k)$ can be regarded, under $H_0$ in~\eqref{eq:H0}, as a maximum of random variables with, approximately, the same mean and variance. This heuristically implies that all the potential break points $j \in \{m,\dots,k-1\}$ are given roughly the same weight in the computation of $R_{m,q}(k)$ unlike in the case $\gamma = 0$ in which potential break points closest to $\ip{k/2}$ are given more weight. Hence, for certain types of alternatives to $H_0$ in~\eqref{eq:H0}, choosing $\gamma \in (0,1/2]$ might accelerate the detection of the corresponding change in the contemporary distribution of the underlying time series. Examples of such alternatives will be given in Section~\ref{sec:MC} in which the results of numerous Monte Carlo experiments are summarized.

Two similar Cramér--von Mises-like detectors are also considered in this work. They are defined, for $k \in \{m+1,\dots,n\}$, by
\begin{equation}
  \label{eq:Smq}
  \begin{split}
    S_{m,q}(k) &= \max_{m \leq j \leq k-1} \int_{\R^d} \left[ \frac{j (k-j)}{m^{3/2} q(j/m,k/m)}  \big\{ F_{1:j}(\bm x) - F_{j+1:k}(\bm x) \big\} \right]^2 \dd F_{1:k}(\bm x) \\
    &= \max_{m \leq j \leq k-1} \frac{1}{k} \sum_{i=1}^k \left[ \frac{j (k-j)}{m^{3/2} q(j/m,k/m)}  \big\{ F_{1:j}(\bm X_i) - F_{j+1:k}(\bm X_i) \big\} \right]^2,
  \end{split}
\end{equation}
and
\begin{align}
  \label{eq:Tmq}
  T_{m,q}(k)  &= \frac{1}{m} \sum_{j=m}^{k-1} \frac{1}{k} \sum_{i=1}^k \left[ \frac{j (k-j)}{m^{3/2} q(j/m,k/m)} \big \{ F_{1:j}(\bm X_i) - F_{j+1:k}(\bm X_i) \big\} \right]^2.
\end{align}

\begin{remark}
  The detectors in~\eqref{eq:Rmq} and~\eqref{eq:Smq} are related to those used in \citet[pp 104--106]{RosAda12}. The latter are also based on differences of empirical d.f.s but deal only with independent univariate observations. The analogue of~\eqref{eq:Rmq} is apparently defined as a maximum of the quantities $\sup_{\bm x \in \R} | F_{1:j}(\bm x) - F_{j+1:k}(\bm x) |$, $j \in \{m,\dots,k-1\}$, previously normalized using an empirical probability integral transformation, probably based on simulations. The analogue of~\eqref{eq:Smq} takes the form of a maximum of the quantities $\sum_{i=1}^k \{ F_{1:j}(\bm X_i) - F_{j+1:k}(\bm X_i) \}^2$, $j \in \{m,\dots,k-1\}$, after centering and scaling. The asymptotics of these detectors were not studied. Given that the detectors of \cite{RosAda12} are distribution-free (see also Section~\ref{sec:margin:free:H0} hereafter), that their approach assumes serially independent observations and is based on simulations, the absence of asymptotic theory is not problematic.
\end{remark}

\begin{remark}
The detectors in~\eqref{eq:Rmq},~\eqref{eq:Smq} and~\eqref{eq:Tmq} are almost of the Page-CUSUM type considered initially in \cite{Fre15}; see also \cite{KirWeb18}. For instance, in the case of~\eqref{eq:Rmq}, the adaption of the latter construction to the present setting would have instead involved a maximum of the quantities $\sup_{\bm x \in \R^d} | F_{1:m}(\bm x) - F_{j+1:k}(\bm x) |$, $j \in \{m,\dots,k-1\}$. As explained in \cite{DetGos19}, the use of such detectors may result in a loss of power in the case of a small learning sample and a rather late change point. In the Monte Carlo experiments carried out in \cite{DetGos19}, Page-CUSUM detectors were always outperformed by their analogues of type~\eqref{eq:Rmq}, which is why we do not consider them in this work.
\end{remark}

\begin{remark}
In addition to the detectors~\eqref{eq:Rmq},~\eqref{eq:Smq} and~\eqref{eq:Tmq}, we also considered in our Monte Carlo experiments the following natural competitors which are straightforward adaptions of the so-called CUSUM construction considered for instance in \cite{HorHusKokSte04} and \cite{AueHorHorHusSte12}. Their Kolmogorov--Smirnov versions and Cramér--von Mises versions are respectively given, for any $k \in \{m+1,\dots,n\}$, by
\begin{align}
  \label{eq:Pm}
  P_m(k) =& \frac{m (k-m)}{m^{3/2}}  \sup_{\bm x \in \R^d} | F_{1:m}(\bm x) - F_{m+1:k}(\bm x) |, \\
  \label{eq:Qm}
  Q_m(k) =&   \frac{1}{k} \sum_{i=1}^k \left[ \frac{m (k-m)}{m^{3/2}}  \big\{F_{1:m}(\bm X_i) - F_{m+1:k}(\bm X_i) \big\} \right]^2.
\end{align}
The asymptotic theory for these detectors being simpler than for the detectors~\eqref{eq:Rmq},~\eqref{eq:Smq} and~\eqref{eq:Tmq},  it will not be stated in the forthcoming sections for the sake of readability.
\end{remark}

\subsection{The detectors are margin-free under the null}
\label{sec:margin:free:H0}

The detectors defined previously are actually margin-free under $H_0$ in~\eqref{eq:H0}, a property that shall be exploited in the forthcoming sections to carry out the corresponding sequential change-point detection procedures.

Recall that $\bm X_1,\dots,\bm X_n$ are assumed to be continuous random vectors. Saying that the detectors are margin-free under the null means that they do not depend on the $d$ univariate margins $F_1,\dots,F_d$ of $F$ (the unknown d.f.\ of $\bm X_1$) or, equivalently, that they can alternatively be written in terms of the unobservable random vectors $\bm U_1,\dots,\bm U_n$ defined from $\bm X_1,\dots,\bm X_n$ through marginal probability integral transformations:
\begin{equation}
  \label{eq:Ui}
  \bm U_i = (F_1(X_{i1}),\dots,F_d(X_{id})).
\end{equation}
Notice that we can recover the $\bm X_i$ from the $\bm U_i$ by marginal quantile transformations:
\begin{equation}
  \label{eq:Xi}
  \bm X_i = (F_1^{-}(U_{i1}),\dots,F_d^{-}(U_{id})),
\end{equation}
where, for any univariate d.f.\ $G$, $G^{-1}$ denotes its associated quantile function defined by
\begin{equation}
  \label{eq:quant:func}
  G^{-1}(y) = \inf \{x \in \R : G(x) \geq y\}, \qquad y \in [0,1],
\end{equation}
with the convention that the infimum of the empty set is $\infty$.

To verify that the detectors are margin-free under the null, for any integers $j,k \geq 1$ and $\bm u \in [0,1]^d$, let
\begin{equation}
  \label{eq:Cjk}
  C_{j:k}(\bm u) =
  \left\{
    \begin{array}{ll}
      \disp \frac{1}{k-j+1}\sum_{i=j}^k \1(\bm U_i \leq \bm u),  \qquad &\text{if } j \leq k, \\
      0, \qquad &\text{otherwise},
    \end{array}
  \right.
\end{equation}
be the analogue of $F_{j:k}$ in~\eqref{eq:Fjk} based on the $\bm U_i$ in~\eqref{eq:Ui}. For any $j \in \{1,\dots,d\}$, by (right) continuity of $F_j$, we have that $\1 \{ F_j^-(u) \leq x \} = \1 \{ u \leq F_j(x) \}$ for all $u \in [0,1]$ and $x \in \R$; see, e.g., Proposition~1~(5) in \cite{EmbHof13}. The latter property combined with~\eqref{eq:Xi} implies that, under $H_0$ in~\eqref{eq:H0}, for any $i \in \{1,\dots,n\}$,
\begin{equation}
  \label{eq:1Xi:1Ui}
  \1 ( \bm X_i \leq \bm x ) = \1 \{ \bm U_i \leq \bm F(\bm x) \}, \qquad \bm x \in \R^d,
\end{equation}
where $\bm F(\bm x) = (F_1(x_1),\dots,F_d(x_d))$. Hence, for any $k \in \{m+1,\dots,n\}$,
\begin{align*}
  R_{m,q}(k) &=  \max_{m \leq j \leq k-1} \frac{j (k-j)}{m^{3/2} q(j/m,k/m)}  \sup_{\bm x \in \R^d} | C_{1:j}\{\bm F(\bm x)\} - C_{j+1:k}\{\bm F(\bm x)\} |, \\
             &=  \max_{m \leq j \leq k-1} \frac{j (k-j)}{m^{3/2} q(j/m,k/m)}  \sup_{\bm u \in [0,1]^d} | C_{1:j}(\bm u) - C_{j+1:k}(\bm u) |
\end{align*}
using again the continuity of $F_1,\dots,F_d$. Similarly,
\begin{align*}
  S_{m,q}(k) &= \max_{m \leq j \leq k-1} \frac{1}{k} \sum_{i=1}^k \left[ \frac{j (k-j)}{m^{3/2} q(j/m,k/m)}  \{ C_{1:j}(\bm U_i) - C_{j+1:k}(\bm U_i) \} \right]^2, \\
  T_{m,q}(k) &= \frac{1}{m} \sum_{j=m}^{k-1} \frac{1}{k} \sum_{i=1}^k \left[ \frac{j (k-j)}{m^{3/2} q(j/m,k/m)}  \{ C_{1:j}(\bm U_i) - C_{j+1:k}(\bm U_i) \} \right]^2.
\end{align*}

In the case of univariate independent observations, the margin-free property under the null implies that the detectors are distribution-free under the null. When $d>1$, this is not true anymore as the null distribution of the detectors depends on the \emph{copula} $C$ associated with~$F$. The latter is merely the d.f.\ of the random vector $\bm U_1$ obtained through~\eqref{eq:Ui}. Equivalently, $C$ is a $d$-dimensional d.f.\ with standard uniform margins further uniquely defined \citep[see][]{Skl59} through the relationships
\begin{equation*}
  \label{eq:F:C}
F(\bm x) = C\{F_1(x_1),\dots,F_d(x_d) \}, \qquad \bm x \in \R^d,
\end{equation*}
and
\begin{equation*}
  \label{eq:C:F}
C(\bm u) = F\{F_1^-(u_1),\dots,F_d^-(u_d) \}, \qquad \bm u \in [0,1]^d.
\end{equation*}

\begin{remark}
To be able to handle both the univariate and the multivariate situations, in the rest of the paper, we adopt the convention that $C$ is the copula associated with $F$ when $d > 1$ and merely the identity function when $d=1$.
\end{remark}

\subsection{Asymptotics of the detectors under the null}
\label{sec:asym:det:H0}

As shall become clear in the forthcoming sections, the knowledge of the asymptotic behavior of the detectors under $H_0$ in~\eqref{eq:H0} is instrumental in showing the asymptotic validity of the corresponding sequential change-point detection procedures. To study these asymptotics, we follow \cite{DetGos19}, among others, and set $n=\ip{m(T+1)}$ for some fixed real number $T > 0$. This will imply that, in the asymptotics, as the size $m$ of the learning sample goes to infinity, the maximum number of new observations considered in the monitoring increases proportionally.

Let $\Delta = \{(s,t) \in [0,T+1]^2 : s \leq t \}$ and let
\begin{equation}
  \label{eq:lambda}
  \lambda_m(s,t) = (\ip{mt} - \ip{ms})/m, \qquad (s, t) \in \Delta.
\end{equation}
Then, for any $(s, t) \in \Delta$ and $\bm u \in [0,1]^d$, let
\begin{equation}
  \label{eq:Gbm}
\Gb_m(s,t,\bm u) = \sqrt{m} \lambda_m(0,s) \lambda_m(s,t) \{ C_{1:\ip{ms}}(\bm u) - C_{\ip{ms}+1:\ip{mt}}(\bm u)\}
\end{equation}
where $C_{1:\ip{ms}}$ and $C_{\ip{ms}+1:\ip{mt}}$ are generically defined by~\eqref{eq:Cjk}, and let
\begin{equation}
  \label{eq:Gbmq}
\Gb_{m,q}(s,t,\bm u) = \frac{\Gb_m(s,t,\bm u)}{q \{ \lambda_m(0,s),\lambda_m(0,t) \}},
\end{equation}
where $q$ is defined in~\eqref{eq:q}. Notice that, with the definitions adopted thus far, $\Gb_m(s,s,\cdot) = \Gb_{m,q}(s,s,\cdot) = 0$ for all $s \in [0,T+1]$.

For any $k \in \{m+1,\dots,n\}$ with $n = \ip{m(T+1)}$ and any $j \in \{m,\dots,k-1\}$, there exists $(s,t) \in  \Delta \cap [1,T+1]^2$ such that $k = \ip{mt}$ and $j = \ip{ms}$. We can thus write $R_{m,q}(k)$ as
\begin{equation}
  \label{eq:Rmq:Gmq}
  \begin{split}
    R_{m,q}(\ip{mt}) &= \max_{m \leq j \leq k-1} \frac{j (k-j)}{m^{3/2} q(j/m,k/m)}   \sup_{\bm u \in [0,1]^d} | C_{1:j}(\bm u) - C_{j+1:k}(\bm u) | \\
    &= \sup_{s \in [1,t]} \sup_{\bm u \in [0,1]^d} \frac{\sqrt{m} \lambda_m(0,s) \lambda_m(s,t) | C_{1:\ip{ms}}(\bm u) - C_{\ip{ms}+1:\ip{mt}}(\bm u) |}{q\{ \lambda_m(0,s),\lambda_m(0,t) \}}  \\
    &= \sup_{s \in [1,t]} \sup_{\bm u \in [0,1]^d} |\Gb_{m,q}(s,t,\bm u)|.
  \end{split}
\end{equation}
Similarly, it can be verified that
\begin{align}
  \label{eq:Smq:Gmq}
  S_{m,q}(k) = S_{m,q}(\ip{mt}) &= \sup_{s \in [1,t]} \int_{[0,1]^d} \{  \Gb_{m,q}(s,t,\bm u) \}^2 \dd C_{1:\ip{mt}}(\bm u), \\
  \label{eq:Tmq:Gmq}
  T_{m,q}(k) = T_{m,q}(\ip{mt}) &= \int_1^t \int_{[0,1]^d} \{  \Gb_{m,q}(s,t,\bm u) \}^2 \dd C_{1:\ip{mt}}(\bm u) \dd s.
\end{align}

As we continue, we adopt the convention that $R_{m,q}(m) =  S_{m,q}(m) =  T_{m,q}(m) = 0$. Furthermore, given a set $\Sc$, the space of all bounded real-valued functions on $\Sc$ equipped with the uniform metric is denoted by $\ell^\infty(\Sc)$. The main purpose of this section is to study the asymptotics under the null of the elements $\Rb_{m,q}$, $\Sb_{m,q}$ and $\Tb_{m,q}$ of $\ell^\infty([1,T+1])$ defined respectively, for any $t \in [1,T+1]$, by
\begin{equation}
  \label{eq:det:as:func}
  \Rb_{m,q}(t) = R_{m,q}(\ip{mt}), \qquad \Sb_{m,q}(t) = S_{m,q}(\ip{mt}), \qquad \Tb_{m,q}(t) = T_{m,q}(\ip{mt}).
\end{equation}
Specifically, we will provide conditions under which they converge weakly in the sense of Definition~1.3.3 in \cite{vanWel96} under $H_0$ in~\eqref{eq:H0}. Throughout the paper, this mode of convergence will be denoted by the arrow~`$\leadsto$' and all convergences will be for $m \to \infty$ unless mentioned otherwise.

From the expressions given in~\eqref{eq:Rmq:Gmq},~\eqref{eq:Smq:Gmq} and~\eqref{eq:Tmq:Gmq}, we see that, under the null, the detectors studied in this work are functionals of $\Gb_{m,q}$ in~\eqref{eq:Gbmq}, and thus of $\Gb_m$ in~\eqref{eq:Gbm}. Under stationarity, the latter is in turn a functional of the \emph{sequential empirical process} defined, for any $s \in [0,T+1]$ and $\bm u \in [0,1]^d$, by
\begin{equation}
  \label{eq:Bbm}
  \begin{split}
    \Bb_m(s,\bm u) &= \frac{1}{\sqrt{m}} \sum_{i=1}^{\ip{ms}} \{ \1(\bm U_i \leq \bm u) - C(\bm u) \} = \sqrt{m} \lambda_m(0,s) \{ C_{1:\ip{ms}}(\bm u) - C(\bm u)\}.
  \end{split}
  \end{equation}
Indeed, under $H_0$ in~\eqref{eq:H0}, for any $(s, t) \in \Delta$ and $\bm u \in [0,1]^d$,
\begin{equation}
  \label{eq:GbmH0}
  \begin{split}
    \Gb_m(s,t,\bm u) &= \sqrt{m} \lambda_m(0,s) \lambda_m(s,t) \{ C_{1:\ip{ms}}(\bm u) - C(\bm u) - C_{\ip{ms}+1:\ip{mt}}(\bm u) + C(\bm u)\} \\
    &= \lambda_m(s,t) \Bb_m(s, \bm u) - \lambda_m(0,s) \times \sqrt{m} \lambda_m(s,t) \{  C_{\ip{ms}+1:\ip{mt}}(\bm u) - C(\bm u) \} \\
    &= \{ \lambda_m(0,t) - \lambda_m(0,s) \} \Bb_m(s, \bm u) - \lambda_m(0,s) \{ \Bb_m(t, \bm u) - \Bb_m(s, \bm u) \} \\
    &= \lambda_m(0,t) \Bb_m(s, \bm u) - \lambda_m(0,s) \Bb_m(t, \bm u).
  \end{split}
\end{equation}

In the forthcoming asymptotic results, we shall assume that the underlying stationary sequence $(\bm X_i)_{i \in \Z}$ (or, equivalently, the corresponding unobservable stationary sequence $(\bm U_i)_{i \in \Z}$ defined through~\eqref{eq:Ui}) is \emph{strongly mixing}. Denote by $\Fc_j^k$ the $\sigma$-field generated by $(\bm X_i)_{j \leq i \leq k}$, $j, k \in \Z \cup \{-\infty,+\infty \}$, and recall that the strong mixing coefficients corresponding to the stationary sequence $(\bm X_i)_{i \in \Z}$ are then defined by $\alpha_0^{\bm X} = 1/2$,
\begin{equation*}
\alpha_r^{\bm X} = \sup_{A \in \Fc_{-\infty}^0,B\in \Fc_{r}^{+\infty}} \big| \Pr(A \cap B) - \Pr(A) \Pr(B) \big|, \qquad r \in \N, \, r > 0,
\end{equation*}
and that the sequence $(\bm X_i)_{i \in \Z}$ is said to be \emph{strongly mixing} if $\alpha_r^{\bm X} \to 0$ as $r \to \infty$. The following result is proven in Appendix~\ref{sec:proof:asym:det}.

\begin{prop}
  \label{prop:H0}
  Assume that $H_0$ in~\eqref{eq:H0} holds and that, additionally, $\bm X_1,\dots,\bm X_n$ is a stretch from a stationary sequence $(\bm X_i)_{i \in \Z}$ of continuous $d$-dimensional random vectors whose strong mixing coefficients satisfy $\alpha_r^{\bm X} = O(r^{-a})$ for some $a > 1$ as $r \to \infty$. Then, $\Bb_m \leadsto \Bb_C$ in $\ell^\infty([0,T+1] \times [0,1]^d)$, where $\Bb_C$ is a tight centered Gaussian process with covariance function
$$
\Cov\{\Bb_C(s,\bm u), \Bb_C(t,\bm v)\} = (s \wedge t) \Gamma(\bm u, \bm v), \qquad s,t \in [0,T+1], \bm u, \bm v \in [0,1]^d,
$$
with $\wedge$ the minimum operator and
\begin{equation}
  \label{eq:Gamma}
\Gamma(\bm u, \bm v) = \sum_{i \in \Z} \Cov\{ \1(\bm U_0 \leq \bm u), \1(\bm U_i \leq \bm v) \}.
\end{equation}
As a consequence, $\Gb_m \leadsto \Gb_C$ and $\Gb_{m,q} \leadsto \Gb_{C,q}$ in $\ell^\infty(\Delta \times [0,1]^d)$, where $\Gb_m$ and $\Gb_{m,q}$ are defined in~\eqref{eq:Gbm} and~\eqref{eq:Gbmq}, respectively, and, for any $(s, t) \in \Delta$ and $\bm u \in [0,1]^d$,
\begin{equation}
  \label{eq:GbC}
  \Gb_C(s,t, \bm u) = t \Bb_C(s, \bm u) - s \Bb_C(t, \bm u) \qquad \text{and} \qquad \Gb_{C,q}(s,t, \bm u) = \frac{\Gb_C(s,t,\bm u)}{q(s,t)}.
\end{equation}
It follows that $\Rb_{m,q} \leadsto \Rb_{C,q}$,  $\Sb_{m,q} \leadsto \Sb_{C,q}$ and  $\Tb_{m,q} \leadsto \Tb_{C,q}$ in $\ell^\infty([1,T+1])$, where
\begin{equation}
  \label{eq:weak:limits}
  \begin{split}
    \Rb_{C,q}(t) &= \sup_{s \in [1,t]} \sup_{\bm u \in [0,1]^d} |\Gb_{C,q}(s,t,\bm u)|, \\
    \Sb_{C,q}(t) &= \sup_{s \in [1,t]} \int_{[0,1]^d} \{  \Gb_{C,q}(s,t,\bm u) \}^2 \dd C(\bm u), \\
    \Tb_{C,q}(t) &= \int_1^t \int_{[0,1]^d} \{  \Gb_{C,q}(s,t,\bm u) \}^2 \dd C(\bm u) \dd s,  \qquad t \in [1,T+1].
\end{split}
\end{equation}
Furthermore, for any interval $[t_1,t_2] \subset [1,T+1]$ such that $t_2 > 1$, the distributions of $\sup_{t \in [t_1,t_2]}\Rb_{C,q}(t)$, $\sup_{t \in [t_1,t_2]}\Sb_{C,q}(t)$ and $\sup_{t \in [t_1,t_2]}\Tb_{C,q}(t)$ are absolutely continuous with respect to the Lebesgue measure.
\end{prop}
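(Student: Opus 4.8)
The plan is to use the weak convergences already established — $\Gb_{m,q}\leadsto\Gb_{C,q}$ together with $\Rb_{m,q}\leadsto\Rb_{C,q}$, $\Sb_{m,q}\leadsto\Sb_{C,q}$, $\Tb_{m,q}\leadsto\Tb_{C,q}$ — to identify each of the three random variables with a power of a continuous seminorm of the tight centred Gaussian random element $\Gb_{C,q}$, and then to invoke the classical density theorem for seminorms (more generally, convex functionals) of Gaussian measures.

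First I would fix the set-up. Since $q\ge\delta>0$ on $\Delta$, the element $\Gb_{C,q}=\Gb_C/q$ is a genuine element of $\ell^\infty(\Delta\times[0,1]^d)$, and being a bounded linear image of $\Bb_C$ it is a tight centred Gaussian random element there; by tightness it may be realised with sample paths in a separable closed subspace $E$ of $C$-measurable functions on which the integrals in \eqref{eq:weak:limits} make sense (the limiting paths inheriting measurability in $\bm u$ from the empirical distribution functions out of which the process $\Gb_{m,q}$ is built). With $A=\{(s,t,\bm u): 1\le s\le t,\ t_1\le t\le t_2,\ \bm u\in[0,1]^d\}$, reading off \eqref{eq:weak:limits} and using that $\sqrt{\cdot}$ commutes with suprema gives $\sup_{t\in[t_1,t_2]}\Rb_{C,q}(t)=p_R(\Gb_{C,q})$, $\sup_{t\in[t_1,t_2]}\Sb_{C,q}(t)=p_S(\Gb_{C,q})^2$ and $\sup_{t\in[t_1,t_2]}\Tb_{C,q}(t)=p_T(\Gb_{C,q})^2$, where $p_R(\psi)=\sup_{(s,t,\bm u)\in A}|\psi(s,t,\bm u)|$, $p_S(\psi)=\sup_{t\in[t_1,t_2]}\sup_{s\in[1,t]}\{\int_{[0,1]^d}\psi(s,t,\bm u)^2\,\dd C(\bm u)\}^{1/2}$ and $p_T(\psi)=\sup_{t\in[t_1,t_2]}\{\int_1^t\int_{[0,1]^d}\psi(s,t,\bm u)^2\,\dd C(\bm u)\,\dd s\}^{1/2}$. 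Each $p_\bullet$ is a supremum of seminorms of $\psi$ — a restricted sup-norm, resp.\ an $L^2(C)$-norm, resp.\ an $L^2(\dd s\otimes\dd C)$-norm of a slice — hence a seminorm on $E$; each is Lipschitz on $E$ up to a constant ($C$ being a probability measure and $t_2\le T+1$), so continuous, and therefore $p_\bullet(\Gb_{C,q})$ is a bona fide random variable, a.s.\ finite by tightness of the limits in \eqref{eq:weak:limits}.

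Next I would rule out degeneracy and invoke the Gaussian result. Since $t_2>1$, the set $A$ contains the triples $(1,t_2,\bm u)$; from \eqref{eq:GbC} and the covariance of $\Bb_C$ one computes $\Var\{\Gb_{C,q}(1,t_2,\bm u)\}=t_2(t_2-1)\,q(1,t_2)^{-2}\,\Gamma(\bm u,\bm u)$, which is strictly positive for some $\bm u\in(0,1)^d$ by nondegeneracy of the long-run variance $\Gamma$; hence none of $p_R(\Gb_{C,q}),p_S(\Gb_{C,q}),p_T(\Gb_{C,q})$ is a.s.\ zero. I would then invoke the classical fact — Tsirelson's theorem on the density of suprema of Gaussian processes, the convex-functional versions due to Davydov, Lifshits and Smorodina, or the seminorm statement in Bogachev's monograph on Gaussian measures — that the law of a continuous seminorm of a Radon centred Gaussian random element which does not vanish a.s.\ admits a strictly positive density on $(0,\infty)$ and has no atom at $0$ (the Gaussian zero--one law already forcing $\Pr\{p_\bullet(\Gb_{C,q})=0\}\in\{0,1\}$). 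This yields absolute continuity of $\sup_{t\in[t_1,t_2]}\Rb_{C,q}(t)=p_R(\Gb_{C,q})$ at once; and, since $p_S(\Gb_{C,q})$ and $p_T(\Gb_{C,q})$ are then absolutely continuous on $(0,\infty)$ with no atom at $0$, absolute continuity of their squares $\sup_{t\in[t_1,t_2]}\Sb_{C,q}(t)$ and $\sup_{t\in[t_1,t_2]}\Tb_{C,q}(t)$ follows because $x\mapsto x^2$ is a $C^1$-diffeomorphism of $(0,\infty)$.

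The step I expect to be the main obstacle is the last one: quoting the Gaussian density theorem in exactly the generality needed — for a process indexed by the non-separable set $\Delta\times[0,1]^d$, which forces the passage to the separable realisation $E$, and for the genuinely nonlinear functionals behind $\Sb_{C,q}$ and $\Tb_{C,q}$ rather than for a plain supremum — together with the verification that these functionals are bona fide continuous seminorms of a Radon Gaussian element and that the relevant one is nondegenerate, which is precisely where the hypothesis $t_2>1$ enters. The remaining bookkeeping — the seminorm inequalities, measurability, and the change of variables $x\mapsto x^2$ — is routine.
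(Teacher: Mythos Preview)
Your proposal addresses only the final absolute-continuity claim, treating the weak convergences as given; for that claim, it is correct and essentially the same as the paper's proof. The paper likewise passes to the square-root functionals (its $\vartheta_1,\vartheta_2,\vartheta_3$ are your $p_R,p_S,p_T$), observes they are continuous and convex on $\Cc(\Delta\times[0,1]^d)$, applies Theorem~7.1 of Davydov--Lifshits (1984) for convex functionals of Gaussian measures, and rules out an atom at zero via exactly your variance computation $\Var\{\Gb_C(s^*,t^*,\bm u^*)\}=s^*t^*(t^*-s^*)\Gamma(\bm u^*,\bm u^*)>0$ at a point with $s^*<t^*$ and $\bm u^*$ in the support of $C$ --- the latter support condition being needed for $p_S$ and $p_T$, which you gloss over slightly.
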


Combined with a generic result on the threshold estimation procedure to be stated in Section~\ref{sec:thresh:generic} and additional bootstrap consistency results to be stated in Section~\ref{sec:thresh:estim}, the last claims of Proposition~\ref{prop:H0} constitute a first step in proving that the derived change-point detection procedures hold their level asymptotically.


\begin{remark}
  \label{rem:q}
  Proposition~\ref{prop:H0} can be used to heuristically justify the form of the weight function $q$ in~\eqref{eq:q} appearing in the expression of the detectors~\eqref{eq:Rmq},~\eqref{eq:Smq} and~\eqref{eq:Tmq}. From~\eqref{eq:GbC}, for any $(s,t) \in \Delta$ and $\bm u \in [0,1]^d$, we obtain that
\begin{align*}
\Var\{\Gb_C(s,t,\bm u) \} &= t^2 \Var\{ \Bb_C(s, \bm u) \} + s^2 \Var\{ \Bb_C(t, \bm u) \} - 2 s t \Cov\{ \Bb_C(s, \bm u), \Bb_C(t, \bm u) \} \\
  &= (s t^2  + s^2 t - 2 s^2 t) \Gamma(\bm u, \bm u) = st (t-s) \Gamma(\bm u, \bm u).
\end{align*}
As a consequence, for any $1 \leq s < t \leq T+1$ and $\bm u \in [0,1]^d$,
$$
\Var\{s^{-1/2}  (t-s)^{-1/2} \Gb_C(s,t,\bm u) \} =  t \Gamma(\bm u, \bm u).
$$
Under the conditions of the proposition, when $\gamma = 1/2$ and $m$ is large, we could then expect that, very roughly, $\Var \{ \Gb_{m,q}(s, t,\bm u) \} \approx \Var\{s^{-1/2}  (t-s)^{-1/2} \Gb_C(s,t,\bm u) \}$ does not depend on $s$ and thus regard the quantities $\sup_{\bm u \in [0,1]^d} \Gb_{m,q}(j/m, t,\bm u)$, $j \in \{m,\dots,\ip{mt}-1\}$, appearing in the expression of $R_{m,q}(\ip{mt})$ in~\eqref{eq:Rmq:Gmq} as random variables with, approximately, the same mean and variance. The latter conveys the intuition that, when $\gamma = 1/2$, all the potential break points $j \in \{m,\dots,\ip{mt}-1\}$ are given roughly the same weight in the computation of $R_{m,q}(\ip{mt})$. This is, of course, only approximately true because of the presence of the constant $\delta$ in the expression of the weight function $q$ in~\eqref{eq:q}. In practice, the setting $\gamma = 1/2$  might accelerate the detection of certain types of changes.
\end{remark}

\subsection{Asymptotics of the detectors under alternatives related to $H_1$}
\label{sec:asym:det:H1}

Under alternatives related to $H_1$ in~\eqref{eq:H1}, the detectors are not margin-free anymore. As we shall see in the forthcoming proposition, their asymptotic behavior is then a consequence of that of the process
\begin{equation*}
\Hb_{m,q}(s,t,\bm x) = \frac{\Hb_m(s,t,\bm x)}{q \{ \lambda_m(0,s),\lambda_m(0,t) \}}, \qquad (s,t) \in \Delta, \bm x \in \R^d,
\end{equation*}
where
\begin{equation}
  \label{eq:Hbm}
\Hb_m(s,t,\bm x) = \sqrt{m} \lambda_m(0,s) \lambda_m(s,t) \{ F_{1:\ip{ms}}(\bm x) - F_{\ip{ms}+1:\ip{mt}}(\bm x)\},
\end{equation}
the empirical d.f.s $F_{1:\ip{ms}}$ and $F_{\ip{ms}+1:\ip{mt}}$ are generically defined by~\eqref{eq:Fjk}, and $q$ is defined in~\eqref{eq:q}.

The following result is proven in Appendix~\ref{sec:proof:asym:det}. The arrow~`$\p$' in its statement denotes convergence in probability.

\begin{prop}
\label{prop:H1}
Assume that, for some $c \in [1,T+1)$, $\bm X_1,\dots, \bm X_{\ip{mc}}$, denoted equivalently by $\bm Y_1,\dots, \bm Y_{\ip{mc}}$, is a stretch from a stationary sequence $(\bm Y_i)_{i \in \Z}$ of continuous $d$-dimensional random vectors with contemporary d.f.\ $F$ whose strong mixing coefficients satisfy $\alpha_r^{\bm Y} = O(r^{-a})$ for some $a > 1$ as $r \to \infty$, and that $\bm X_{\ip{mc}+1},\dots, \bm X_{\ip{m(T+1)}}$, denoted equivalently by $\bm Z_{\ip{mc}+1},\dots, \bm Z_{\ip{m(T+1)}}$, is a stretch from a stationary sequence $(\bm Z_i)_{i \in \Z}$ of continuous $d$-dimensional random vectors with contemporary d.f.\ $G \neq F$ whose strong mixing coefficients satisfy $\alpha_r^{\bm Z} = O(r^{-b})$ for some $b > 1$ as $r \to \infty$. Then, $m^{-1/2} \Hb_m \p K_c$ in $\ell^\infty(\Delta \times \R^d)$, where 
\begin{equation}
  \label{eq:Kc}
  K_c(s,t,\bm x) = (s \wedge c) \{(t \vee c) - (s \vee c) \} \{F(\bm x) - G(\bm x)\}, \qquad (s,t) \in \Delta, \bm x \in \R^d.
\end{equation}
Consequently, $m^{-1/2} \Hb_{m,q} \p K_{c,q}$ in $\ell^\infty(\Delta \times \R^d)$, where $K_{c,q}(s,t,\bm x) = K_c(s,t,\bm x) / q(s,t)$, $(s,t) \in \Delta$, $\bm x \in \R^d$, and
$$
\sup_{t \in [1,T+1]} \Rb_{m,q}(t) \p \infty, \quad \sup_{t \in [1,T+1]} \Sb_{m,q}(t) \p \infty \quad \text{ and } \quad \sup_{t \in [1,T+1]} \Tb_{m,q}(t) \p \infty.
$$
\end{prop}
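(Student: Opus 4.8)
The plan is to establish the three assertions in sequence, the first of which does the real work. Write, for $(s,t)\in\Delta$ and $\bm x\in\R^d$, $A_m(s,\bm x)=\tfrac1m\sum_{i=1}^{\ip{ms}}\1(\bm X_i\le\bm x)=\lambda_m(0,s)F_{1:\ip{ms}}(\bm x)$, so that, using $\sqrt m\,\lambda_m(0,s)\lambda_m(s,t)=\ip{ms}(\ip{mt}-\ip{ms})/m^{3/2}$,
\[
m^{-1/2}\Hb_m(s,t,\bm x)=\lambda_m(s,t)\,A_m(s,\bm x)-\lambda_m(0,s)\,\{A_m(t,\bm x)-A_m(s,\bm x)\}.
\]
The key step is the uniform law of large numbers $\sup_{s\in[0,T+1],\,\bm x\in\R^d}|A_m(s,\bm x)-\mu_c(s,\bm x)|\p 0$, where $\mu_c(s,\bm x)=(s\wedge c)F(\bm x)+\{(s-c)\vee0\}\,G(\bm x)$. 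This follows by splitting $A_m(s,\bm x)$ at the index $\ip{mc}$: the block of $\bm Y_i$'s equals $\tfrac1m\sum_{i=1}^{\ip{ms}\wedge\ip{mc}}\{\1(\bm Y_i\le\bm x)-F(\bm x)\}+\lambda_m(0,s\wedge c)F(\bm x)$, whose first summand is $m^{-1/2}$ times the sequential empirical process of the strongly mixing sequence $(\bm Y_i)_{i\in\Z}$, a process that is stochastically bounded in the supremum norm under $\alpha^{\bm Y}_r=O(r^{-a})$, $a>1$ (the analogue, for general continuous margins, of the weak convergence $\Bb_m\leadsto\Bb_C$ in Proposition~\ref{prop:H0}), so that this summand is $\p 0$ after division by $\sqrt m$; the block of $\bm Z_i$'s is treated identically using $\alpha^{\bm Z}_r=O(r^{-b})$, $b>1$. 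Combined with $\lambda_m(0,s)\to s$ and $\lambda_m(s,t)\to t-s$ uniformly on $\Delta$ and the uniform boundedness of $\lambda_m$, $A_m$ and $\mu_c$, the display gives $m^{-1/2}\Hb_m(s,t,\bm x)\p t\,\mu_c(s,\bm x)-s\,\mu_c(t,\bm x)$ uniformly on $\Delta\times\R^d$, and a short computation identifies this limit with $K_c$ in~\eqref{eq:Kc}.

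The second assertion is then immediate: since $q\{\lambda_m(0,s),\lambda_m(0,t)\}\to q(s,t)$ uniformly on $\Delta$, both being bounded below by $\delta>0$, dividing the convergence just obtained by $q\{\lambda_m(0,s),\lambda_m(0,t)\}$ yields $m^{-1/2}\Hb_{m,q}\p K_c/q=K_{c,q}$ in $\ell^\infty(\Delta\times\R^d)$. For the divergence of the detectors, note that under the considered alternative they are no longer margin-free and must be read off~\eqref{eq:Rmq},~\eqref{eq:Smq} and~\eqref{eq:Tmq}: with $j$ ranging over $\{m,\dots,\ip{mt}-1\}$, $\Rb_{m,q}(t)=\max_j\sup_{\bm x\in\R^d}|\Hb_{m,q}(j/m,t,\bm x)|$, $\Sb_{m,q}(t)=\max_j\tfrac1{\ip{mt}}\sum_{i=1}^{\ip{mt}}\Hb_{m,q}(j/m,t,\bm X_i)^2$ and $\Tb_{m,q}(t)=\tfrac1m\sum_j\tfrac1{\ip{mt}}\sum_{i=1}^{\ip{mt}}\Hb_{m,q}(j/m,t,\bm X_i)^2$. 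Fix any $t_0\in(c,T+1]$; for $m$ large, $j=\ip{mc}$ is an admissible index with $\ip{mc}/m\to c$, and evaluating the limit from the first assertion at $s=c$ (using the continuity of $K_{c,q}$ in its first argument) gives $m^{-1/2}\Hb_{m,q}(\ip{mc}/m,t_0,\bm x)\to c(t_0-c)q(c,t_0)^{-1}\{F(\bm x)-G(\bm x)\}$ uniformly in $\bm x$. Hence $m^{-1/2}\sup_{t}\Rb_{m,q}(t)\ge m^{-1/2}\sup_{\bm x}|\Hb_{m,q}(\ip{mc}/m,t_0,\bm x)|\p c(t_0-c)q(c,t_0)^{-1}\sup_{\bm x}|F(\bm x)-G(\bm x)|>0$, the limit being positive because $F\neq G$; multiplying by $\sqrt m\to\infty$ gives $\sup_t\Rb_{m,q}(t)\p\infty$.

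For the two Cram\'er--von Mises detectors, combine the same uniform convergence with the ergodic theorem applied to the $\bm Y$- and $\bm Z$-blocks, namely $\tfrac1{\ip{mt_0}}\sum_{i=1}^{\ip{mt_0}}\{F(\bm X_i)-G(\bm X_i)\}^2\p\tfrac{c}{t_0}\int(F-G)^2\,\dd F+\tfrac{t_0-c}{t_0}\int(F-G)^2\,\dd G$, to deduce that
\[
m^{-1}\Sb_{m,q}(t_0)\ \ge\ \tfrac1{\ip{mt_0}}\sum_{i=1}^{\ip{mt_0}}\bigl\{m^{-1/2}\Hb_{m,q}(\ip{mc}/m,t_0,\bm X_i)\bigr\}^2\ \p\ \frac{c^2(t_0-c)^2}{q(c,t_0)^2}\Bigl(\tfrac{c}{t_0}\!\int\!(F-G)^2\,\dd F+\tfrac{t_0-c}{t_0}\!\int\!(F-G)^2\,\dd G\Bigr),
\]
which is strictly positive because $F\neq G$ precludes $\int(F-G)^2\,\dd F=\int(F-G)^2\,\dd G=0$; the detector $\Tb_{m,q}(t_0)$ is bounded below in the same way by an integral over $s$ of the corresponding non-negative quantity, which is again positive for the same reason. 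In each case $m^{-1}\sup_t$ of the detector is bounded below in probability by a positive constant, so multiplying by $m\to\infty$ yields $\sup_t\Sb_{m,q}(t)\p\infty$ and $\sup_t\Tb_{m,q}(t)\p\infty$.

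The main obstacle is the uniformity required in the first step: one must control, uniformly over $s$ and $\bm x$, the supremum norms of the sequential empirical processes built from $(\bm Y_i)$ and $(\bm Z_i)$ under merely polynomial strong-mixing rates, and justify passing from the grid $\{j/m\}$ to the continuum in the time variable --- both can be done with the tools behind Proposition~\ref{prop:H0}, but now for non-uniform margins. The only other point needing a (short) argument is the elementary fact that two distinct continuous d.f.s $F\neq G$ on $\R^d$ satisfy $\int(F-G)^2\,\dd F+\int(F-G)^2\,\dd G>0$, which guarantees that the limiting Cram\'er--von Mises functionals above are strictly positive; this follows from a support argument exploiting the marginal continuity of $F$ and $G$.
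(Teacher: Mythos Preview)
Your argument is correct and follows a somewhat different, more economical route than the paper's.

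For the first assertion, the paper partitions $\Delta$ into the three regions $\{s\le t\le c\}$, $\{s\le c\le t\}$, $\{c\le s\le t\}$, writes $K_c$ piecewise, and bounds $m^{-1/2}\Hb_m-K_c$ on each piece by $m^{-1/2}$ times suprema of sequential empirical processes that are $O_\Pr(1)$. Your approach instead proves a single uniform law of large numbers $A_m\to\mu_c$ and then reads $K_c$ off the algebraic identity $t\,\mu_c(s,\cdot)-s\,\mu_c(t,\cdot)=K_c(s,t,\cdot)$; this is cleaner and avoids the region split, though both rest on the same tightness of the $\bm Y$- and $\bm Z$-empirical processes.

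For the divergences, the paper goes further than you do: it proves full uniform convergence $m^{-1/2}\Rb_{m,q}\p L_{c,q}$ and $m^{-1}\Sb_{m,q}\p M_{c,q}$, $m^{-1}\Tb_{m,q}\p N_{c,q}$ in $\ell^\infty([1,T+1])$, which requires an $I_m+J_m$ decomposition and a uniform-in-$t$ SLLN argument for the empirical integrals against $\dd F_{1:\ip{mt}}$. Your lower-bound strategy, fixing one $t_0>c$ and (for $\Rb_{m,q}$, $\Sb_{m,q}$) the single index $j=\ip{mc}$, is enough for the stated conclusion and sidesteps this machinery. Note, however, that for $\Tb_{m,q}$ a single $j$ does not suffice (it contributes only $O(1/m)$ to the normalised sum); your phrase ``an integral over $s$'' must therefore be read as the full Riemann-sum argument $\frac{1}{m}\sum_j\phi_m(j/m)\to\int_1^{t_0}\phi(s)\,\dd s$, which works because $\phi_m(s)=\frac{1}{\ip{mt_0}}\sum_i\{m^{-1/2}\Hb_{m,q}(s,t_0,\bm X_i)\}^2$ converges uniformly in $s$ once you combine the uniform convergence of $m^{-1/2}\Hb_{m,q}$ with the ergodic-theorem limit of $\frac{1}{\ip{mt_0}}\sum_i\{F(\bm X_i)-G(\bm X_i)\}^2$. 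The paper makes the latter step explicit via the SLLN for strongly mixing sequences, so you may want to cite that rather than the bare ergodic theorem.
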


Combined with a generic result on the threshold estimation procedure to be stated in the forthcoming section and additional results on the asymptotic validity of adequate resampling methods to be stated in Section~\ref{sec:thresh:estim}, the three last claims of the previous result will be instrumental in showing that the derived change-point detection procedures have asymptotic power one under sequences of alternatives related to $H_1$.


\section{A generic threshold estimation procedure}
\label{sec:thresh:generic}

In the studied context, the second ingredient of a sequential change-point detection procedure is a set of strictly positive thresholds to which detectors will be compared. In this section, we consider a generic threshold estimation procedure that can be employed with any type of detector, and provide conditions under which it is asymptotically valid. The derived results will be applied in the next section to establish the asymptotic validity of sequential change-point detection procedures based on the detectors studied in Section~\ref{sec:detectors}.

\subsection{A constant probability of false alarm at each step} Within the context of closed-end monitoring from time $m+1$ to time $n$, let $D_m(k)$, $k \in \{m+1,\dots,n\}$, be arbitrary detectors. As discussed in the introduction, it seems natural to choose the corresponding thresholds $w_m(k)$, $k \in \{m+1,\dots,n\}$, so that, under $H_0$ in~\eqref{eq:H0}, the probability of rejection of $H_0$ is the same at every step $k \in \{m+1,\dots,n\}$ of the procedure. Adapting the reasoning of \citet{AnaKos18} to our context, the latter requirement consists of choosing the $w_m(k)$, $k \in \{m+1,\dots,n\}$, such that, under stationarity, for all $k \in \{m+1,\dots,n\}$,
\begin{equation}
  \label{eq:unif:FA}
\Pr \{ D_m(m+1) \leq w_m(m+1), \dots, D_m(k) \leq w_m(k)  \} = 1 - \frac{k-m}{n-m} \alpha,
\end{equation}
where $\alpha \in (0,1/2)$ is the desired significance level of the sequential testing procedure, or, equivalently, such that, under $H_0$, for all $k \in \{m+1,\dots,n\}$,
$$
\Pr \{ \exists \, i \in \{m+1,\dots,k\} \text{ s.t. } D_m(i) > w_m(i) \}  = \frac{k-m}{n-m} \alpha.
$$
Notice that the previous display translates mathematically our requirement that the probability of false alarm be proportional to the number of monitoring steps.

Interestingly enough, the previous way of choosing the thresholds $w_m(k)$, $k \in \{m+1,\dots,n\}$, is strongly related to an approach used in SPC and possibly first appearing in \cite{MarConWooDra95} \citep[see also, e.g.,][]{HawZam05,Ros14}. It consists of choosing the $w_m(k)$, $k \in \{m+1,\dots,n\}$, such that, under stationarity, for some small $\xi_m > 0$,
\begin{equation}
  \label{eq:thresh:proc}
  \begin{split}
\left\{
  \begin{array}{l}
    \Pr\{ D_m(m+1) > w_m(m+1) \} = \xi_m, \\
    \\
    \text{and, for all } k \in \{m+2,\dots,n\} , \\
    \\
    \Pr\{ D_m(k) > w_m(k) \mid D_m(m+1) \leq w_m(m+1), \dots, D_m(k-1) \leq w_m(k-1)
    \} = \xi_m.
  \end{array}
\right.
\end{split}
\end{equation}
We then obtain that, under $H_0$, for all $k \in \{m+2,\dots,n\}$,
\begin{equation}
  \label{eq:decomp:cond}
  \begin{split}
    \Pr \{ &D_m(m+1) \leq w_m(m+1), \dots, D_m(k) \leq w_m(k)  \} \\
    =& \Pr \{ D_m(k) \leq w_m(k) \mid D_m(m+1) \leq w_m(m+1), \dots, D_m(k-1) \leq w_m(k-1) \} \\ &\times \Pr \{ D_m(m+1) \leq w_m(m+1), \dots \dots, D_m(k-1) \leq w_m(k-1) \} \\
    =& (1-\xi_m) \times \Pr \{ D_m(k-1) \leq w_m(k-1) \mid D_m(m+1) \leq w_m(m+1), \dots \\
    &\dots, D_m(k-2) \leq w_m(k-2) \} \\
    &\times \Pr \{ D_m(m+1) \leq w_m(m+1), \dots, D_m(k-2) \leq w_m(k-2) \} \\
    =& \dots = (1-\xi_m)^{k-m}
  \end{split}
\end{equation}
with the convention that $D_m(m) = w_m(m) = 0$.

Given a desired significance level $\alpha \in (0,1/2)$, a simple way to ensure that~\eqref{eq:neg:typeI} holds under $H_0$ is to choose $\xi_m$ such that $1-\alpha = (1-\xi_m)^{n-m}$, that is, $\xi_m = 1 - (1-\alpha)^{1/(n-m)}$. As one can see from~\eqref{eq:thresh:proc}, $w_m(m+1)$ is then a quantile of order $ (1-\alpha)^{1/(n-m)}$ of $D_m(m+1)$ under stationarity and, for any $k \in \{m+2,\dots,n\}$, $w_m(k)$ is a quantile of order $ (1-\alpha)^{1/(n-m)}$ of $D_m(k)$ conditionally on $D_m(m+1) \leq w_m(m+1), \dots, D_m(k-1) \leq w_m(k-1)$ under stationarity. When $\alpha \in \{0.01, 0.05, 0.1\}$ as is typically the case, the first-order approximation $\xi_m \simeq 1 - (1- \frac{\alpha}{n-m}) =  \frac{\alpha}{n-m}$ turns out to be precise up to at least two decimals. Similarly, for all $k \in \{m+2,\dots,n\}$,
\begin{align*}
  \Pr \{ D_m(m+1) \leq w_m(m+1), \dots, D_m(k) \leq w_m(k)  \} &= (1-\xi_m)^{k-m} \\ &
                                                                                      \simeq 1 - (k - m) \xi_m \simeq 1 - \frac{k-m}{n-m} \alpha,
\end{align*}
where it can be verified that the resulting approximation is precise up to at least two decimals. In other words, for typical values of $\alpha$, choosing the thresholds $w_m(k)$, $k \in \{m+1,\dots,n\}$ such that~\eqref{eq:thresh:proc} holds with $\xi_m = 1 - (1-\alpha)^{1/(n-m)}$ is almost equivalent to choosing the thresholds such that~\eqref{eq:unif:FA} holds (some thought reveals that the latter equivalence could be made to hold exactly by allowing $\xi_m$ in~\eqref{eq:thresh:proc} to change with $k$).

Given the precision of the aforementioned first-order approximations for typical values of $\alpha$, for the sake of a simplicity, we shall base our threshold estimation procedure on~\eqref{eq:thresh:proc}. Before we discuss the estimation of the thresholds and its validity, let us give an alternative view of~\eqref{eq:thresh:proc}. In Sections~\ref{sec:asym:det:H0} and~\ref{sec:asym:det:H1} in which $n$ was taken equal to $\ip{m(T+1)}$, we saw that the asymptotic results for the detectors are given in terms of elements of $\ell^\infty([1,T+1])$. With the convention that $D_m(m) = w_m(m) = 0$, another equivalent way of looking at sequential change-point detection procedures of the considered type is then to consider that the piecewise constant \emph{detector function} $\Db_m$ defined by $\Db_m(t) = D_m(\ip{mt})$, $t \in [1,T+1]$, is compared to the piecewise constant \emph{threshold function} $\tau_m$ defined by $\tau_m(t) = w_m(\ip{mt})$, $t \in [1,T+1]$. Let $s_k = (m+k)/m$, $k \in \{0, \dots, n-m\}$ and define the intervals $J_k = [s_k,s_{k+1})$, $k \in \{0, \dots, n-m-1\}$, and $J_{n-m} = [s_{n-m},T+1]$. Some thought reveals that~\eqref{eq:thresh:proc} is then equivalent to choosing the threshold function $\tau_m$ such that, under $H_0$ in~\eqref{eq:H0}, for any $k \in \{1, \dots, n-m\}$,
\begin{multline}
\label{eq:thresh:proc1}
  \Pr\{ \exists \, t \in J_k \text{ s.t. } \Db_m(t) > \tau_m(t) \mid \Db_m(t) \leq \tau_m(t), \forall \, t \in J_0 \cup \dots \cup J_{k-1} \} \\ = 1 - (1-\alpha)^{1/(n-m)}.
\end{multline}

\subsection{A formulation compatible with asymptotic validity results}
\label{sec:thresh:asm:val:formulation}

With $n = \ip{m(T+1)}$, the threshold setting procedure as given in~\eqref{eq:thresh:proc} or~\eqref{eq:thresh:proc1} makes no sense asymptotically since the number of (conditional) probabilities tends to infinity as $m \to \infty$. A natural solution consists of keeping the number of probabilities fixed, or, equivalently, of considering a time grid that does not depend on $m$. Let $p \geq 1$ and let $t_0 = 1 < t_1 < \dots < t_p = T+1$ be a fixed uniformly spaced time grid such that $T/p > 1/m$ (a condition that will always be satisfied for $m$ large enough). Let $\tau_m$ be a piecewise constant threshold function taking the value $g_{i,m}$ on the interval $I_i = [t_{i-1}, t_i)$, $i \in \{1,\dots,p-1\}$, and $g_{p,m}$ on the interval $I_p = [t_{p-1}, t_p]$. Mimicking~\eqref{eq:thresh:proc1}, the aim is then to choose $\tau_m$ such that, under $H_0$ in~\eqref{eq:H0}, for any $i \in \{1, \dots, p\}$,
\begin{equation}
  \label{eq:thresh:proc2}
  \Pr\{ \exists \, t \in I_i \text{ s.t. }\Db_m(t) > \tau_m(t) \mid \Db_m(t) \leq \tau_m(t), \forall \, t \in I_0 \cup \dots \cup I_{i-1}  \} = 1 - (1-\alpha)^{1/p},
\end{equation}
with the convention that $I_0 = \emptyset$. Some thought reveals that the formulation in~\eqref{eq:thresh:proc2} is equivalent to choosing $\tau_m$ such that, under $H_0$ in~\eqref{eq:H0}, 
\begin{equation}
  \label{eq:thresh:proc3}
  \left\{
    \begin{array}{l}
      \disp \Pr\left\{ \sup_{t \in I_1} \Db_m(t) > g_{1,m} \right\} = 1 - (1-\alpha)^{1/p},\\ \\
      \text{and, for all } i \in \{2, \dots, p\},\\ \\
      \disp \Pr\left\{ \sup_{t \in I_i} \Db_m(t) > g_{i,m} \, \Big| \, \sup_{t \in I_1} \Db_m(t) \leq g_{1,m}, \dots, \sup_{t \in I_{i-1}} \Db_m(t) \leq g_{i-1,m}  \right\} = 1 - (1-\alpha)^{1/p}. 
    \end{array}
  \right.
\end{equation}
In other words, $g_{1,m}$ is a quantile of order $(1-\alpha)^{1/p}$ of $\sup_{t \in I_1} \Db_m(t)$ under stationarity and $g_{i,m}$, $i \in \{2,\dots,p\}$, is a quantile of order $(1-\alpha)^{1/p}$ of $\sup_{t \in I_i} \Db_m(t)$ given that $\sup_{t \in I_1} \Db_m(t) \leq g_{1,m}, \dots, \sup_{t \in I_{i-1}} \Db_m(t) \leq g_{i-1,m}$ under stationarity. Notice that the suprema in~\eqref{eq:thresh:proc3} are actually maxima since $\Db_m$ is a piecewise constant function.

\begin{remark}
A further generalization of~\eqref{eq:thresh:proc2} or, equivalently~\eqref{eq:thresh:proc3}, would be to consider that $\tau_m$ is not necessarily piecewise constant but only defined up to a multiplicative constant on each of the intervals $I_i$, $i \in \{1,\dots,p\}$. For instance, it could have one of the parametric forms considered in \citet[Section~5]{DetGos19}, among others. For the sake of simplicity, we shall not however consider such an extension in this work.
\end{remark}

\subsection{Estimation of the threshold function}
\label{sec:estim:thresh}

As we continue, we shall focus on the threshold setting procedure as formulated in~\eqref{eq:thresh:proc2} or, equivalently,~\eqref{eq:thresh:proc3}, mostly because its asymptotic validity can be studied. To estimate the threshold function $\tau_m$ in~\eqref{eq:thresh:proc2}, or, equivalently, the $g_{i,m}$, $i \in \{1,\dots,p\}$, in~\eqref{eq:thresh:proc3}, it is thus necessary to be able to compute, at least approximately, the distribution of the $p$-dimensional random vector
\begin{equation}
  \label{eq:rvp}
\left( \sup_{t \in I_1} \Db_m(t), \dots, \sup_{t \in I_p} \Db_m(t) \right).
\end{equation}

\subsubsection{Monte Carlo estimation and asymptotic validity}
\label{sec:MC:estim}

Assume that the observations to be monitored are univariate and independent, and that $\Db_m$ is distribution-free under $H_0^\text{ind}$ in~\eqref{eq:H0:ind}. Notice that the latter implies that so is the random vector~\eqref{eq:rvp}. To obtain a Monte Carlo estimate of the distribution of~\eqref{eq:rvp}, it then suffices to consider a large integer $M$, generate $M$ independent samples $U_1^{[s]},\dots,U_n^{[s]}$, $s \in \{1,\dots,M\}$, of size $n$ from the standard uniform distribution and compute the corresponding realizations $\Db_m^{[s]}$, $s \in \{1,\dots,M\}$, of~$\Db_m$. The latter can be used to obtain a Monte Carlo estimate $\tau_m^M$ of the threshold function~$\tau_m$. More formally, let
$$
g_{i,m}^M = F_{\Db_m,i}^{M,-1} \{ (1-\alpha)^{1/p} \}, \qquad i \in \{1,\dots,p\},
$$
where $F_{\Db_m,1}^M$ is the empirical d.f.\ of the sample $\sup_{t \in I_1} \Db_m^{[1]}(t),\dots,\sup_{t \in I_1} \Db_m^{[M]}(t)$, for any $i \in \{2,\dots,p\}$ and $x \in \R$,
$$
F_{\Db_m,i}^M(x) = \frac{\disp \sum_{s=1}^M \1\left\{\sup_{t \in I_i} \Db_m^{[s]}(t) \leq x, \sup_{t \in I_1} \Db_m^{[s]}(t) \leq g_{1,m}^M, \dots, \sup_{t \in I_{i-1}} \Db_m^{[s]}(t) \leq g_{i-1,m}^M \right\}}{\disp \sum_{s=1}^M \1\left\{ \sup_{t \in I_1} \Db_m^{[s]}(t) \leq g_{1,m}^M, \dots, \sup_{t \in I_{i-1}} \Db_m^{[s]}(t) \leq g_{i-1,m}^M  \right\}},
$$
and $F_{\Db_m,i}^{M,-1}$, $i \in \{1,\dots,p\}$, are the associated quantile functions generically defined by~\eqref{eq:quant:func}. Notice that, in this particular case, the resulting estimate $\tau_m^M$ of the threshold function $\tau_m$ does not at all depend on the learning sample. 

By taking a sufficiently large $M$, the Monte Carlo estimates $g_{i,m}^M$, $i \in \{1,\dots,p\}$, can be made arbitrarily close to the quantiles $g_{i,m} = F_{\Db_m,i}^{-1}\{ (1-\alpha)^{1/p} \}$, $i \in \{1,\dots,p\}$,  where $F_{\Db_m,1}$ is the d.f.\ of $\sup_{t \in I_1} \Db_m(t)$, and $F_{\Db_m,i}$, $i \in \{2,\dots,p\}$, is the d.f.\ of $\sup_{t \in I_i} \Db_m(t)$ given that $\sup_{t \in I_j} \Db_m(t) \leq g_{j,m}$ for all $j \in \{1,\dots,i-1\}$. Interestingly enough more can be said as a consequence of the fact that Monte Carlo simulation can be regarded as a particular resampling scheme. As shall become clear in the next section, the general result stated in Theorem~\ref{thm:thresh:boot} hereafter can actually be used to show the asymptotic validity of the Monte Carlo based threshold estimation procedure when both $m$ and $M$ tend to infinity, under both $H_0^\text{ind}$ in~\eqref{eq:H0:ind} and sequences of alternatives related to $H_1$ in~\eqref{eq:H1}. This is discussed in more detail in Remark~\ref{rem:thresh:MC} below.

\subsubsection{Bootstrap-based estimation and asymptotic validity}
\label{sec:boot:estim}

In settings in which $\Db_m$ is not distribution-free anymore, a natural alternative is to rely on a resampling scheme making use of the available learning sample $\bm X_1,\dots,\bm X_m$ known to be under $H_0$ in~\eqref{eq:H0}. Specifically, let $B$ be a large integer and suppose that we have available \emph{bootstrap replicates} $\Db_m^{[b]}$, $b \in \{1,\dots,B\}$, of $\Db_m$ computed from $\bm X_1,\dots,\bm X_m$ and depending on additional sources of randomness involved in the resampling scheme. Mimicking the previous situation in which $\Db_m$ was distribution-free, let
$$
g_{j,m}^B = F_{\Db_m,j}^{B,-1} \{ (1-\alpha)^{1/p} \}, \qquad j \in \{1,\dots,p\},
$$
where $F_{\Db_m,1}^B$ is the empirical d.f.\ of the sample $\sup_{t \in I_1} \Db_m^{[1]}(t),\dots,\sup_{t \in I_1} \Db_m^{[B]}(t)$, and, for any $j \in \{2,\dots,p\}$ and $x \in \R$,
$$
F_{\Db_m,j}^B(x) = \frac{\disp \sum_{b=1}^B \1 \left\{ \sup_{t \in I_j} \Db_m^{[b]}(t) \leq x, \sup_{t \in I_1} \Db_m^{[b]}(t) \leq g_{1,m}^B, \dots, \sup_{t \in I_{j-1}} \Db_m^{[b]}(t) \leq g_{j-1,m}^B \right\}}{\disp \sum_{b=1}^B \1 \left\{\sup_{t \in I_1} \Db_m^{[b]}(t) \leq g_{1,m}^B, \dots, \sup_{t \in I_{j-1}} \Db_m^{[b]}(t) \leq g_{j-1,m}^B  \right\}}.
$$

As we shall see below, the main result of this section is that, essentially, as soon as the underlying resampling scheme for $\Db_m$ is consistent, the above bootstrap-based version of the threshold setting procedure~\eqref{eq:thresh:proc3} is asymptotically valid in the sense that, under~$H_0$, $\Pr(\Db_m \leq \tau_m^B) \to 1 - \alpha$ as $m,B \to \infty$, where $\tau_m^B$ is the estimated bootstrap-based piecewise constant threshold function taking the value $g_{i,m}^B$ on the interval $I_i = [t_{i-1}, t_i)$, $i \in \{1,\dots,p-1\}$, and $g_{p,m}^B$ on the interval $I_p = [t_{p-1}, t_p]$.

\begin{remark}
  \label{rem:boot:val}
Following for instance \citet[Section~3.6]{vanWel96} or \citet[Section~2.2.3]{Kos08}, a resampling scheme for $\Db_m$ is typically considered consistent, if, informally, ``$\Db_m^{[1]}$ converges weakly to the weak limit of $\Db_m$ in $\ell^\infty([0,T+1])$ conditionally on $\bm X_1, \bm X_2, \dots$ in probability''. A rigorous definition of the underlying mode of convergence is more subtle than that of weak convergence. From Lemma 3.1 of \cite{BucKoj19}, the aforementioned validity statement is actually equivalent to the joint unconditional weak convergence of $\Db_m$ and two bootstrap replicates to independent copies of the same limit. Throughout the paper, all our bootstrap asymptotic validity results will take that form.
\end{remark}

The following general result is proved in Appendix~\ref{sec:proof:thm:thres:boot}.

\begin{thm}
  \label{thm:thresh:boot}
Assume that, under $H_0$ in~\eqref{eq:H0},
\begin{equation}
  \label{eq:boot:val}
  (\Db_m, \Db_m^{[1]}, \Db_m^{[2]}) \leadsto (\Db_F, \Db_F^{[1]}, \Db_F^{[2]})
\end{equation}
in $\{ \ell^\infty([1,T+1]) \}^3$, where $\Db_F$ is the weak limit of $\Db_m$ and $\Db_F^{[1]}$ and $\Db_F^{[2]}$ are independent copies of $\Db_F$. Assume furthermore that $\big( \sup_{t \in I_1} \Db_F(t), \dots, \sup_{t \in I_p} \Db_F(t) \big)$ has a continuous d.f. Then, under $H_0$ in~\eqref{eq:H0}, as $m,B \to \infty$,
\begin{equation}
  \label{eq:thresh:val:1}
\Pr \left\{ \sup_{t \in I_1} \Db_m(t) \leq g_{1,m}^B \right\} \to (1-\alpha)^{1/p},
\end{equation}
and, for any $i \in \{2,\dots,p\}$,
\begin{equation}
  \label{eq:thresh:val:i}
 \Pr \left\{ \sup_{t \in I_i} \Db_m(t)  \leq g_{i,m}^B \, \Big| \, \sup_{t \in I_1} \Db_m(t) \leq g_{1,m}^B, \ldots, \sup_{t \in I_{i-1}} \Db_m(t) \leq g_{i-1,m}^B \right\} \to (1-\alpha)^{1/p}.
\end{equation}
As a consequence, on one hand, under $H_0$, $\Pr(\Db_m \leq \tau_m^B) \to 1 - \alpha$ as $m,B \to \infty$ and, on the other hand, when $\sup_{t \in [1,T+1]} \Db_m(t) \p \infty$,
$$
\Pr\{ \exists \, t \in [1,T+1] \text{ s.t. } \Db_m(t) > \tau_m^B(t) \} \to 1 \qquad \text{ as } m, B \to \infty.
$$
\end{thm}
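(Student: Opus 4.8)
The plan is to first establish the two convergence statements \eqref{eq:thresh:val:1} and \eqref{eq:thresh:val:i} for the bootstrap quantiles, and then deduce the two ``consequence'' claims by telescoping and by a direct domination argument, respectively. The starting point is the unconditional joint weak convergence hypothesis \eqref{eq:boot:val}. By the continuous mapping theorem applied to the map $f \mapsto (\sup_{t\in I_1} f(t), \dots, \sup_{t\in I_p} f(t))$ from $\ell^\infty([1,T+1])$ to $\R^p$ --- continuous because the $I_i$ are fixed intervals and $\|\cdot\|_\infty$ dominates each sup --- one gets the joint convergence in $(\R^p)^3$ of the triple of $p$-vectors built from $\Db_m$, $\Db_m^{[1]}$, $\Db_m^{[2]}$ to the analogous triple built from $\Db_F$, $\Db_F^{[1]}$, $\Db_F^{[2]}$, the latter two being independent copies of the first.

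The core of the argument is then to show that the bootstrap empirical quantiles $g_{j,m}^B$ are consistent for the true (limiting) conditional quantiles. For $j=1$ this is standard: $F_{\Db_m,1}^B$ is the empirical d.f.\ of $B$ ``draws'' $\sup_{t\in I_1}\Db_m^{[b]}(t)$, and one can invoke a conditional Glivenko--Cantelli/quantile-consistency result (in the spirit of the reformulation in Remark~\ref{rem:boot:val} via Lemma~3.1 of \cite{BucKoj19}) to conclude that $g_{1,m}^B \overset{\Pr}{\to} g_1$, where $g_1$ is the $(1-\alpha)^{1/p}$-quantile of $\sup_{t\in I_1}\Db_F(t)$; the continuity hypothesis on the joint d.f.\ of $(\sup_{t\in I_1}\Db_F(t),\dots,\sup_{t\in I_p}\Db_F(t))$ guarantees that this quantile is a continuity point, so that $\Pr\{\sup_{t\in I_1}\Db_m(t)\le g_{1,m}^B\}\to (1-\alpha)^{1/p}$. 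For $j\ge 2$ I would proceed inductively: assuming $g_{\ell,m}^B \overset{\Pr}{\to} g_\ell$ for $\ell<j$ (with $g_\ell$ the appropriate limiting conditional quantile), the numerator and denominator defining $F_{\Db_m,j}^B$ are each averages of indicators of events of the form $\{\sup_{t\in I_j}\Db_m^{[b]}(t)\le x\}\cap\bigcap_{\ell<j}\{\sup_{t\in I_\ell}\Db_m^{[b]}(t)\le g_{\ell,m}^B\}$; replacing the random thresholds $g_{\ell,m}^B$ by their limits costs $o_\Pr(1)$ by the induction hypothesis together with the continuity of the limiting joint law (no atoms on the relevant hyperplanes), and a conditional law-of-large-numbers argument then shows $F_{\Db_m,j}^B$ converges to the limiting conditional d.f., hence $g_{j,m}^B\overset{\Pr}{\to} g_j$. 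Feeding this back and using once more the joint weak convergence plus continuity yields \eqref{eq:thresh:val:i}.

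For the first consequence, write $\Pr(\Db_m\le\tau_m^B)$ as $\Pr\{\bigcap_{i=1}^p \sup_{t\in I_i}\Db_m(t)\le g_{i,m}^B\}$ and expand it as the product over $i$ of the conditional probabilities in \eqref{eq:thresh:val:1}--\eqref{eq:thresh:val:i}; by those displays each factor tends to $(1-\alpha)^{1/p}$, and since $p$ is fixed the product tends to $(1-\alpha)$. (A small amount of care is needed here to pass from convergence of each conditional probability to convergence of the product; I would phrase the induction so that at each stage one controls the joint probability $\Pr\{\bigcap_{\ell\le i}\sup_{t\in I_\ell}\Db_m(t)\le g_{\ell,m}^B\}\to (1-\alpha)^{i/p}$ directly, avoiding division by possibly-small denominators.) For the second consequence, observe that $\{\exists\, t\in[1,T+1]: \Db_m(t)>\tau_m^B(t)\}\supseteq\{\sup_{t\in I_p}\Db_m(t)>g_{p,m}^B\}$ --- or, more simply, $\supseteq\{\sup_{t\in[1,T+1]}\Db_m(t)>\max_i g_{i,m}^B\}$; since each $g_{i,m}^B$ converges in probability to a finite constant $g_i$, the random threshold $\max_i g_{i,m}^B$ is $O_\Pr(1)$, whereas $\sup_{t\in[1,T+1]}\Db_m(t)\overset{\Pr}{\to}\infty$ by hypothesis, so this event has probability tending to $1$.

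The main obstacle is the inductive consistency of the \emph{conditional} bootstrap quantiles $g_{j,m}^B$ for $j\ge 2$: one must simultaneously let $m\to\infty$ and $B\to\infty$, handle the nested conditioning on events whose defining thresholds are themselves random and only converging in probability, and rule out mass concentrating on the boundary hyperplanes $\{x_\ell=g_\ell\}$ --- this is exactly where the assumed continuity of the joint limiting d.f.\ of $(\sup_{t\in I_1}\Db_F(t),\dots,\sup_{t\in I_p}\Db_F(t))$ is essential. Making the ``conditionally on the data, in probability'' convergence of the bootstrap replicates rigorous enough to support a conditional LLN for these indicator averages (rather than just marginal weak convergence) is the technically delicate point, and is where I would lean on the Bücher--Kojadinovic reformulation cited in Remark~\ref{rem:boot:val} to reduce everything to the unconditional joint statement \eqref{eq:boot:val}.
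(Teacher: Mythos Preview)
Your approach is essentially the paper's: reduce to the $p$-dimensional vector of interval suprema via the continuous mapping theorem, then run an induction on $j$ to control the sequence of bootstrap conditional quantiles, and finally telescope for the level statement and use $O_\Pr(1)$ tightness of the thresholds for the power statement. The paper packages the induction and the power argument into two auxiliary lemmas (Lemmas~\ref{lem:threshold} and~\ref{lem:thresh:inf}), but the content is the same.

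There is one technical slip worth flagging. You repeatedly assert $g_{j,m}^B \overset{\Pr}{\to} g_j$ and base the induction and the $O_\Pr(1)$ claim on it. Continuity of the joint d.f.\ of $(\sup_{t\in I_1}\Db_F(t),\dots,\sup_{t\in I_p}\Db_F(t))$ rules out atoms but not flat stretches of $F_{\bm S,j}$ at level $(1-\alpha)^{1/p}$; without strict increase at $g_j$ the empirical quantile need not converge to $g_j$. The paper avoids this by never claiming quantile convergence: it shows instead the uniform convergence $\sup_x|F_{\bm S_m,j}^B(x)-F_{\bm S,j}(x)|\overset{\Pr}{\to}0$ and the weaker but sufficient statement $F_{\bm S,j}(g_{j,m}^B)\to (1-\alpha)^{1/p}$, then compares indicators via the elementary bound $|\1(y\le a)-\1(y\le b)|\le \1(|y-a|\le\eps)+\1(|a-b|>\eps)$ applied after composing with $F_{\bm S,j}$. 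For the power statement, the paper likewise works with $F_{\bm S,j}(M_m)\to 1$ versus $F_{\bm S,j}(g_{j,m}^B)\to (1-\alpha)^{1/p}<1$ rather than appealing to $g_{j,m}^B\to g_j$. Your argument is easily repaired along these lines, but as written the inductive step ``replacing $g_{\ell,m}^B$ by $g_\ell$ costs $o_\Pr(1)$'' is not justified by the stated hypotheses.
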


\begin{remark}
  \label{rem:thresh:MC}
  Consider the Monte Carlo estimation setting of Section~\ref{sec:MC:estim} in which  the observations to be monitored are univariate independent and $\Db_m$ is distribution-free. Then, the weak convergence $\Db_m \leadsto \Db_F$ in $\ell^\infty([1,T+1])$ under $H_0$ immediately implies~\eqref{eq:boot:val}, where $\Db_m^{[1]}$ and $\Db_m^{[2]}$ are (independent) Monte Carlo replicates of $\Db_m$. Hence, as a consequence of Theorem~\ref{thm:thresh:boot}, the asymptotic validity under $H_0^\text{ind}$ in~\eqref{eq:H0:ind} of the sequential change-point detection procedure based on $\Db_m$ and the Monte Carlo estimated threshold function $\tau_m^M$ defined in Section~\ref{sec:MC:estim} is an immediate corollary of the weak convergence under the null of $\Db_m$ if $\big( \sup_{t \in I_1} \Db_F(t), \dots, \sup_{t \in I_p} \Db_F(t) \big)$ has a continuous d.f.
\end{remark}

\section{Threshold function estimation for the detectors based on empirical d.f.s}
\label{sec:thresh:estim}

The aim of this section is to apply the generic results of the previous section to estimate the threshold functions for the empirical d.f.-based detector functions $\Rb_{m,q}$, $\Sb_{m,q}$ and $\Tb_{m,q}$ defined in~\eqref{eq:det:as:func}. We distinguish two situations for the observations to be monitored: the independent univariate case and the possibly multivariate, time series case.

\subsection{Monte Carlo estimation in the independent univariate case}
\label{sec:thresh:ind:univ}

As verified in Section~\ref{sec:margin:free:H0}, the detector functions $\Rb_{m,q}$, $\Sb_{m,q}$ and $\Tb_{m,q}$ defined in~\eqref{eq:det:as:func} are margin-free under $H_0$ in~\eqref{eq:H0}. In the univariate case, they are thus distribution-free. When dealing with independent univariate observations, one can therefore proceed exactly as explained in Section~\ref{sec:MC:estim} to estimate the corresponding threshold functions. Furthermore, from Proposition~\ref{prop:H0}, Remark~\ref{rem:thresh:MC} and Proposition~\ref{prop:H1}, we know that the assumptions of Theorem~\ref{thm:thresh:boot} are satisfied. The latter then implies that the corresponding sequential change-point detection procedures are asymptotically valid both under $H_0^\text{ind}$ in~\eqref{eq:H0:ind} and sequences of alternatives related to $H_1$ in~\eqref{eq:H1}.

\subsection{A dependent multiplier bootstrap in the time series case}
\label{sec:thresh:time:series}

When the monitored observations are multivariate or exhibit serial dependence, the approach considered in Section~\ref{sec:thresh:ind:univ} is not meaningful anymore. Having the asymptotic results of Sections~\ref{sec:asym:det:H0} and~\ref{sec:boot:estim} in mind, our aim in the considered time series context is to define suitable bootstrap replicates of $\Bb_m$ in~\eqref{eq:Bbm} such that, following Remark~\ref{rem:boot:val}, $\Bb_m$ and two of its replicates jointly weakly converge to independent copies of the process $\Bb_C$ defined in Proposition~\ref{prop:H0}. Subsequently defining corresponding bootstrap replicates of the detectors functions $\Rb_{m,q}$, $\Sb_{m,q}$ and $\Tb_{m,q}$ defined in~\eqref{eq:det:as:func} will lead to asymptotically valid corresponding sequential change-point detection procedures.

Following \citet[Section 3.3]{Buh93} and \cite{BucKoj16}, we opted for a \emph{dependent multiplier bootstrap} in the considered time series context. In the rest of the paper, we say that a sequence of random variables $(\xi_{i,m})_{i \in \Z}$ is a {\em dependent multiplier sequence} if:
\begin{enumerate}[({M}1)]
\item \label{item:moments} The sequence $(\xi_{i,m})_{i \in \Z}$ is stationary, independent of the available learning sample $\bm X_1,\dots,\bm X_m$ and satisfies $\Ex(\xi_{0,m}) = 0$, $\Ex(\xi_{0,m}^2) = 1$ and $\sup_{m \geq 1} \Ex(|\xi_{0,m}|^\nu) < \infty$ for all $\nu \geq 1$.
\item \label{item:lm} There exists a sequence $\ell_m \to \infty$ of strictly positive constants such that $\ell_m = o(m)$ and the sequence $(\xi_{i,m})_{i \in \Z}$ is $\ell_m$-dependent, i.e., $\xi_{i,m}$ is independent of $\xi_{i+h,m}$ for all $h > \ell_m$ and $i \in \N$.
\item \label{item:varphi} There exists a function $\varphi:\R \to [0,1]$, symmetric around 0, continuous at $0$, satisfying $\varphi(0)=1$ and $\varphi(x)=0$ for all $|x| > 1$ such that $\Ex(\xi_{0,m} \xi_{h,m}) = \varphi(h/\ell_m)$ for all $h \in \Z$.
\end{enumerate}

Let $(\xi_{i,m}^{[b]})_{i \in \Z}$, $b \in \N$, be independent copies of the same dependent multiplier sequence. If we had a learning sample of size $n = \ip{m(T+1)}$, following \cite{BucKoj16}, a natural definition of a \emph{dependent multiplier replicate} of $\Bb_m$ in~\eqref{eq:Bbm} would be
\begin{equation}
  \label{eq:check:Bbmb}
\check \Bb_m^{[b]}(s, \bm u) = \frac{1}{\sqrt{m}} \sum_{i=1}^{\ip{ms}}  \xi_{i,m}^{[b]} \{ \1(\bm U_i \leq \bm u) - C_{1:n}(\bm u) \}, \qquad s \in [0,T+1], \bm u \in [0,1]^d,b \in \N,
\end{equation}
where $C_{1:n}$ is generically defined by~\eqref{eq:Cjk}. Since threshold functions need to be estimated prior to the beginning of the monitoring and the learning sample is only of size~$m$, we consider a time-rescaled version of $\check \Bb_m^{[b]}$ in which, roughly, $m'= \ip{ (m/n) m} \simeq m/(T+1)$ and $m$ play the role of $m$ and $n$, respectively. Hence, in the considered context, our definition of a dependent multiplier replicate of $\Bb_m$ is
\begin{equation}
  \label{eq:hat:Bbmb}
  \hat \Bb_m^{[b]}(s, \bm u) = \frac{1}{\sqrt{m'}} \sum_{i=1}^{\ip{m's}}  \xi_{i,m}^{[b]} \{ \1(\bm U_i \leq \bm u) - C_{1:m}(\bm u) \}, \qquad s \in [0,T+1], \bm u \in [0,1]^d,b \in \N,
\end{equation}
thereby translating the fact that we can only rely on functionals computed from the learning sample to approximate the variability of the detector functions under the null. 

From the two previous displays, we see that the multipliers act as random weights and that the bandwidth $\ell_m$ defined in Assumption~(M\ref{item:lm}) plays a role somehow similar to that of the {\em block length} in the block bootstrap of \cite{Kun89}. Note that, in our Monte Carlo experiments to be presented in Section~\ref{sec:MC}, $\ell_m$ was estimated from the learning sample $\bm X_1,\dots,\bm X_m$ as explained in detail in Section~5.1 of \cite{BucKoj16} while corresponding dependent multiplier sequences were generated using the so-called \emph{moving average approach} based on an initial standard normal random sample and Parzen's kernel as precisely described in Section~5.2 of the same reference.

The latter construction based on a time-rescaling suggests to form a dependent multiplier replicate of $\Gb_m$ in~\eqref{eq:GbmH0} as
\begin{equation*}
  \label{eq:hatGbmb}
\hat \Gb_m^{[b]}(s,t, \bm u) = \lambda_{m'}(0,t)  \hat \Bb_m^{[b]}(s,\bm u) - \lambda_{m'}(0,s)  \hat \Bb_m^{[b]}(t,\bm u), \qquad (s,t) \in \Delta, \bm u \in [0,1]^d, b \in \N,
\end{equation*}
with its weighted version being
$$
\hat \Gb_{m,q}^{[b]}(s,t, \bm u) = \frac{\hat \Gb_m^{[b]}(s,t, \bm u)}{q\{\lambda_{m'}(0,s),\lambda_{m'}(0,t) \}}, \qquad (s,t) \in \Delta, \bm u \in [0,1]^d, b \in \N,
$$
where $\lambda_{m'}$ is defined as in~\eqref{eq:lambda}. Finally, for any $b \in \N$ and $t \in [1,T+1]$, let
\begin{equation}
  \label{eq:det:mult}
  \begin{split}
    \hat \Rb_{m,q}^{[b]}(t) &= \sup_{s \in [1,t]} \sup_{\bm u \in [0,1]^d} | \hat \Gb_{m,q}^{[b]}(s,t,\bm u)|, \\
    \hat \Sb_{m,q}^{[b]}(t) &= \sup_{s \in [1,t]} \int_{[0,1]^d} \{ \hat \Gb_{m,q}^{[b]}(s,t,\bm u) \}^2 \dd C_{1:\ip{m't}}(\bm u), \\
    \hat \Tb_{m,q}^{[b]}(t) &= \int_1^t \int_{[0,1]^d} \{ \hat \Gb_{m,q}^{[b]}(s,t,\bm u) \}^2 \dd C_{1:\ip{m't}}(\bm u) \dd s
  \end{split}
\end{equation}
be dependent multiplier replicates of $\Rb_{m,q}$, $\Sb_{m,q}$ and $\Tb_{m,q}$, respectively, defined in~\eqref{eq:det:as:func}, where $C_{1:\ip{m't}}$ is defined generically by~\eqref{eq:Cjk}.

The definitions given in~\eqref{eq:det:mult} hide the fact that the proposed dependent multipliers replicates of the detector functions $\Rb_{m,q}$, $\Sb_{m,q}$ and $\Tb_{m,q}$ actually depend on the learning sample $\bm X_1,\dots,\bm X_m$. To verify that this is the case, for any $b \in \N$, $(s,t) \in \Delta$ and $\bm x \in \R^d$, let $\hat \Fb_m^{[b]}(s, \bm x) = \hat \Bb_m^{[b]}\{s, \bm F(\bm x)\}$, where $\bm F(\bm x) = (F_1(x_1),\dots,F_d(x_d))$,
$$
\hat \Hb_m^{[b]}(s,t, \bm x) = \hat \Gb_m^{[b]}\{s,t, \bm F(\bm x) \} = \lambda_{m'}(0,t)  \hat \Fb_m^{[b]}(s,\bm x) - \lambda_{m'}(0,s)  \hat \Fb_m^{[b]}(t,\bm x)
$$
and
$$
\hat \Hb_{m,q}^{[b]}(s,t, \bm x) = \hat \Gb_{m,q}^{[b]}\{s,t, \bm F(\bm x) \} = \frac{\hat \Hb_m^{[b]}(s,t, \bm x)}{q\{\lambda_{m'}(0,s),\lambda_{m'}(0,t) \}}.
$$
Since~\eqref{eq:1Xi:1Ui} always holds for all $i \in \{1,\dots,m\}$, we immediately obtain that, for any $b \in \N$,
\begin{equation*}
  \label{eq:hat:Fbmb}
  \hat \Fb_m^{[b]}(s, \bm x) = \frac{1}{\sqrt{m'}} \sum_{i=1}^{\ip{m's}}  \xi_{i,m}^{[b]} \{ \1(\bm X_i \leq \bm x) - F_{1:m}(\bm x) \}, \qquad s \in [0,T+1], \bm x \in \R^d,
\end{equation*}
where $F_{1:m}$ is generically defined by~\eqref{eq:Fjk}, and furthermore that, for any $t \in [1,T+1]$,
\begin{align*}
  \nonumber 
  \hat \Rb_{m,q}^{[b]}(t) &= \sup_{s \in [1,t]} \sup_{\bm x \in \R^d} | \hat \Hb_{m,q}^{[b]}(s,t,\bm x)|, \\
  \nonumber 
  \hat \Sb_{m,q}^{[b]}(t) &= \sup_{s \in [1,t]} \int_{\R^d} \{ \hat \Hb_{m,q}^{[b]}(s,t,\bm x) \}^2 \dd F_{1:\ip{m't}}(\bm x), \\
  \hat \Tb_{m,q}^{[b]}(t) &= \int_1^t \int_{\R^d} \{ \hat \Hb_{m,q}^{[b]}(s,t,\bm x) \}^2 \dd F_{1:\ip{m't}}(\bm x) \dd s.
\end{align*}

The following result is proven in Appendix~\ref{sec:proof:thresh:estim}.

\begin{prop}
  \label{prop:mult}
  Assume that $H_0$ in~\eqref{eq:H0} holds and that, additionally, $\bm X_1,\dots,\bm X_n$ is a stretch from a stationary sequence $(\bm X_i)_{i \in \Z}$ of continuous $d$-dimensional random vectors whose strong mixing coefficients satisfy $\alpha_r^{\bm X} = O(r^{-a})$ for some $a > 3+3d/2$ as $r \to \infty$. If $\ell_m = O(m^{1/2-\eps})$ for some $0 < \eps < 1/2$, then
$$
(\Bb_m, \hat \Bb_m^{[1]}, \hat \Bb_m^{[2]}) \leadsto ( \Bb_C, \Bb_C^{[1]}, \Bb_C^{[2]} )
$$
in $\{ \ell^\infty([0,T+1] \times [0,1]^d) \}^3$, where $\Bb_m$ is defined in~\eqref{eq:Bbm}, $\hat \Bb_m^{[1]}$ and $\hat \Bb_m^{[2]}$ are defined in~\eqref{eq:hat:Bbmb}, $\Bb_C$ is the weak limit of $\Bb_m$ defined in Proposition~\ref{prop:H0}, and $\Bb_C^{[1]}$ and $\Bb_C^{[2]}$ are independent copies of $\Bb_C$.

As a consequence,
\begin{align*}
  &(\Rb_{m,q},\hat \Rb_{m,q}^{[1]}, \hat \Rb_{m,q}^{[2]}) \leadsto (\Rb_{C,q},\Rb_{C,q}^{[1]}, \Rb_{C,q}^{[2]}), &(\Sb_{m,q},\hat \Sb_{m,q}^{[1]}, \hat \Sb_{m,q}^{[2]}) \leadsto (\Sb_{C,q},\Sb_{C,q}^{[1]}, \Sb_{C,q}^{[2]}), \\
  &(\Tb_{m,q},\hat \Tb_{m,q}^{[1]}, \hat \Tb_{m,q}^{[2]}) \leadsto (\Tb_{C,q},\Tb_{C,q}^{[1]}, \Tb_{C,q}^{[2]}), \qquad &\text{in } \{\ell^\infty([1,T+1])\}^3,
\end{align*}                                                                         where $\hat \Rb_{m,q}^{[b]}$, $\hat \Sb_{m,q}^{[b]}$ and $\hat \Tb_{m,q}^{[b]}$, $b \in \{1,2\}$, are defined in~\eqref{eq:det:mult}, $\Rb_{C,q}$, $\Sb_{C,q}$ and $\Tb_{C,q}$ are defined in~\eqref{eq:weak:limits}, and $\Rb_{C,q}^{[b]}$, $\Sb_{C,q}^{[b]}$ and $\Tb_{C,q}^{[b]}$, $b \in \{1,2\}$, are independent copies of $\Rb_{C,q}$, $\Sb_{C,q}$ and $\Tb_{C,q}$, respectively.
\end{prop}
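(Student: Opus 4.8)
The plan is to establish the result in two stages: first the joint weak convergence of $\Bb_m$ and its two multiplier replicates, and then a continuous mapping argument transferring this to the three detector functions.

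\textbf{Stage 1: the basic weak convergence.} I would obtain $(\Bb_m,\hat\Bb_m^{[1]},\hat\Bb_m^{[2]}) \leadsto (\Bb_C,\Bb_C^{[1]},\Bb_C^{[2]})$ by invoking the consistency of the dependent multiplier bootstrap for the sequential empirical process established in \cite{BucKoj16}, applied to the \emph{learning sample} $\bm U_1,\dots,\bm U_m$ alone. The point is that $\hat\Bb_m^{[b]}$ in~\eqref{eq:hat:Bbmb} depends only on $\bm U_1,\dots,\bm U_m$ (since $\ip{m's}\le\ip{m'(T+1)}\le m$ for $m$ large) and on the independent multiplier sequences, and that, up to the deterministic rescaling of the time axis $u\mapsto s=(n/m)u$ and of the process values by $\sqrt{n/m}\to\sqrt{T+1}$, it is exactly the multiplier replicate of \cite{BucKoj16} built from a sample of size $m$, with centering empirical d.f.\ $C_{1:m}$ and bandwidth $\ell_m$. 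One checks that these two rescalings are mutually consistent, so that the conditional limit of $\hat\Bb_m^{[b]}$ has covariance $(s\wedge t)\,\Gamma(\bm u,\bm v)$, that is, is a copy of $\Bb_C$. The hypotheses of \cite{BucKoj16} hold here: the mixing rate $\alpha_r^{\bm X}=O(r^{-a})$ with $a>3+3d/2$ is precisely the rate required there in dimension $d$ for the multiplier-bootstrap part, the bandwidth satisfies $\ell_m=O(m^{1/2-\eps})$ in addition to~(M\ref{item:lm}), and the multiplier sequences obey~(M\ref{item:moments})--(M\ref{item:varphi}). This gives that, conditionally on $\bm X_1,\bm X_2,\dots$, $(\hat\Bb_m^{[1]},\hat\Bb_m^{[2]})$ converges weakly in $\ell^\infty([0,T+1]\times[0,1]^d)$ to independent copies of $\Bb_C$, in probability. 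Combined with the unconditional convergence $\Bb_m\leadsto\Bb_C$ from Proposition~\ref{prop:H0} (whose hypothesis $a>1$ is implied by $a>3+3d/2$), Lemma~3.1 of \cite{BucKoj19}, in the form recalled in Remark~\ref{rem:boot:val}, yields the asserted joint unconditional convergence with independent limits.

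\textbf{Stage 2: passing to the detectors.} I would then deduce the three joint convergences for $(\Rb_{m,q},\hat\Rb_{m,q}^{[1]},\hat\Rb_{m,q}^{[2]})$, $(\Sb_{m,q},\hat\Sb_{m,q}^{[1]},\hat\Sb_{m,q}^{[2]})$ and $(\Tb_{m,q},\hat\Tb_{m,q}^{[1]},\hat\Tb_{m,q}^{[2]})$ by the (extended) continuous mapping theorem, using the representations~\eqref{eq:GbmH0},~\eqref{eq:Rmq:Gmq}--\eqref{eq:Tmq:Gmq} (valid under $H_0$) and the definitions~\eqref{eq:det:mult}. Since $\lambda_m(0,\cdot)$ and $\lambda_{m'}(0,\cdot)$ converge uniformly to the identity on $[0,T+1]$ and $q\ge\delta>0$ is continuous, the maps $\beta\mapsto\{\lambda(0,t)\beta(s,\cdot)-\lambda(0,s)\beta(t,\cdot)\}$ followed by division by $q\{\lambda(0,s),\lambda(0,t)\}$ are asymptotically equivalent to the fixed continuous map $\beta\mapsto\{t\beta(s,\cdot)-s\beta(t,\cdot)\}/q(s,t)$; applied coordinatewise this turns the conclusion of Stage~1 into $(\Gb_{m,q},\hat\Gb_{m,q}^{[1]},\hat\Gb_{m,q}^{[2]})\leadsto(\Gb_{C,q},\Gb_{C,q}^{[1]},\Gb_{C,q}^{[2]})$ in $\{\ell^\infty(\Delta\times[0,1]^d)\}^3$. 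The functional $f\mapsto\{t\mapsto\sup_{s\in[1,t]}\sup_{\bm u}|f(s,t,\bm u)|\}$ is $1$-Lipschitz from $\ell^\infty(\Delta\times[0,1]^d)$ to $\ell^\infty([1,T+1])$, so the continuous mapping theorem immediately gives the statement for the Kolmogorov--Smirnov-type detectors. For the Cramér--von Mises-type detectors one additionally feeds in the random integrating measures: $\sup_{t\in[1,T+1]}\|C_{1:\ip{mt}}-C\|_\infty\p 0$ and $\sup_{t\in[1,T+1]}\|C_{1:\ip{m't}}-C\|_\infty\p 0$, both of order $O_P(m^{-1/2})$ because $\lambda_m(0,t),\lambda_{m'}(0,t)\ge1$ for $t\ge1$ and the corresponding sequential empirical processes are asymptotically tight; by Slutsky these join the convergence of $\Gb_{m,q}$ and its replicates, and one applies the continuous mapping theorem to $(f,\mu)\mapsto\{t\mapsto\sup_{s\in[1,t]}\int f(s,t,\bm u)^2\,\dd\mu(\bm u)\}$ and to $(f,\mu)\mapsto\{t\mapsto\int_1^t\int f(s,t,\bm u)^2\,\dd\mu(\bm u)\,\dd s\}$, which are continuous at every point $(f,C)$ with $f$ continuous on $\Delta\times[0,1]^d$ --- the relevant limit points, by the a.s.\ uniform continuity of the sample paths of $\Gb_{C,q}$ guaranteed by its tightness.

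The step I expect to be the main obstacle is Stage~1: making the bookkeeping of the two rescalings fully rigorous --- $m'$ governing the replicates versus $m$ governing $\Bb_m$, with $n=\ip{m(T+1)}$ --- so that the theorem of \cite{BucKoj16} can be invoked essentially verbatim and the conditional limit of $\hat\Bb_m^{[b]}$ is correctly identified as $\Bb_C$ (same long-run covariance $\Gamma$) rather than a differently scaled Gaussian process. Everything in Stage~2 is routine once Stage~1 is in hand, the only mildly delicate point being the continuity of the Cramér--von Mises functionals in the integrating measure, handled through the equicontinuity of $\{\,\bm u\mapsto\Gb_{C,q}(s,t,\bm u)^2 : (s,t)\in\Delta\,\}$ coming from the uniform continuity of the paths of the limit.
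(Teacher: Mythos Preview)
Your proposal is correct, and Stage~2 is essentially what the paper does (it simply refers back to the proof of Proposition~\ref{prop:H0}).  Stage~1, however, follows a genuinely different route from the paper's proof.

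The paper applies Corollary~2.2 of \cite{BucKoj16} to the \emph{full} sample of size $n=\ip{m(T+1)}$, obtaining $(\Bb_n,\tilde\Bb_n^{[1]},\tilde\Bb_n^{[2]})\leadsto(\Bb_C,\Bb_C^{[1]},\Bb_C^{[2]})$ on $[0,1]^{d+1}$ with replicates centered by $C_{1:n}$.  Two successive rescalings are then performed: first the map $\psi$ (the one from the proof of Proposition~\ref{prop:H0}) to land on $[0,T+1]\times[0,1]^d$, turning the first component into $\Bb_m$ and the replicates into $\check\Bb_m^{[b]}$; second, a further rescaling $\phi$ applied to the replicate coordinates only, yielding $\bar\Bb_m^{[b]}(s,\bm u)=\frac{1}{\sqrt{m'}}\sum_{i=1}^{\ip{m's}}\xi_{i,m}^{[b]}\{\1(\bm U_i\le\bm u)-C_{1:n}(\bm u)\}$.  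Because this is still centered at $C_{1:n}$ rather than $C_{1:m}$, a separate argument is needed to show $\sup_{s,\bm u}|\hat\Bb_m^{[b]}-\bar\Bb_m^{[b]}|\overset{\Pr}{\to}0$: the paper bounds this by $J_m\times K_m$ with $J_m=\sqrt{m'}\sup_{\bm u}|C_{1:m}-C_{1:n}|=O_\Pr(1)$ and $K_m=(m')^{-1}\sup_s|\sum_{i\le\ip{m's}}\xi_{i,m}^{[b]}|$, and then carries out a fourth-moment computation exploiting the $\ell_m$-dependence of the multipliers together with $\ell_m=O(m^{1/2-\eps})$ to get $K_m\overset{\Pr}{\to}0$.

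Your route sidesteps this centering correction entirely: by applying \cite{BucKoj16} directly to the learning sample of size $m$, the replicates come out centered at $C_{1:m}$ from the start, and a single time-rescaling (by $m/m'\to T+1$) matches them with $\hat\Bb_m^{[b]}$ up to an $o_\Pr(1)$ floor-function error.  The price you pay is that the first component you obtain this way is \emph{not} the $\Bb_m$ of the proposition (which uses all $n$ observations); you repair this by extracting the conditional convergence of $(\hat\Bb_m^{[1]},\hat\Bb_m^{[2]})$ via one direction of the equivalence in Lemma~3.1 of \cite{BucKoj19}, and recombining it with $\Bb_m\leadsto\Bb_C$ from Proposition~\ref{prop:H0} via the other direction.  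This is legitimate because $\hat\Bb_m^{[b]}$ is a function of $\bm X_1,\dots,\bm X_m$ and the multipliers only, so its conditional law given $\bm X_1,\dots,\bm X_m$ coincides with that given the full data sequence, while $\Bb_m$ is measurable with respect to the latter.  Your approach is arguably cleaner in that it avoids the explicit moment bound; the paper's approach keeps all three components coupled throughout and never invokes the conditional/unconditional equivalence.
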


The last claims of the previous proposition along with the last claim of Proposition~\ref{prop:H0} and Proposition~\ref{prop:H1} are the assumptions of Theorem~\ref{thm:thresh:boot} for $\Db_m \in \{\Rb_{m,q}, \Sb_{m,q}, \Tb_{m,q}\}$. It follows that the sequential change-point detection procedures based on these detector functions carried out as explained in Section~\ref{sec:boot:estim} using the above dependent multiplier replicates are asymptotically valid under $H_0$ in~\eqref{eq:H0} and sequences of alternatives related to $H_1$ in~\eqref{eq:H1}. Note that, in practice, since in the considered approach $m'= \ip{ (m/n) m}$ and $m$ play the role of $m$ and $n$, respectively, the largest possible value for $p$, the number of steps of the estimated threshold function $\tau_m^B$, is $m - m'$ and, in this case, each of the $p$ estimated thresholds covers approximately $\ip{n/m}$ time steps in the monitoring.

\section{Monte Carlo experiments}
\label{sec:MC}

Large-scale Monte Carlo experiments were carried out to investigate the finite-sample properties of the studied sequential change-point detection procedures. The aim was in particular to try to answer the following questions:
\begin{itemize}

\item How well do the procedures hold their level, in particular, when the threshold functions are estimated using the dependent multiplier bootstrap of Section~\ref{sec:thresh:time:series}?

\item What is the influence of the number of steps $p$ of the estimated threshold function  (see Sections~\ref{sec:thresh:asm:val:formulation} and~\ref{sec:estim:thresh}) on the distribution of the false alarms?

\item What is the effect of $p$ on the power and the \emph{mean detection delay} (the latter is the expectation under $H_1$ of the difference between the time at which the change was detected and the time $k^\star$ at which the change really occurred)?

\item What is the effect of the parameter $\gamma$ appearing in the expression of the weight function $q$ defined in~\eqref{eq:q} on the power and mean detection delay?

\item How do the detectors $R_{m,q}$, $S_{m,q}$, $T_{m,q}$, $P_m$ and $Q_m$ defined in~\eqref{eq:Rmq},~\eqref{eq:Smq},~\eqref{eq:Tmq},~\eqref{eq:Pm} and~\eqref{eq:Qm}, respectively, compare in terms of power and mean detection delay?

\item How do the derived procedures compare with similar, more specialized procedures in terms of power and mean detection delay?

\end{itemize}

We tried to answer these questions in detail in the univariate independent case when the estimation of the threshold functions of the sequential change-point detection procedures can be rightfully so based on the Monte Carlo approach described in Sections~\ref{sec:MC:estim} and~\ref{sec:thresh:ind:univ}. When the observations to be monitored are not univariate or independent so that resampling as described in Section~\ref{sec:thresh:time:series} is needed for the estimation of the threshold functions, we essentially investigated how well the procedures hold their level depending on the underlying data generating mechanism. Although many other questions could be formulated given the complexity of the problem, the following experiments should allow the reader to grasp the main finite-sample properties of the studied procedures.

\subsection{Monte Carlo estimation in the independent univariate case}

As already discussed in Section~\ref{sec:MC:estim}, the estimation of the threshold functions when monitoring independent univariate observations can be made arbitrarily precise by increasing the number $M$ of Monte Carlo samples. We used the setting $M=10^5$ in our experiments and estimated all the rejection percentages from $10^4$ samples. The change-point detection procedures were always carried out at the $\alpha = 5\%$ nominal level. 


\begin{table}[t!]
\centering
\caption{Percentages of rejection of $H_0$ in~\eqref{eq:H0} for the sequential change-point detection procedures when based on the Monte Carlo estimation method with $M=10^5$ as described in Sections~\ref{sec:MC:estim} and~\ref{sec:thresh:ind:univ} for $p \in \{1,2,4,10,50\}$, $m \in \{50,100\}$ and $n=2m$. The rejection percentages are computed from $10^4$ samples of size $n=2m$ generated from the standard uniform distribution.} 
\label{tab:H0:sim}
\begingroup\footnotesize
\begin{tabular}{rrccccccccccc}
  \hline
  \multicolumn{2}{c}{} & \multicolumn{3}{c}{$T_{m,q}$} & \multicolumn{3}{c}{$S_{m,q}$} & \multicolumn{3}{c}{$R_{m,q}$} & \multicolumn{2}{c}{} \\ \cmidrule(lr){3-5} \cmidrule(lr){6-8} \cmidrule(lr){9-11}  $p$ & $m$ & $\gamma$=0 & $\gamma$=0.25 & $\gamma$=0.5 & $\gamma$=0 & $\gamma$=0.25 & $\gamma$=0.5 & $\gamma$=0 & $\gamma$=0.25 & $\gamma$=0.5 & $Q_m$ & $P_m$  \\ \hline
1 & 50 & 5.2 & 5.2 & 5.1 & 4.9 & 5.0 & 4.9 & 4.7 & 4.9 & 4.7 & 5.2 & 5.2 \\ 
   & 100 & 4.9 & 4.9 & 4.6 & 4.8 & 4.8 & 5.0 & 4.9 & 4.8 & 4.7 & 5.1 & 5.0 \\ 
  2 & 50 & 4.9 & 5.1 & 5.0 & 4.8 & 5.2 & 5.1 & 4.9 & 4.9 & 4.9 & 5.1 & 4.4 \\ 
   & 100 & 4.9 & 4.8 & 4.9 & 4.9 & 4.7 & 4.9 & 4.9 & 4.8 & 5.1 & 5.0 & 4.6 \\ 
  4 & 50 & 4.9 & 4.9 & 5.1 & 4.6 & 4.9 & 5.3 & 4.6 & 4.9 & 5.0 & 5.1 & 5.2 \\ 
   & 100 & 5.0 & 5.0 & 5.0 & 4.8 & 4.9 & 5.3 & 4.9 & 4.9 & 4.9 & 5.0 & 4.9 \\ 
  10 & 50 & 5.2 & 5.1 & 5.0 & 4.9 & 4.9 & 5.2 & 4.6 & 4.7 & 5.0 & 5.0 & 4.8 \\ 
   & 100 & 5.0 & 5.1 & 5.1 & 4.9 & 4.9 & 5.1 & 5.0 & 5.1 & 5.0 & 4.9 & 4.6 \\ 
  50 & 50 & 5.0 & 5.1 & 5.1 & 4.8 & 4.9 & 5.1 & 4.3 & 4.6 & 4.9 & 4.9 & 4.5 \\ 
   & 100 & 5.0 & 4.9 & 5.1 & 4.8 & 4.9 & 5.0 & 5.0 & 4.9 & 4.8 & 4.9 & 4.7 \\ 
   \hline
\end{tabular}
\endgroup
\end{table}

\begin{figure}
\begin{center}
  \includegraphics*[width=0.75\linewidth]{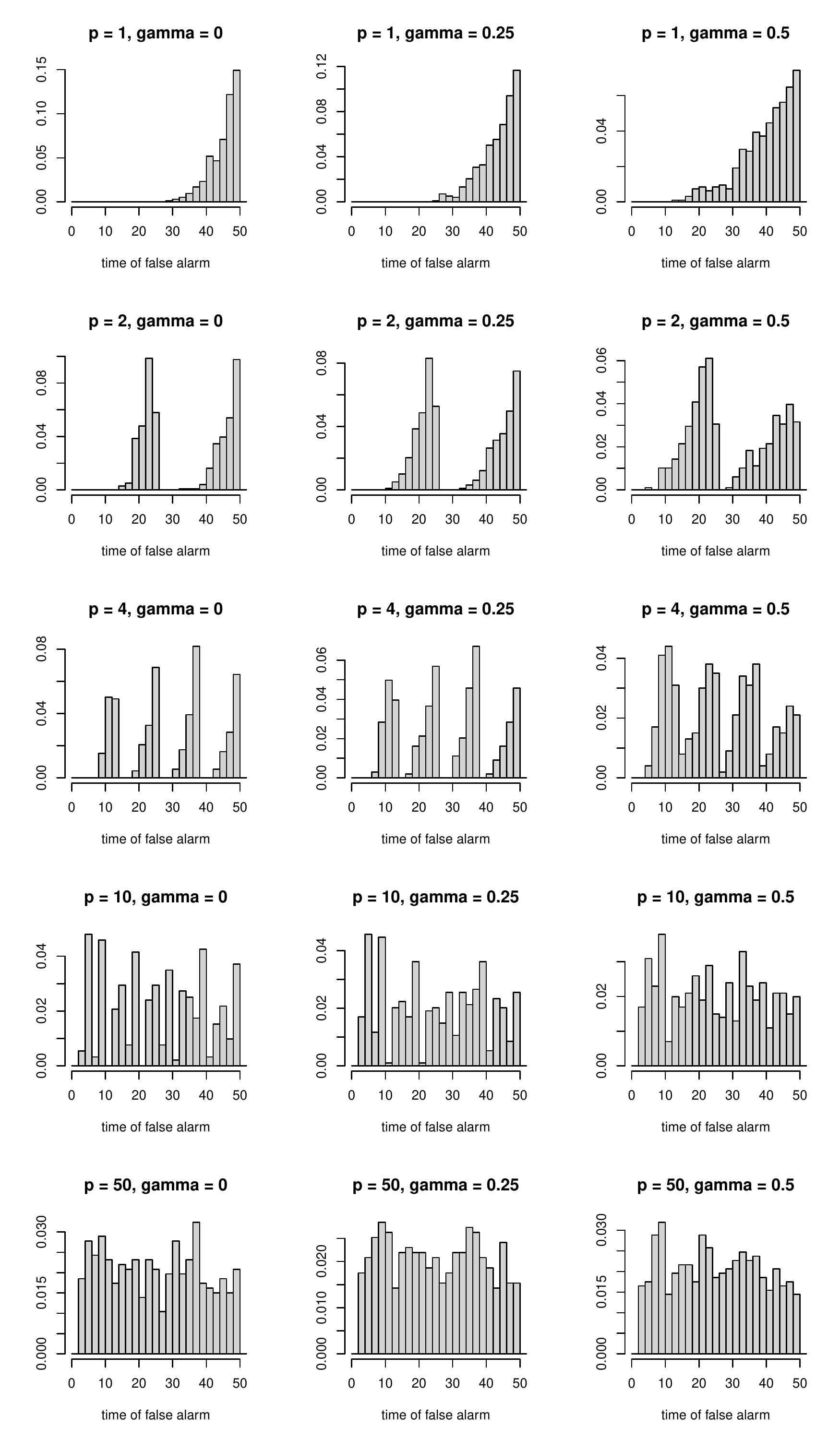}
  \caption{\label{fig:DistFalseAlarms_mmk_m_50_n_100} For the procedure based on $R_{m,q}$ in~\eqref{eq:Rmq} and Monte Carlo estimation with $M=10^5$, distribution of the false alarms (that is, of the time of rejection of $H_0$ in~\eqref{eq:H0}) for $m=50$ and $n=100$ obtained from $10^4$ samples of size $n$ from the standard uniform distribution. The histograms for $S_{m,q}$ in~\eqref{eq:Smq} and $T_{m,q}$ in~\eqref{eq:Tmq} are very similar.}
\end{center}
\end{figure}

\subsubsection{Under the null} Unsurprisingly, all the studied tests were found to hold their level very well as can for instance be seen by inspecting the rejection percentages reported in Table~\ref{tab:H0:sim}. Furthermore, as could have been expected given the fact that the studied detector functions have a tendency to be increasing on average, it was observed that the setting $p=1$ resulted in a concentration of false alarms at the end of the monitoring period, while the larger~$p$, the more uniform the distribution of the false alarms over the monitoring period. These unsurprising findings are for instance illustrated in Figure~\ref{fig:DistFalseAlarms_mmk_m_50_n_100} for the procedure based on $R_{m,q}$ in~\eqref{eq:Rmq}.

\begin{figure}[t!]
\begin{center}
  \includegraphics*[width=0.7\linewidth]{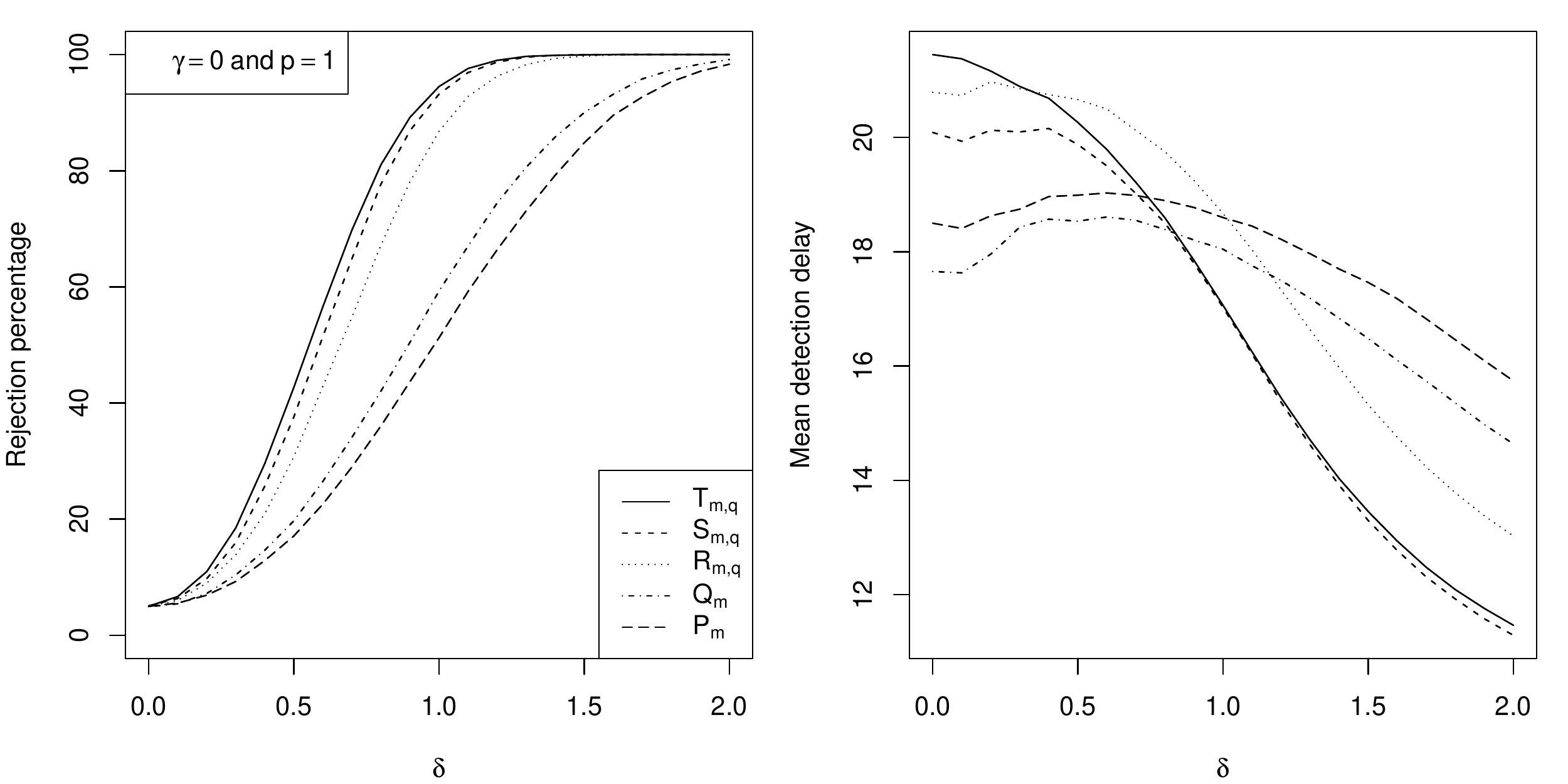} 
  \caption{\label{fig:CompDet_m_50_p_1} Left: estimated rejection probabilities of $H_0^\text{ind}$ in~\eqref{eq:H0:ind} under $H_1$ in~\eqref{eq:H1} with $m=50$, $k^\star = 75$, $n = 100$, $F$ the d.f.\ of the standard normal and $G$ the d.f.\ of the $N(\delta,1)$. Right: corresponding mean detection delays. The value of $\gamma$ in~\eqref{eq:q} is 0. The number of steps in the threshold functions is $p=1$.}
\end{center}
\end{figure}

\subsubsection{Change in mean} To answer the aforementioned questions related to the behavior of the procedures under $H_1$ in~\eqref{eq:H1}, we first considered a simple experiment consisting of a change in the expectation of a normal distribution. Specifically, $m$, $k^\star$, $n$, $F$ and $G$ in $H_1$ were taken equal to 50, 75, 100, the d.f.\ of the standard normal and the d.f.\ of the $N(\delta,1)$, respectively. The left graph in Figure~\ref{fig:CompDet_m_50_p_1} displays the estimated rejection percentages for the five detectors $R_{m,q}$, $S_{m,q}$, $T_{m,q}$, $P_m$ and $Q_m$ with $\gamma$ in~\eqref{eq:q} set to zero and $p=1$. The right graph represents the corresponding mean detection delays which were estimated only from the samples for which neither a false alarm was obtained (which occurs when the detector function becomes larger than the threshold function before the time of change $k^\star = 75$) nor the change was undetected (which occurs when the detector function remains below the threshold function during the entire monitoring period). Because the number of steps in the threshold function was set to $p=1$, the left graph of Figure~\ref{fig:CompDet_m_50_p_1} is directly comparable with the top left graph given in Figure~1 of \cite{DetGos19}. An inspection of the latter seems to indicate that the powers of the procedures based on $R_{m,q}$, $S_{m,q}$ and $T_{m,q}$ are not substantially different from those of the mean-specialized procedures considered in Section~5.1 of \cite{DetGos19}, even though the detectors $R_{m,q}$, $S_{m,q}$ and $T_{m,q}$ are not specifically designed to be sensitive to changes in the expectation. The graphs  are not substantially different for other values of $\gamma$ and $p$. Overall, the procedures based on $R_{m,q}$, $S_{m,q}$ and $T_{m,q}$ were always observed to be more powerful and superior in terms of mean detection delay than those based on $P_m$ and $Q_m$. The latter is in full accordance with the empirical observations of \cite{DetGos19} for more specialized procedures. Note that the procedure based on $T_{m,q}$ seems the most powerful for the alternative under consideration.

\begin{figure}[t!]
\begin{center}
  \includegraphics*[width=0.7\linewidth]{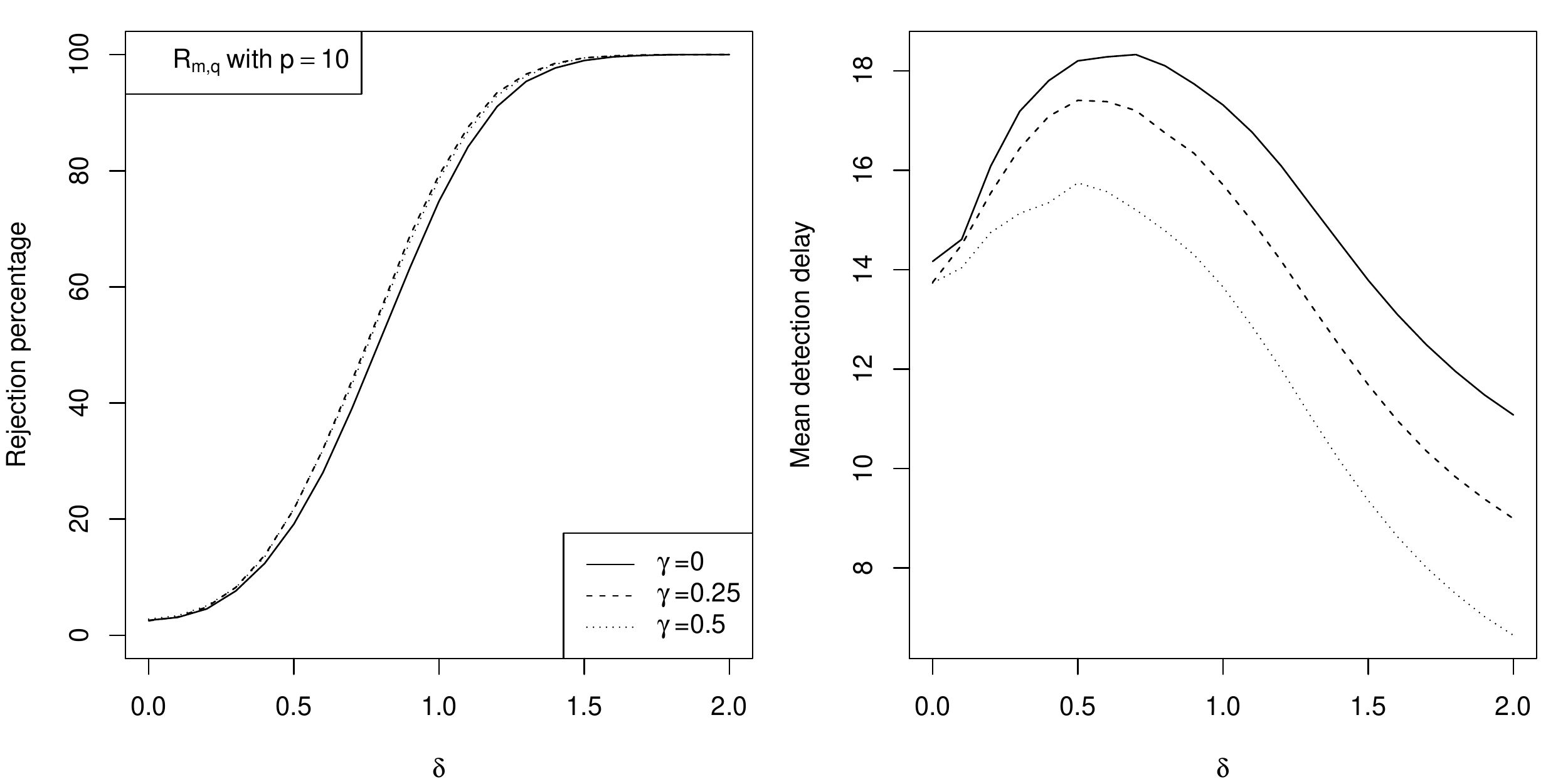} 
  \caption{\label{fig:GammaEffect_mmk_m_50} Left: estimated rejection probabilities of $H_0^\text{ind}$ in~\eqref{eq:H0:ind} under $H_1$ in~\eqref{eq:H1} with $m=50$, $k^\star = 75$, $n = 100$, $F$ the d.f.\ of the standard normal and $G$ the d.f.\ of the $N(\delta,1)$ for the procedure based on $R_{m,q}$ with $p=10$. Right: corresponding mean detection delays.}
\end{center}
\end{figure}

Figure~\ref{fig:GammaEffect_mmk_m_50} highlights the influence of the parameter $\gamma$ in~\eqref{eq:q} on the rejection percentages and the mean detection delays of the procedure based on $R_{m,q}$ with $p=10$. The graphs are very similar for other values of $p$ or for the procedures based on $S_{m,q}$ and $T_{m,q}$. The conclusion is the same for all three procedures. For a change in expectation, while the parameter $\gamma$ does not seem to affect the power of the procedures much, it has a clear influence on the mean detection delay: the greater $\gamma$, the shorter the mean detection delay. As we shall see, this behavior is not true for all types of alternatives.

\begin{figure}[t!]
\begin{center}
  \includegraphics*[width=0.7\linewidth]{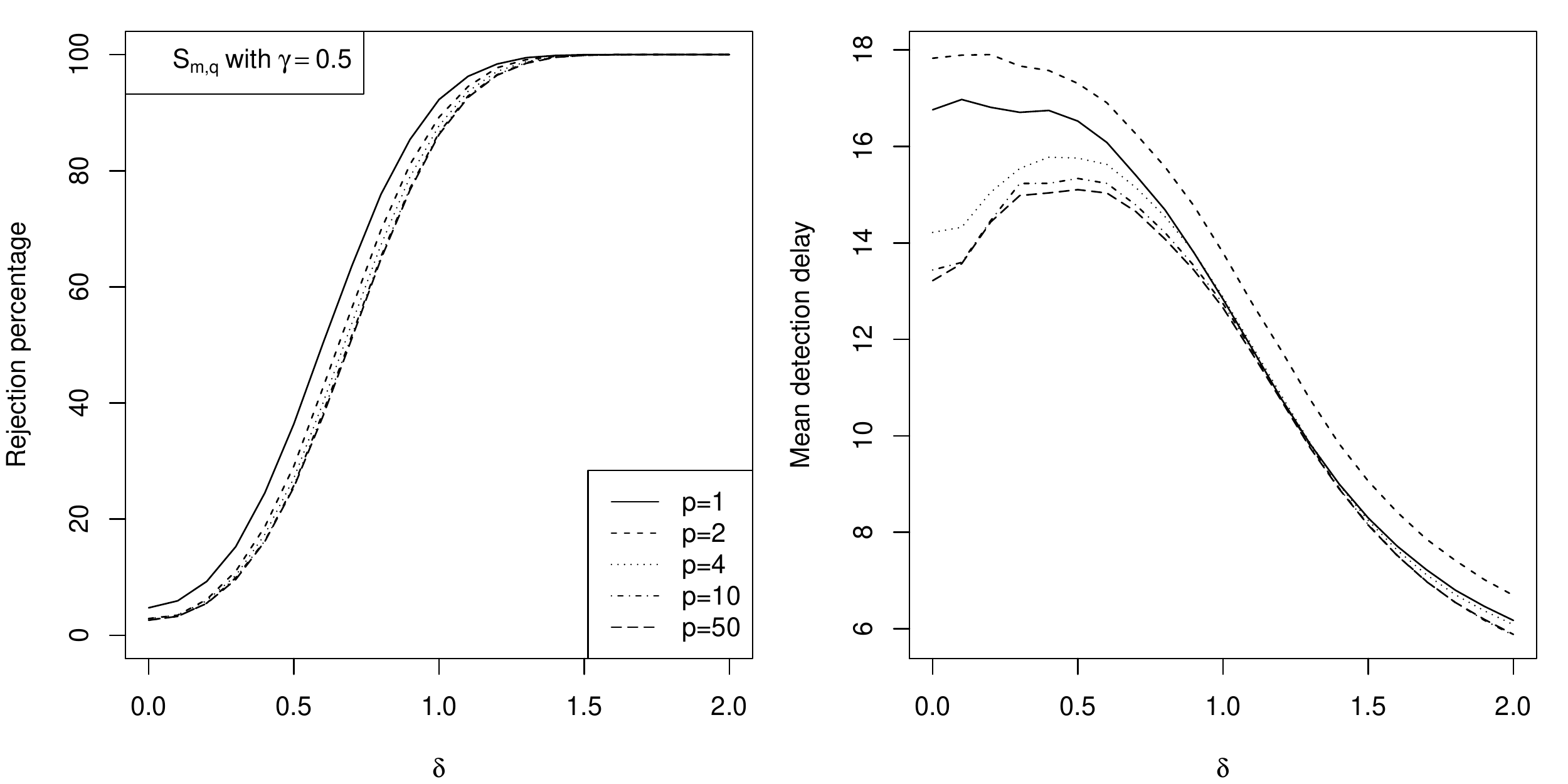} 
  \caption{\label{fig:pEffect_mmc_m_50_gamma_0.5} Left: estimated rejection probabilities of $H_0^\text{ind}$ in~\eqref{eq:H0:ind} under $H_1$ in~\eqref{eq:H1} with $m=50$, $k^\star = 75$, $n = 100$, $F$ the d.f.\ of the standard normal and $G$ the d.f.\ of the $N(\delta,1)$ for the procedure based on $S_{m,q}$ with $\gamma = 0.5$. Right: corresponding mean detection delays.}
\end{center}
\end{figure}

Figure~\ref{fig:pEffect_mmc_m_50_gamma_0.5} displays the influence of the number of steps $p$ of the threshold function on the rejection percentages and the mean detection delays of the procedure based on $S_{m,q}$ with $\gamma = 0.5$. The graphs are not qualitatively different for other values of $\gamma$ or for the procedures based on $R_{m,q}$ and $T_{m,q}$. Overall, the procedures with $p=1$ have the highest rejection percentages. The latter is due to the fact that, because $k^\star=75$, detections occur mostly at the end of the monitoring period, and, at the end of the monitoring interval, by construction, the threshold functions for $p=1$ are below the corresponding threshold functions obtained for larger values of $p$.

\subsubsection{Change in variance} The setting is similar to that of the previous experiment except that, this time, it is the variance of the normal distribution that changes from 1 to $\delta^2$. The left graph in Figure~\ref{fig:CompDet_m_50_p_50:var} displays the estimated rejection percentages for the procedures based on $R_{m,q}$, $S_{m,q}$, $T_{m,q}$, $P_m$ and $Q_m$ with $\gamma = 0.5$ and $p=50$. The graph on the right represents the corresponding mean detection delays. Again, the procedures based on $R_{m,q}$, $S_{m,q}$ and $T_{m,q}$ appear to be substantially more powerful and superior in terms of mean detection delay than those based on $P_m$ and $Q_m$. The conclusion remains true for all values of $\gamma$ and $p$. The influence of $\gamma$ and $p$ is of the same nature as in the case of a change in mean: the greater $\gamma$, the shorter the mean detection delay and the lower~$p$, the higher the power.

\begin{figure}[t!]
\begin{center}
  \includegraphics*[width=0.7\linewidth]{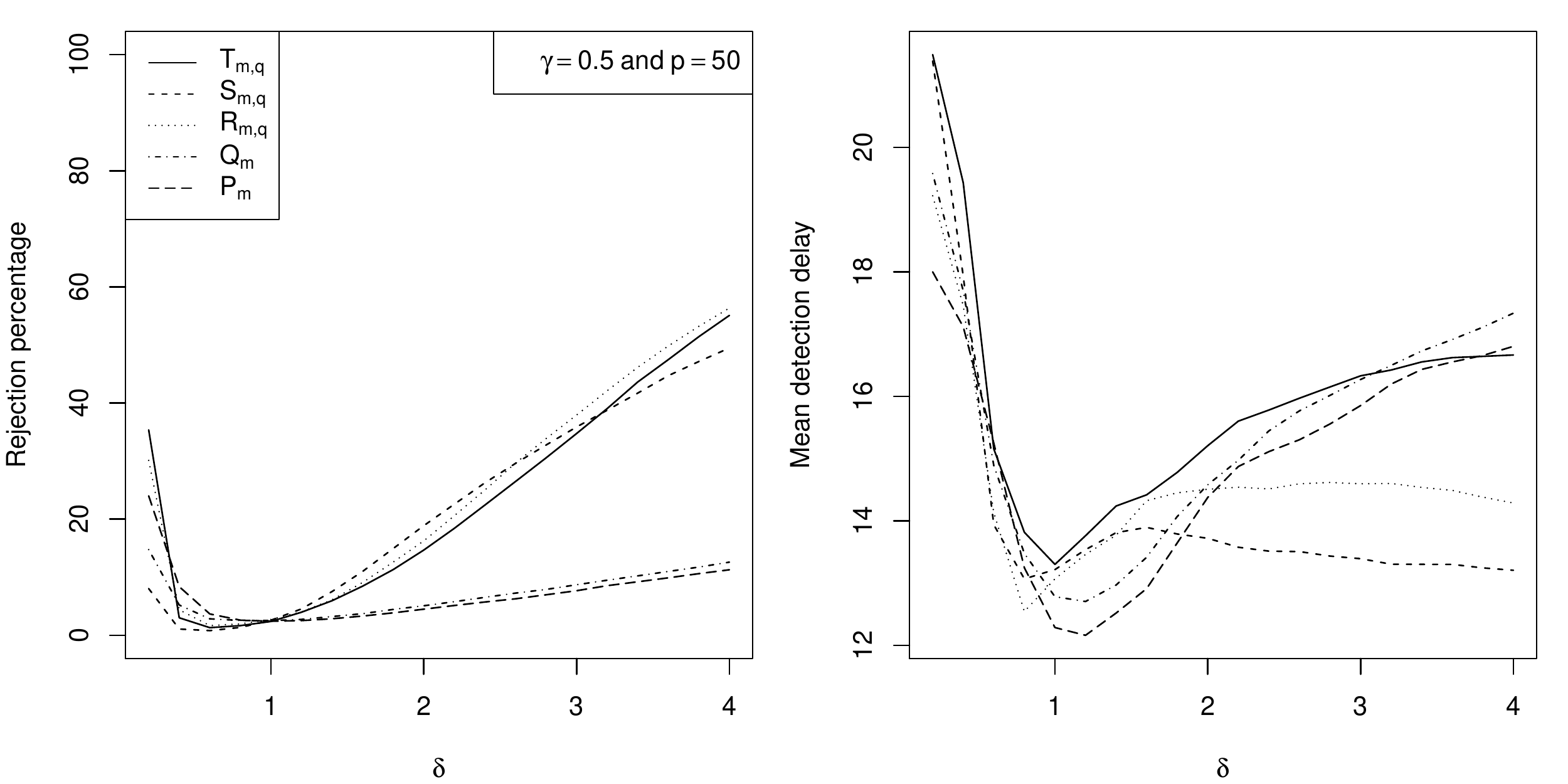} 
  \caption{\label{fig:CompDet_m_50_p_50:var} Left: estimated rejection probabilities of $H_0^\text{ind}$ in~\eqref{eq:H0:ind} under $H_1$ in~\eqref{eq:H1} with $m=50$, $k^\star = 75$, $n = 100$, $F$ the d.f.\ of the standard normal and $G$ the d.f.\ of the $N(0,\delta^2)$. Right: corresponding mean detection delays. The value of $\gamma$ in~\eqref{eq:q} is 0.5. The number of steps in the threshold functions is $p=50$.}
\end{center}
\end{figure}

\subsubsection{Change in distribution}  As a final experiment for independent univariate observations, we considered a change in distribution that keeps the expectation and the variance constant. Specifically, $F$ and $G$ in $H_1$ were taken equal to the d.f.\ of the $N(1,2)$ and the d.f.\ of the Gamma distribution whose shape and rate parameters are both equal to 1/2, respectively. The parameter $m$ was taken to be in $\{50,100\}$, $n$ was set to $2m$ and the parameter $k^\star$ to $\ip{nc}$ with $c \in \{0.51,0.56, \dots, 0.96\}$.

\begin{figure}[t!]
\begin{center}
  \includegraphics*[width=0.7\linewidth]{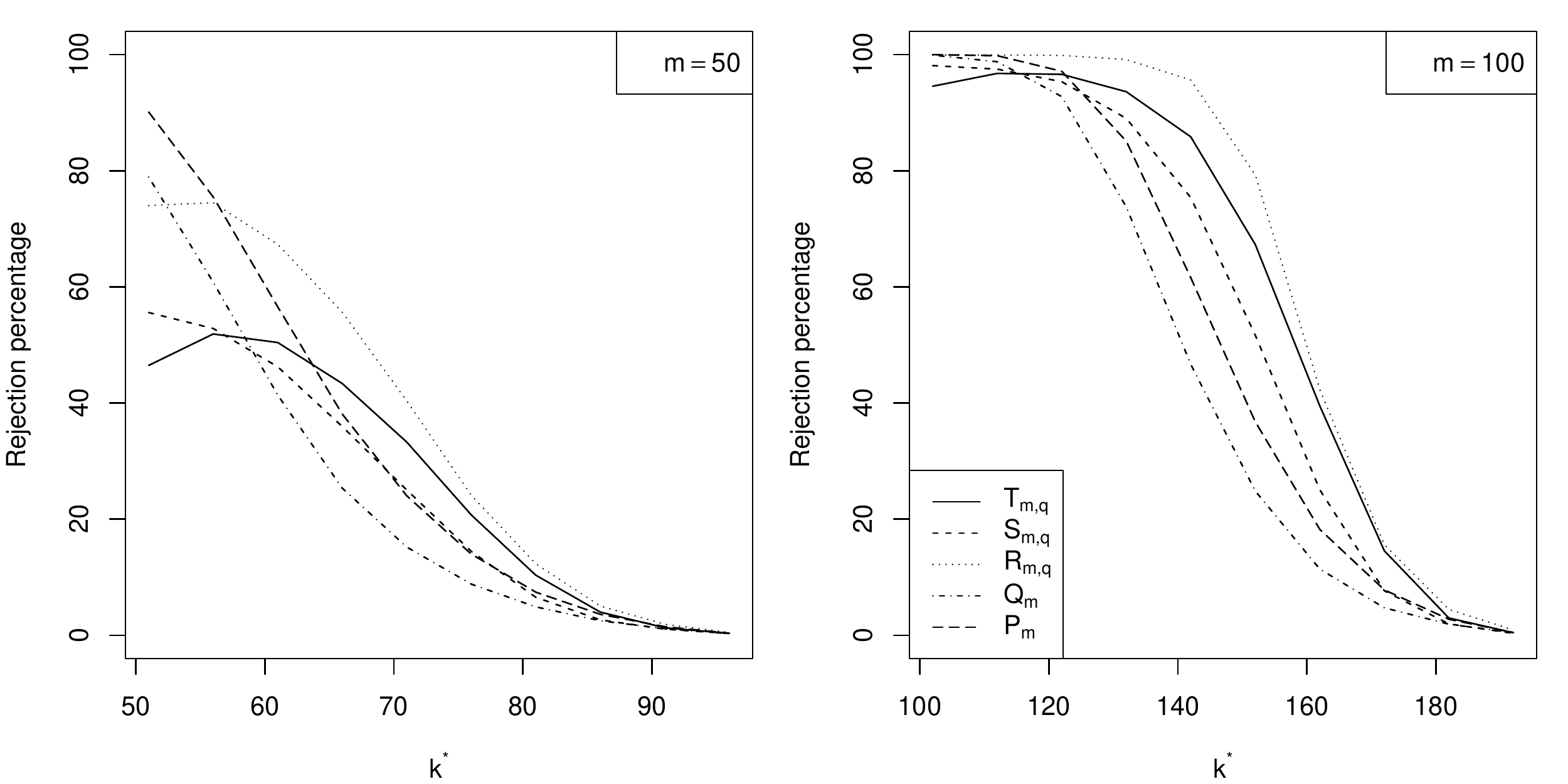} 
  \caption{\label{fig:CompDet_p_50:dist} Estimated rejection probabilities of $H_0^\text{ind}$ in~\eqref{eq:H0:ind} under $H_1$ in~\eqref{eq:H1} with $F$ the d.f.\ of  the $N(1,2)$ and $G$ the d.f.\ of the Gamma distribution whose shape and rate parameters are both equal to 1/2. Left: $m=50$ and $n=100$. Right: $m=100$ and $n=200$. The value of $\gamma$ in~\eqref{eq:q} is 0 and the number of steps in the threshold functions is $p=50$.}
\end{center}
\end{figure}

Figure \ref{fig:CompDet_p_50:dist} shows the rejection percentages of $H_0^\text{ind}$ in~\eqref{eq:H0:ind} against $k^\star$ for the procedures based on $R_{m,q}$, $S_{m,q}$, $T_{m,q}$, $P_m$ and $Q_m$ with $\gamma = 0$ and $p=50$. The graphs for other values of $\gamma$ and $p$ are very similar. As one can see, the procedures based on $P_m$ and $Q_m$ appear to be the most powerful when the change occurs at the beginning of the monitoring period. The latter could have been expected from the definition of the underlying detectors and the simulation results of \cite{DetGos19}.  As $k^\star$ increases, the procedures based on $R_{m,q}$ and $T_{m,q}$ become more powerful.

\begin{figure}[t!]
\begin{center}
  \includegraphics*[width=0.7\linewidth]{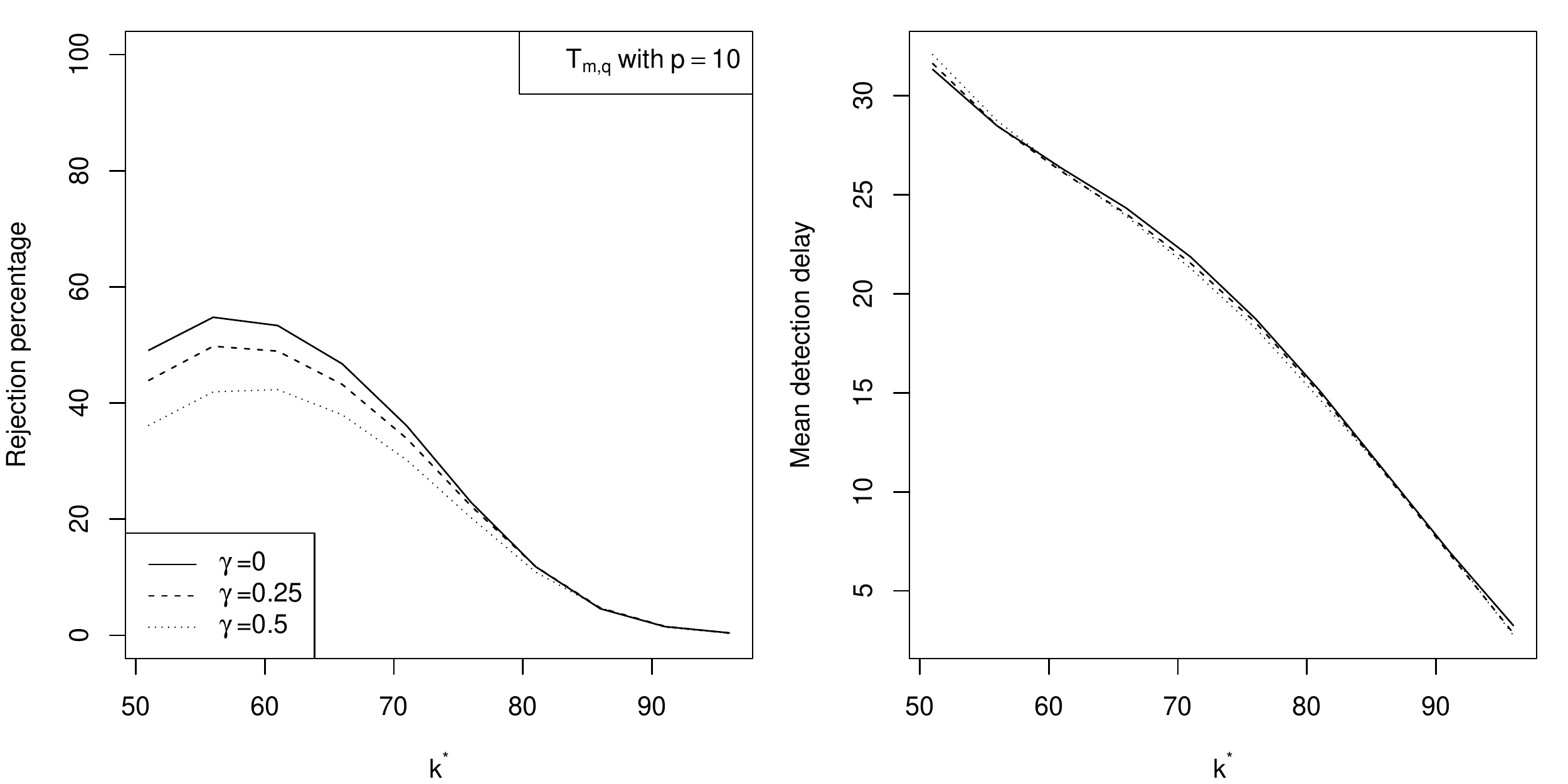} 
  \caption{\label{fig:GammaEffect_mac_m_50:dist} Left: estimated rejection probabilities of $H_0^\text{ind}$ in~\eqref{eq:H0:ind} under $H_1$ in~\eqref{eq:H1} with $m=50$, $n = 100$, $F$ the d.f.\ of  the $N(1,2)$ and $G$ the d.f.\ of the Gamma distribution whose shape and rate parameters are both equal to 1/2 for the procedure based on $T_{m,q}$ with $p=10$. Right: corresponding mean detection delays.}
\end{center}
\end{figure}

Figure \ref{fig:GammaEffect_mac_m_50:dist} displays the influence of the parameter $\gamma$ on the rejection percentages of the procedure based on $T_{m,q}$ with $p=10$. The graphs are not qualitatively different for other values of $p$ and $m$ or for the procedures based on $R_{m,q}$ and $S_{m,q}$. As one can see, unlike for a change in expectation, $\gamma$ has hardly no influence on the mean detection delay and it is the setting $\gamma = 0$ that leads to the highest rejection percentages.

\begin{figure}[t!]
\begin{center}
  \includegraphics*[width=0.7\linewidth]{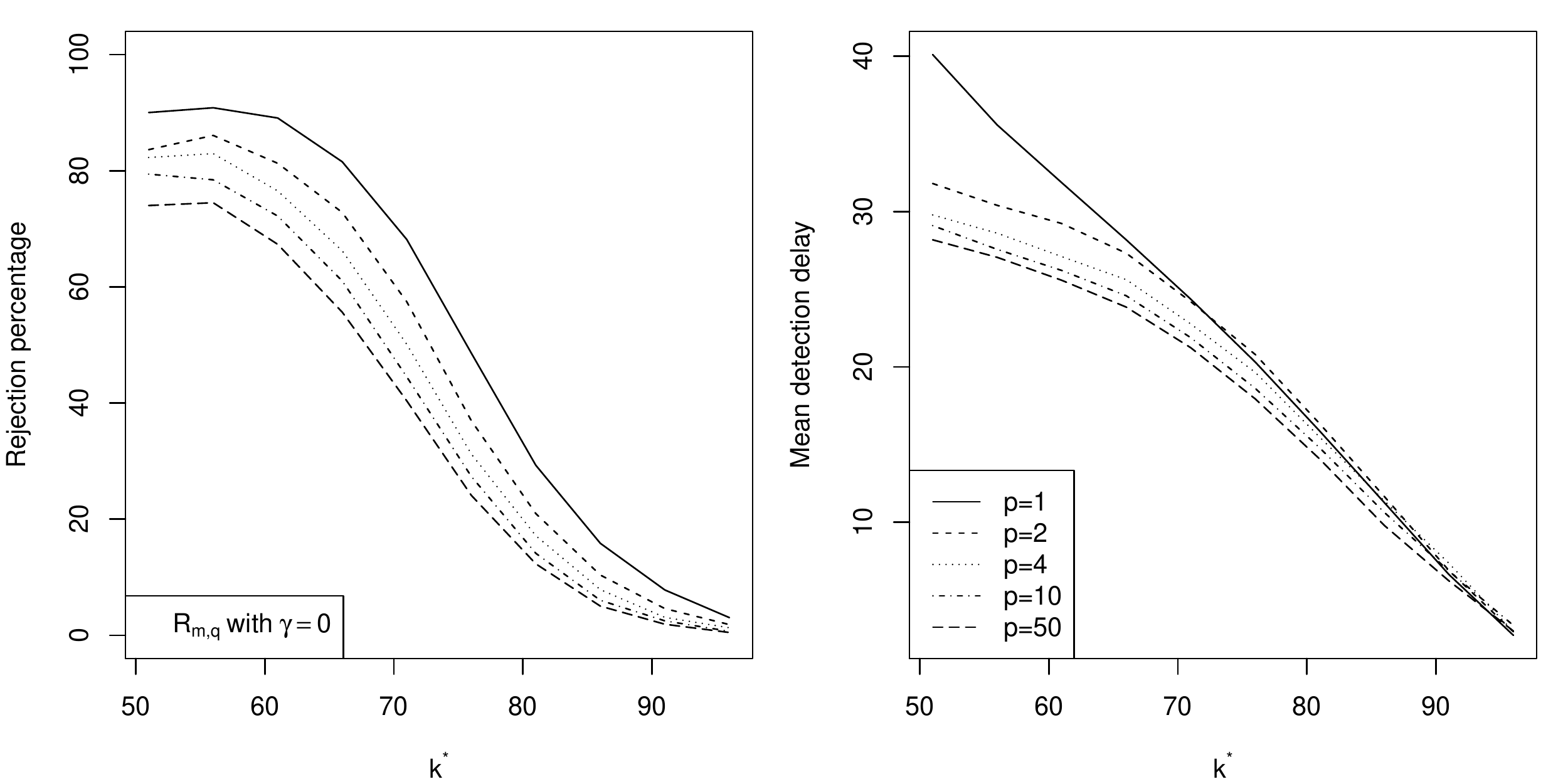} 
  \caption{\label{fig:pEffect_m_50_gamma_0:dist}  Left: estimated rejection probabilities of $H_0^\text{ind}$ in~\eqref{eq:H0:ind} under $H_1$ in~\eqref{eq:H1} with $m=50$, $n = 100$, $F$ the d.f.\ of  the $N(1,2)$ and $G$ the d.f.\ of the Gamma distribution whose shape and rate parameters are both equal to 1/2 for the procedure based on $R_{m,q}$ with $\gamma=0$. Right: corresponding mean detection delays.}
\end{center}
\end{figure}

Finally, Figure \ref{fig:pEffect_m_50_gamma_0:dist} shows the influence of $p$ for the procedure based on $R_{m,q}$ with $\gamma = 0$. The graphs are not qualitatively different for other values of $\gamma$ or for the procedures based on $S_{m,q}$ and $T_{m,q}$. As in previous experiments, the setting $p=1$ leads in the highest rejection percentages. However,  when the change occurs in the first third of the monitoring period, the mean detection delay for $p = 1$ is clearly substantially larger than for $p > 1$. Additional simulations show that the larger the monitoring period, the more pronounced this phenomenon.

\subsection{Dependent multiplier bootstrap-based estimation in the time series case}

The threshold function estimation approach based on the dependent multiplier bootstrap described in Section~\ref{sec:thresh:time:series} can in principle be used as soon as the observations to be monitored are either multivariate or serially dependent. We used the setting $B=2000$ in our experiments and estimated all the rejection percentages from 1000 samples at the $\alpha = 5\%$ nominal level.

\begin{table}[t!]
\centering
\caption{Percentages of rejection of $H_0$ in~\eqref{eq:H0} for the procedure based on $T_{m,q}$ with $\gamma \in \{0,0.25,0.5\}$ and $p \in \{1,2,4\}$ when based on the dependent multliplier bootstrap with $B=2000$. The rejection percentages are computed from 1000 samples of size $\ip{m(T+1)}$ generated from AR(1) models with autoregressive parameter $\beta \in \{0, 0.1, 0.3, 0.5 \}$.} 
\label{tab:H0:multARTmq}
\begingroup\footnotesize
\begin{tabular}{rrrrrrrrrrrrrrr}
  \hline
  \multicolumn{3}{c}{} & \multicolumn{3}{c}{$T=0.5$ ($n = 1.5m$)} & \multicolumn{3}{c}{$T=1$ ($n = 2m$)} & \multicolumn{3}{c}{$T=2$ ($n = 3m$)} & \multicolumn{3}{c}{$T=3$ ($n = 4m$)} \\ \cmidrule(lr){4-6} \cmidrule(lr){7-9} \cmidrule(lr){10-12} \cmidrule(lr){13-15}  $\beta$ & $\gamma$ & $m$ & $p$=1 & $p$=2 & $p$=4 & $p$=1 & $p$=2 & $p$=4 & $p$=1 & $p$=2 & $p$=4 & $p$=1 & $p$=2 & $p$=4 \\ \hline
0.0 & 0.00 & 50 & 4.4 & 6.0 & 5.4 & 6.6 & 5.9 & 5.8 & 3.6 & 3.7 & 2.9 & 5.2 & 4.9 & 4.4 \\ 
   &  & 100 & 5.6 & 5.2 & 4.9 & 7.3 & 7.6 & 6.4 & 6.2 & 6.1 & 6.1 & 6.2 & 5.8 & 7.2 \\ 
   &  & 200 & 6.1 & 5.2 & 5.5 & 5.1 & 5.1 & 6.1 & 5.1 & 5.3 & 6.2 & 6.9 & 6.2 & 5.2 \\ 
   &  & 400 & 5.5 & 5.9 & 5.3 & 7.5 & 6.9 & 7.1 &  &  &  &  &  &  \\ 
   & 0.25 & 50 & 4.1 & 5.0 & 4.6 & 5.5 & 4.9 & 4.9 & 3.4 & 4.1 & 3.0 & 6.1 & 5.1 & 4.2 \\ 
   &  & 100 & 5.0 & 4.6 & 4.3 & 6.5 & 6.2 & 6.1 & 6.2 & 5.6 & 5.8 & 6.2 & 5.8 & 6.7 \\ 
   &  & 200 & 5.5 & 5.5 & 5.0 & 5.0 & 4.3 & 5.4 & 5.0 & 5.3 & 5.9 & 6.7 & 5.9 & 5.3 \\ 
   &  & 400 & 4.8 & 5.9 & 4.9 & 7.3 & 6.7 & 6.2 &  &  &  &  &  &  \\ 
   & 0.50 & 50 & 2.8 & 3.1 & 2.7 & 3.6 & 2.5 & 2.9 & 2.9 & 2.5 & 3.0 & 5.1 & 3.2 & 3.3 \\ 
   &  & 100 & 3.7 & 2.7 & 3.2 & 4.6 & 4.3 & 4.6 & 5.3 & 4.2 & 4.3 & 5.1 & 4.9 & 4.9 \\ 
   &  & 200 & 3.9 & 3.0 & 3.1 & 3.6 & 2.6 & 3.4 & 4.2 & 4.7 & 4.6 & 5.9 & 4.8 & 3.6 \\ 
   &  & 400 & 3.7 & 3.6 & 3.2 & 6.2 & 5.7 & 4.6 &  &  &  &  &  &  \\ 
  0.1 & 0.00 & 50 & 4.3 & 4.8 & 6.7 & 7.3 & 8.1 & 9.2 & 4.3 & 4.9 & 3.8 & 4.6 & 5.6 & 4.3 \\ 
   &  & 100 & 5.4 & 6.2 & 7.4 & 6.5 & 7.8 & 7.6 & 6.2 & 5.8 & 6.1 & 7.9 & 6.1 & 7.5 \\ 
   &  & 200 & 6.5 & 7.0 & 7.5 & 6.2 & 6.9 & 7.9 & 6.3 & 6.7 & 6.4 & 7.4 & 8.4 & 8.6 \\ 
   &  & 400 & 6.1 & 5.7 & 7.1 & 5.3 & 5.7 & 6.4 &  &  &  &  &  &  \\ 
   & 0.25 & 50 & 3.6 & 4.0 & 4.7 & 7.2 & 7.1 & 7.2 & 4.4 & 5.4 & 4.3 & 5.0 & 5.9 & 4.4 \\ 
   &  & 100 & 5.4 & 5.6 & 6.1 & 5.5 & 6.9 & 6.5 & 6.0 & 5.2 & 4.9 & 7.7 & 5.8 & 7.0 \\ 
   &  & 200 & 5.9 & 6.0 & 6.5 & 5.7 & 7.2 & 7.2 & 6.4 & 6.9 & 6.6 & 7.5 & 8.2 & 8.6 \\ 
   &  & 400 & 5.8 & 6.0 & 6.6 & 5.6 & 5.0 & 6.3 &  &  &  &  &  &  \\ 
   & 0.50 & 50 & 2.7 & 2.5 & 2.9 & 5.0 & 5.0 & 4.9 & 4.0 & 4.4 & 4.5 & 4.1 & 4.6 & 3.5 \\ 
   &  & 100 & 3.6 & 3.4 & 3.8 & 4.9 & 4.9 & 4.6 & 4.9 & 3.9 & 3.5 & 6.6 & 4.7 & 5.5 \\ 
   &  & 200 & 3.7 & 3.7 & 3.9 & 5.3 & 5.2 & 5.1 & 5.8 & 5.8 & 5.8 & 7.0 & 6.9 & 6.5 \\ 
   &  & 400 & 4.8 & 4.7 & 4.2 & 4.5 & 4.0 & 4.7 &  &  &  &  &  &  \\ 
  0.3 & 0.00 & 50 & 9.3 & 9.3 & 10.0 & 10.8 & 11.7 & 12.9 & 6.4 & 6.5 & 6.5 & 5.9 & 6.9 & 6.2 \\ 
   &  & 100 & 8.4 & 10.1 & 10.2 & 8.3 & 9.5 & 11.1 & 7.4 & 7.6 & 8.3 & 7.7 & 8.8 & 10.2 \\ 
   &  & 200 & 7.4 & 7.7 & 8.5 & 7.0 & 8.0 & 10.1 & 7.2 & 6.8 & 8.9 & 7.1 & 8.6 & 8.9 \\ 
   &  & 400 & 5.7 & 7.7 & 8.4 & 7.0 & 8.1 & 8.5 &  &  &  &  &  &  \\ 
   & 0.25 & 50 & 8.9 & 8.4 & 9.0 & 10.2 & 11.0 & 12.0 & 7.1 & 7.2 & 7.1 & 6.3 & 7.1 & 7.8 \\ 
   &  & 100 & 8.3 & 9.3 & 9.8 & 8.3 & 9.5 & 10.3 & 7.5 & 7.7 & 8.1 & 7.8 & 9.2 & 10.3 \\ 
   &  & 200 & 7.1 & 6.7 & 7.6 & 7.3 & 8.0 & 10.2 & 7.3 & 6.6 & 9.0 & 7.2 & 8.8 & 9.6 \\ 
   &  & 400 & 5.9 & 6.8 & 7.6 & 7.1 & 7.9 & 7.8 &  &  &  &  &  &  \\ 
   & 0.50 & 50 & 5.7 & 6.0 & 6.5 & 9.3 & 9.1 & 10.2 & 7.4 & 6.5 & 6.6 & 6.5 & 7.6 & 8.0 \\ 
   &  & 100 & 6.1 & 6.4 & 8.4 & 7.7 & 6.2 & 7.7 & 6.7 & 7.5 & 7.2 & 6.6 & 7.8 & 8.1 \\ 
   &  & 200 & 5.0 & 4.9 & 5.1 & 6.3 & 6.4 & 8.1 & 6.4 & 5.3 & 6.9 & 6.0 & 6.2 & 7.9 \\ 
   &  & 400 & 4.2 & 5.3 & 4.6 & 5.9 & 5.9 & 5.5 &  &  &  &  &  &  \\ 
  0.5 & 0.00 & 50 & 9.6 & 11.7 & 13.5 & 12.8 & 13.2 & 17.3 & 9.8 & 9.8 & 9.1 & 8.5 & 11.7 & 11.7 \\ 
   &  & 100 & 9.6 & 12.2 & 13.4 & 11.5 & 14.0 & 14.7 & 9.7 & 9.8 & 13.4 & 11.4 & 12.3 & 14.0 \\ 
   &  & 200 & 8.6 & 10.0 & 12.4 & 7.5 & 8.7 & 10.9 & 8.2 & 10.5 & 11.1 & 8.4 & 9.1 & 10.5 \\ 
   &  & 400 & 7.5 & 8.9 & 11.3 & 5.7 & 7.5 & 9.4 &  &  &  &  &  &  \\ 
   & 0.25 & 50 & 9.6 & 11.5 & 12.4 & 12.5 & 13.3 & 16.6 & 10.3 & 10.8 & 10.7 & 9.6 & 12.7 & 13.1 \\ 
   &  & 100 & 9.2 & 11.0 & 12.2 & 11.3 & 13.4 & 14.0 & 10.2 & 10.2 & 13.4 & 11.5 & 12.3 & 14.3 \\ 
   &  & 200 & 7.9 & 9.7 & 11.1 & 6.8 & 7.8 & 10.1 & 8.7 & 10.4 & 11.5 & 8.4 & 9.4 & 10.8 \\ 
   &  & 400 & 6.7 & 8.0 & 10.6 & 5.6 & 7.7 & 8.8 &  &  &  &  &  &  \\ 
   & 0.50 & 50 & 7.7 & 9.6 & 10.2 & 10.9 & 10.9 & 15.1 & 9.7 & 10.3 & 10.5 & 9.7 & 12.3 & 13.5 \\ 
   &  & 100 & 6.8 & 7.6 & 9.8 & 9.5 & 9.8 & 11.4 & 8.2 & 9.7 & 12.4 & 10.2 & 11.1 & 12.0 \\ 
   &  & 200 & 5.8 & 6.8 & 8.7 & 6.1 & 5.8 & 7.1 & 7.5 & 8.5 & 9.7 & 7.6 & 8.6 & 9.7 \\ 
   &  & 400 & 5.2 & 6.2 & 7.4 & 4.8 & 5.3 & 6.5 &  &  &  &  &  &  \\ 
   \hline
\end{tabular}
\endgroup
\end{table}

\subsubsection{Under the null} One of the most important practical aspects is to assess how well the procedures hold their level when based on the dependent multiplier bootstrap. To attempt to answer this question, we conducted extensive simulations in the univariate case. For $m \in \{50,100,200,400\}$ and $T \in \{0.5,1,2,3\}$, we generated samples of size $n = \ip{m(T+1)}$ from an AR(1) model with normal innovations and autoregressive parameter $\beta \in \{0,0.1,0.3,0.5\}$, and estimated the levels of the procedures based on $R_{m,q}$, $S_{m,q}$ and $T_{m,q}$ with $\gamma \in \{0,0.25,0.5\}$ and $p \in \{1,2,4\}$. The rejection percentages of $H_0$ in~\eqref{eq:H0} for $T_{m,q}$ are given in Table~\ref{tab:H0:multARTmq} (the missing entries in the table correspond to parameter settings for which computations took too long given our computer cluster resources). As one can see, unsurprisingly, the larger $\beta$, the more liberal the procedure based on $T_{m,q}$ tends to be. This phenomenon is particularly visible for $\beta \in \{0.3, 0.5\}$. Reassuringly however, for a given $\beta$, $T$, $p$ and $\gamma$, the estimated levels seem to get closer to the 5\% nominal level as $m$ increases. Hence, unsurprisingly, the stronger the serial dependence, the larger $m$ needs to be so that the procedure can be expected to hold its nominal level. For $\beta  \in \{0.3, 0.5\}$ in particular and keeping $T$, $\gamma$ and $m$ fixed, we also see that the larger $p$, the more liberal the procedure based on $T_{m,q}$ tends to be. The latter could be explained by the fact that, as $p$ increases, more thresholds need to be estimated, and that, except for the first, all the estimated thresholds are conditional empirical quantiles: the precision of the estimation of a threshold thus critically depends on the precision of the previously estimated thresholds with respect to which conditioning is performed. In other words, the fact that the empirical levels tend to become higher when $p$ increases could be explained by an error propagation effect. Finally, for $\beta$, $T$, $p$ and $m$ fixed, we also see that the larger $\gamma$, the lower the rejection percentages tend to be. Overall, for $\beta \geq 0.1$, the procedure based on $T_{m,q}$ holds its level best for $\gamma = 0.5$.

The conclusions for the procedures based on $R_{m,q}$ and $S_{m,q}$ are very similar, with the exception that the effect of $\gamma$ seems ``stronger'': while the empirical levels are better for $\gamma = 0.25$ than for $\gamma = 0$, the procedures become way too conservative for $\gamma = 0.5$. The latter effect might be due to the fact the detectors $R_{m,q}$ and $S_{m,q}$ involve maxima (unlike $T_{m,q}$ which involves means) and to our too low setting of the constant $\delta$ in the definition of the weight function~\eqref{eq:q}. The latter was arbitrarily set to $10^{-4}$ (and had \emph{de facto} no effect in our Monte Carlo experiments given the values of $m$ that we considered).

\begin{table}[t!]
\centering
\caption{Percentages of rejection of $H_0$ in~\eqref{eq:H0} for the procedure based on $T_{m,q}$ with $\gamma \in \{0,0.25,0.5\}$ and $p \in \{1,2,4\}$ when based on the dependent multliplier bootstrap with $B=2000$. The rejection percentages are computed from 1000 samples of size $\ip{m(T+1)}$ generated from a GARCH(1, 1) model with parameters $\omega = 0.012$, $\beta = 0.919$ and $\alpha = 0.072$ to mimick SP500 daily logreturns following \cite{JonPooRoc07}.} 
\label{tab:H0:multGARCHTmq}
\begingroup\footnotesize
\begin{tabular}{rrrrrrrrrrrrrr}
  \hline
  \multicolumn{2}{c}{} & \multicolumn{3}{c}{$T=0.5$ ($n = 1.5m$)} & \multicolumn{3}{c}{$T=1$ ($n = 2m$)} & \multicolumn{3}{c}{$T=2$ ($n = 3m$)} & \multicolumn{3}{c}{$T=3$ ($n = 4m$)} \\ \cmidrule(lr){3-5} \cmidrule(lr){6-8} \cmidrule(lr){9-11} \cmidrule(lr){12-14}  $\gamma$ & $m$ & $p$=1 & $p$=2 & $p$=4 & $p$=1 & $p$=2 & $p$=4 & $p$=1 & $p$=2 & $p$=4 & $p$=1 & $p$=2 & $p$=4 \\ \hline
0.00 & 50 & 5.4 & 5.4 & 4.9 & 9.1 & 7.6 & 7.6 & 4.8 & 4.7 & 4.0 & 5.4 & 6.3 & 5.3 \\ 
   & 100 & 6.6 & 7.5 & 7.6 & 7.9 & 8.6 & 7.2 & 9.6 & 8.4 & 9.2 & 10.5 & 10.3 & 10.0 \\ 
   & 200 & 7.4 & 7.2 & 8.6 & 8.3 & 8.4 & 8.5 & 6.6 & 7.7 & 6.1 & 8.7 & 8.0 & 8.2 \\ 
   & 400 & 6.7 & 7.5 & 8.4 & 7.0 & 7.0 & 8.4 &  &  &  &  &  &  \\ 
  0.25 & 50 & 4.7 & 4.2 & 4.3 & 7.5 & 5.9 & 5.6 & 4.8 & 4.5 & 4.2 & 5.8 & 6.4 & 5.9 \\ 
   & 100 & 5.9 & 6.3 & 6.3 & 7.2 & 6.9 & 6.7 & 9.6 & 8.8 & 8.4 & 10.2 & 9.8 & 9.2 \\ 
   & 200 & 7.0 & 6.9 & 7.8 & 7.2 & 7.3 & 8.2 & 6.4 & 6.9 & 5.9 & 8.3 & 7.6 & 8.4 \\ 
   & 400 & 6.2 & 7.9 & 7.9 & 6.9 & 6.7 & 8.4 &  &  &  &  &  &  \\ 
  0.50 & 50 & 2.4 & 2.5 & 2.6 & 4.4 & 3.9 & 4.2 & 4.0 & 3.9 & 3.5 & 5.0 & 4.7 & 4.6 \\ 
   & 100 & 4.1 & 4.2 & 4.5 & 5.0 & 4.1 & 3.9 & 6.6 & 5.9 & 6.0 & 7.9 & 7.5 & 7.0 \\ 
   & 200 & 4.9 & 5.0 & 5.2 & 5.7 & 4.8 & 5.7 & 5.8 & 5.1 & 4.9 & 6.8 & 6.6 & 7.1 \\ 
   & 400 & 5.1 & 5.4 & 5.1 & 6.3 & 5.9 & 5.9 &  &  &  &  &  &  \\ 
   \hline
\end{tabular}
\endgroup
\end{table}

A similar experiment was conducted by generating samples from a GARCH(1,1) model with parameters $\omega = 0.012$, $\beta = 0.919$ and $\alpha = 0.072$ to mimic SP500 daily log-returns following \cite{JonPooRoc07}. The empirical levels for the procedure based on $T_{m,q}$ are reported in Table~\ref{tab:H0:multGARCHTmq} and appear to be closest to the 5\% nominal overall when $\gamma = 0.5$.

\begin{table}[t!]
\centering
\caption{Percentages of rejection of $H_0$ in~\eqref{eq:H0} for the procedure based on $T_{m,q}$ with $\gamma \in \{0,0.25,0.5\}$ and $p \in \{1,2,4\}$ when based on the dependent multliplier bootstrap with $B=2000$. The rejection percentages are computed from 1000 samples of size $2m$ generated from a normal copula with a Kendall's tau of $\tau \in \{-0.6, -0.3, 0, 0.3, 0.6 \}$.} 
\label{tab:H0:multcop}
\begingroup\footnotesize
\begin{tabular}{rrrrrrrrrrr}
  \hline
  \multicolumn{2}{c}{} & \multicolumn{3}{c}{$\gamma = 0$} & \multicolumn{3}{c}{$\gamma=0.25$} & \multicolumn{3}{c}{$\gamma=0.5$}  \\ \cmidrule(lr){3-5} \cmidrule(lr){6-8} \cmidrule(lr){9-11} $\tau$ & $m$ & $p$=1 & $p$=2 & $p$=4 & $p$=1 & $p$=2 & $p$=4 & $p$=1 & $p$=2 & $p$=4  \\ \hline
-0.6 & 50 & 3.1 & 1.9 & 1.5 & 2.4 & 1.1 & 1.1 & 0.7 & 0.3 & 0.8 \\ 
   & 100 & 3.5 & 2.9 & 2.2 & 2.5 & 2.1 & 1.7 & 1.2 & 1.0 & 0.9 \\ 
   & 200 & 4.0 & 3.2 & 3.1 & 3.0 & 2.4 & 2.1 & 0.9 & 1.0 & 1.0 \\ 
  -0.3 & 50 & 4.5 & 2.9 & 2.7 & 3.4 & 2.5 & 2.5 & 1.8 & 1.5 & 1.2 \\ 
   & 100 & 4.0 & 4.0 & 3.8 & 3.1 & 3.0 & 2.9 & 1.5 & 0.9 & 1.4 \\ 
   & 200 & 4.8 & 4.1 & 4.0 & 4.1 & 3.3 & 2.9 & 1.6 & 1.6 & 1.6 \\ 
  0.0 & 50 & 5.1 & 4.3 & 4.7 & 4.2 & 3.9 & 3.6 & 2.7 & 2.0 & 2.3 \\ 
   & 100 & 7.0 & 5.8 & 4.7 & 5.7 & 4.7 & 3.7 & 2.6 & 2.2 & 2.0 \\ 
   & 200 & 6.0 & 6.2 & 5.8 & 5.3 & 5.0 & 5.2 & 3.3 & 2.4 & 2.3 \\ 
  0.3 & 50 & 6.1 & 4.5 & 4.5 & 4.7 & 3.8 & 3.3 & 2.5 & 2.4 & 2.6 \\ 
   & 100 & 5.0 & 5.5 & 4.9 & 4.4 & 4.8 & 3.9 & 2.3 & 2.5 & 2.5 \\ 
   & 200 & 6.2 & 6.3 & 7.8 & 6.0 & 4.8 & 6.3 & 3.4 & 2.6 & 3.8 \\ 
  0.6 & 50 & 6.7 & 5.5 & 5.8 & 5.8 & 3.9 & 4.2 & 3.2 & 2.6 & 2.5 \\ 
   & 100 & 7.7 & 7.6 & 7.8 & 7.0 & 6.6 & 7.0 & 4.3 & 4.0 & 4.8 \\ 
   & 200 & 5.4 & 6.2 & 7.2 & 5.1 & 5.6 & 6.1 & 3.4 & 3.5 & 3.2 \\ 
   \hline
\end{tabular}
\endgroup
\end{table}

Finally, a bivariate experiment with independent observations consisting of generating samples of size $n = 2m$ for $m \in \{50, 100, 200\}$ from a normal copula with a Kendall's tau of $\tau \in \{-0.6, -0.3, 0, 0.3, 0.6 \}$  was carried out. The empirical levels for the procedure based on $T_{m,q}$ are reported in Table~\ref{tab:H0:multcop}. The effect of $\gamma$ appears as in the previous experiments. For fixed $\gamma$ and $p$, it can also be observed that the procedure has a tendency of being too conservative in the case of strong negative dependence but, reassuringly, the agreement with the 5\% nominal level seems to improve as $m$ increases.

\begin{figure}[t!]
\begin{center}
  \includegraphics*[width=0.7\linewidth]{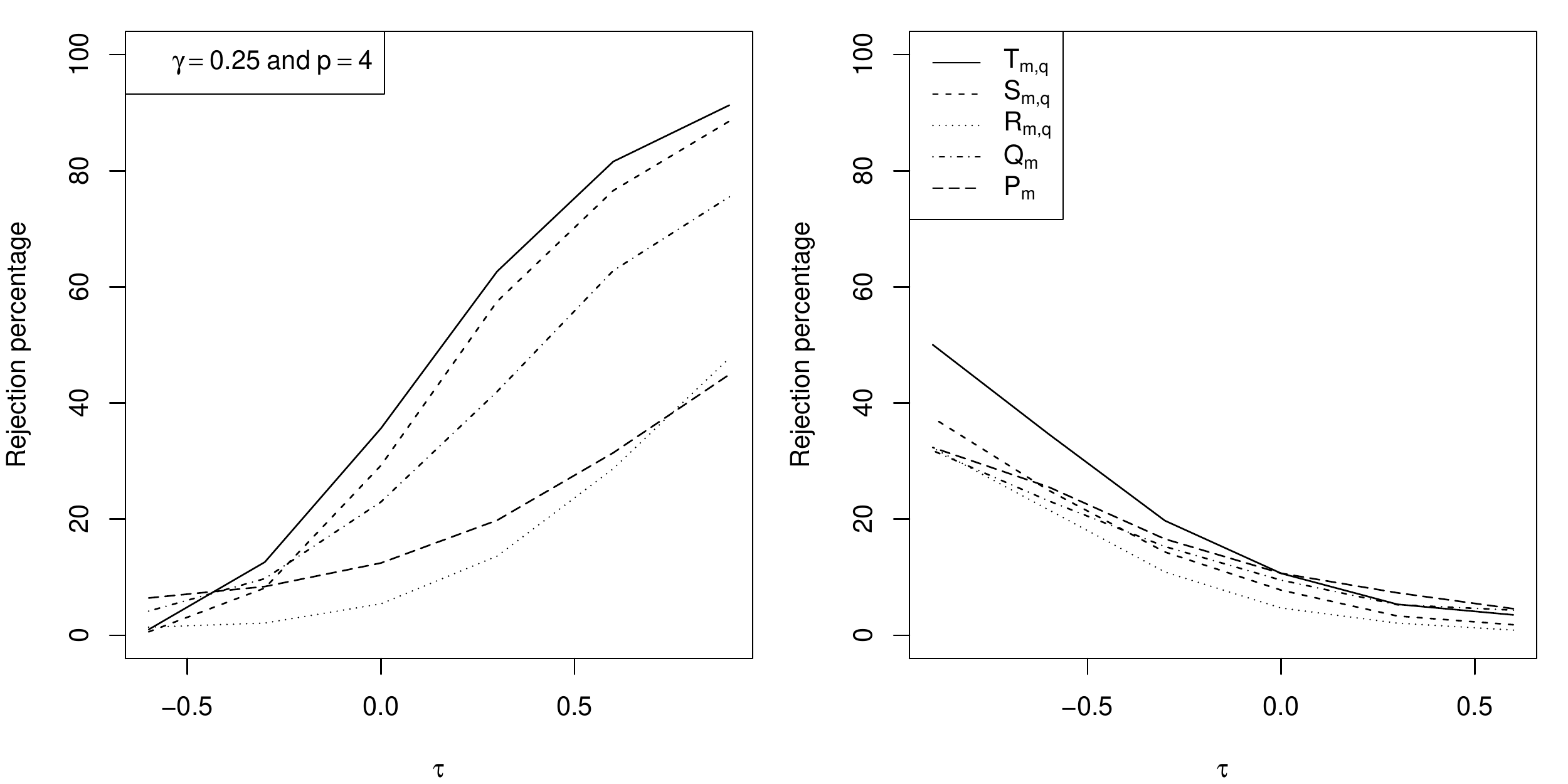}
  \caption{\label{fig:copula} Left: estimated rejection probabilities of $H_0^\text{ind}$ in~\eqref{eq:H0:ind} under $H_1$ in~\eqref{eq:H1} with $m=50$, $k^\star = 75$, $n = 100$, $F$ the bivariate normal copula with a Kendall's tau of -0.6 and $G$ the bivariate normal copula with a Kendall's tau of $\tau$. Right: estimated rejection probabilities of $H_0^\text{ind}$ in~\eqref{eq:H0:ind} under $H_1$ in~\eqref{eq:H1} with $m=50$, $k^\star = 75$, $n = 100$, $F$ the bivariate normal copula with a Kendall's tau of 0.6 and $G$ the bivariate normal copula with a Kendall's tau of $\tau$. The value of $\gamma$ in~\eqref{eq:q} is 0.25. The number of steps in the threshold functions is $p=4$.}
\end{center}
\end{figure}

\subsubsection{Change in the copula parameter} To grasp further the finite-sample behavior of the procedures in the case of bivariate independent observations, we simulated a change in the parameter of a normal copula. The left (resp.\ right) graph in Figure~\ref{fig:copula} displays the estimated rejection probabilities of $H_0^\text{ind}$ in~\eqref{eq:H0:ind} for the procedures based on $R_{m,q}$, $S_{m,q}$, $T_{m,q}$, $P_m$ and $Q_m$ with $\gamma = 0.25$ and $p=4$ under $H_1$ in~\eqref{eq:H1} with $m=50$, $k^\star = 75$, $n = 100$, $F$ the bivariate normal copula with a Kendall's tau of -0.6 (resp. 0.6) and $G$ the bivariate normal copula with a Kendall's tau of $\tau \in \{-0.6, 0.3, 0, 0.3, 0.6, 0.9\}$ (resp. $\tau \in \{-0.9, -0.6, -0.3, 0, 0.3, 0.6\}$). As one can notice, the procedure based on $T_{m,q}$ (resp.\ $P_m$) is always among the most (resp.\ least) powerful ones. 
Graphs for other values of $\gamma$ and $p$ are not qualitatively different. As for all previous experiments, we observed that the smaller~$p$, the more powerful the procedures. For this experiment, the parameter $\gamma$ appeared to have a rather small impact on the rejection percentages of $T_{m,q}$.

\section{Data examples}

\begin{figure}[t!]
\begin{center}
  \includegraphics*[width=1\linewidth]{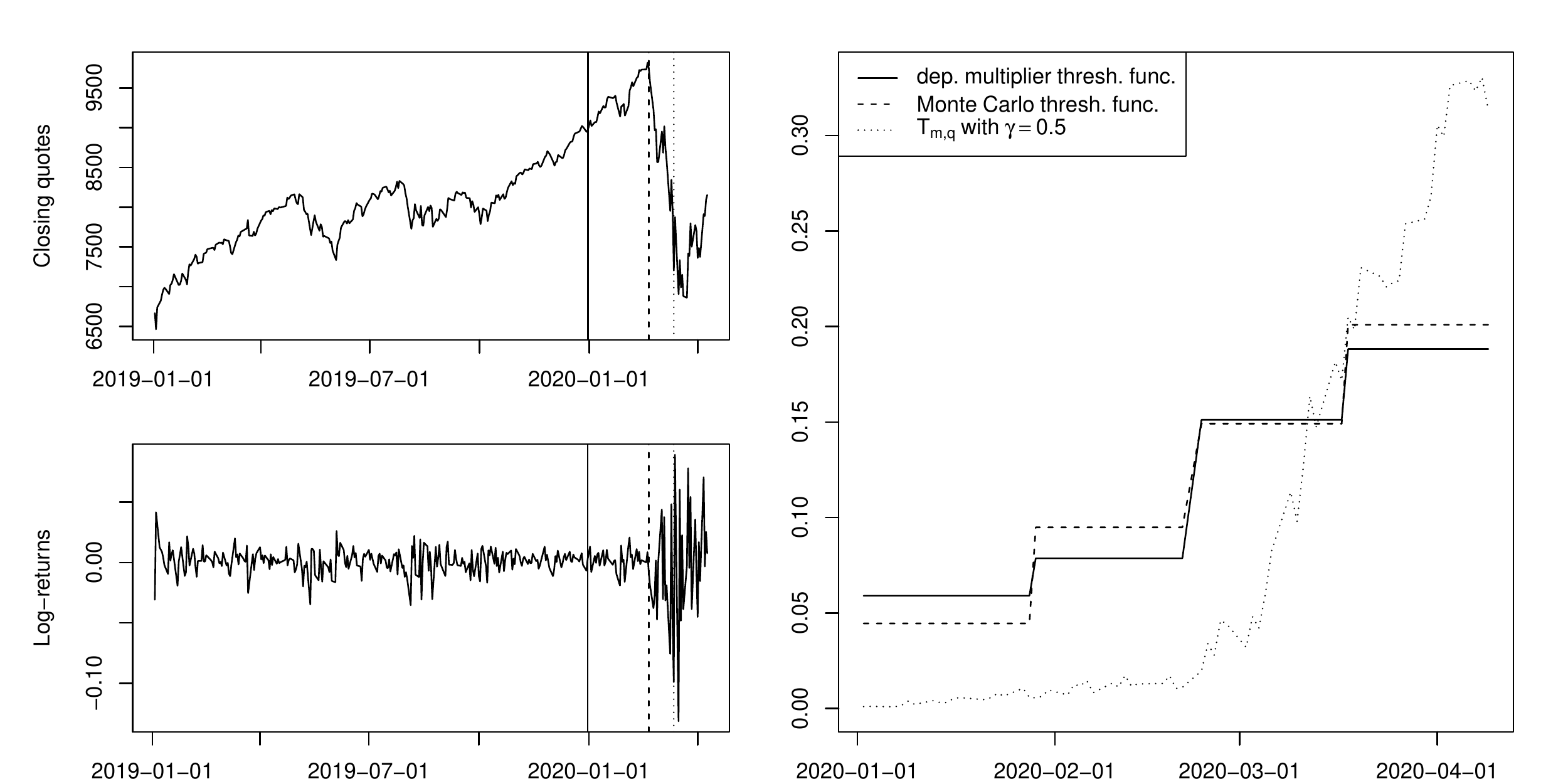} 
  \caption{\label{fig:det-thresh} Left: closing quotes and corresponding daily log-returns of the NASDAQ composite index for the period 2019-01-02 -- 2020-04-11. The solid vertical line represents the beginning of the monitoring. The dotted vertical line represent the date (2020-03-12) at which the detector function based on $T_{m,q}$ with $\gamma = 0.5$ first exceeded the two threshold functions on the right. The dashed vertical line corresponds to the estimated date of change (2020-02-21). Right: the dotted line represents the detector function based on $T_{m,q}$ with $\gamma = 0.5$. The solid (resp. dashed) line represents the threshold function with $p=4$ steps obtained using the dependent multiplier bootstrap (resp.\ Monte Carlo estimation).}
\end{center}
\end{figure}

To illustrate the use of the proposed sequential change-point detection tests, we considered two fictitious scenarios, the first (resp.\ second) of a univariate (resp.\ bivariate) nature based on closing quotes of the NASDAQ composite index (resp.\ Microsoft and Intel stocks) for the period 2019-01-02 -- 2020-04-11. The latter were obtained using the \texttt{get.hist.quote()} function of the \texttt{tseries} \textsf{R} package \citep{tseries}. In both scenarios, it was assumed that, on the last day of 2019, one wished to monitor the (univariate or bivariate) daily log-returns for a change in contemporary distribution possibly using the stretch of $m=251$ (univariate of bivariate) log-returns of 2019 as learning sample. The latter decision was confirmed after the tests of \cite{BucFerKoj19} implemented in the functions \texttt{stDistAutocop()} and \texttt{cpDist()} of the \texttt{npcp} \textsf{R} package \citep{npcp} provided no evidence against stationarity for the two candidate learning samples. Notice that, unsurprisingly, the rank-based test of serial independence of \cite{GenRem04} implemented in the function \texttt{serialIndepTest()} of the \texttt{copula} \textsf{R} package \citep{copula} provided weak evidence against the serial independence of the squared component time series.

The closing quotes and corresponding daily log-returns of the NASDAQ composite index are represented in the left panel of Figure~\ref{fig:det-thresh}. The solid vertical lines mark the beginning of the monitoring. The dotted line in the right panel represents the detector function based on $T_{m,q}$ with $\gamma = 0.5$. The latter was chosen given its overall good performance in our Monte Carlo experiments, both in terms of empirical level, power and mean detection delay. In the right panel, the solid line represents the threshold function with $p=4$ steps estimated using the dependent multiplier bootstrap with $B=10^5$, while the dashed line represents the threshold function with $p=4$ steps estimated using Monte Carlo with $M=10^5$. Note that the latter did not at all use the learning sample as it is computed under the assumption that the observations are serially independent. The relative proximity of the two threshold functions could be explained by the fact that, although present, serial dependence in the learning sample is probably very weak. The detector function exceeded the two threshold functions at the same date (2020-03-12), which is marked by the dotted vertical line in the left panel of Figure~\ref{fig:det-thresh} and corresponds to the 49th daily log-return of 2020. Given the definition of $T_{m,q}$ in~\eqref{eq:Tmq} and having that of $S_{m,q}$ in~\eqref{eq:Smq} in mind, a possible estimate of a point of change for an exceedance at position $k=m+49 = 300$ is given by
\begin{equation}
  \label{eq:point:change}
\mathrm{argmax}_{m \leq j \leq k-1} \frac{1}{k} \sum_{i=1}^k \left[ \frac{j (k-j)}{m^{3/2} q(j/m,k/m)}  \{ F_{1:j}(\bm X_i) - F_{j+1:k}(\bm X_i) \} \right]^2 + 1,
\end{equation}
which returned 286 and corresponds to the date 2020-02-21. The latter is marked by a dashed vertical line in the left panel of Figure~\ref{fig:det-thresh} and corresponds to the beginning of the sharp decrease of the NASDAQ composite index as a consequence of the Covid-19 pandemic.

\begin{figure}[t!]
\begin{center}
  \includegraphics*[width=1\linewidth]{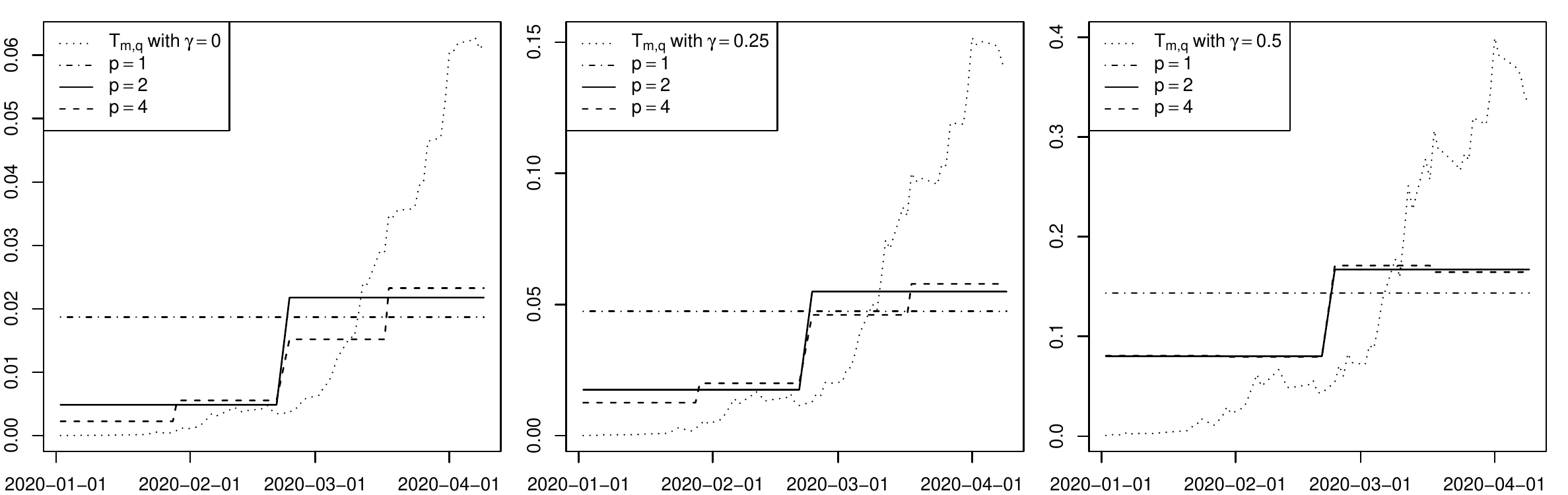} 
  \caption{\label{fig:biv-det-thresh} Monitoring of the bivariate daily log-returns of the Microsoft and Intel stocks for the period 2020-01-01 -- 2020-04-11 using the 2019 bivariate log-returns as learning sample, and the procedure based on $T_{m,q}$ with $\gamma \in \{0, 0.25, 0.5\}$ and a threshold function with $p \in \{1,2,4\}$ steps estimated using the dependent multiplier bootstrap. The estimated dates of change do not depend on $p$ and are the 2020-01-25 for $\gamma = 0$, the 2020-02-15 for $\gamma = 0.25$ and the 2020-02-20 for $\gamma = 0.5$.}
\end{center}
\end{figure}

Figure~\ref{fig:biv-det-thresh} describes the monitoring of the bivariate daily log-returns of the Microsoft and Intel stocks using the procedure based on $T_{m,q}$ with $\gamma \in \{0, 0.25, 0.5\}$ and a threshold function with $p \in \{1,2,4\}$ steps estimated using the dependent multiplier bootstrap. All the dates of exceedance are between the 2020-03-12 ($\gamma=0$ and $p=2$) and the 2020-03-09 ($\gamma = 0.5$ and $p=1$). The estimated dates of change turn out not to depend on $p$ and are the 2020-01-25 for $\gamma = 0$, the 2020-02-15 for $\gamma = 0.25$ and the 2020-02-20 for $\gamma = 0.5$. This effect of $\gamma$ on~\eqref{eq:point:change} was to be expected: larger values of $\gamma$ give more weight to potential break points close to $k$. 

\section{Concluding remarks}

In the context of closed-end sequential change-point detection, it can be argued \citep[see][]{AnaKos18} that it is desirable that the underlying threshold function is such that the probability of false alarm remains approximately constant over the monitoring period. In this work, the asymptotic validity of the bootstrap-based estimation of such a threshold function was established for generic detectors. The latter was applied to sequential change-point tests involving detectors based on differences of empirical d.f.s that can be either simulated or resampled using a dependent multiplier bootstrap depending on whether univariate independent or multivariate serially dependent observations are monitored. The proposed detectors are adaptations of statistics used in a posteriori change-point testing and include a weight function in the spirit of \cite{CsoSzy94b} that can be used to give more importance to recent observations.

Extensive Monte Carlo experiments were used to investigate the finite-sample properties of the resulting sequential change-point tests. Among the proposed detectors, none led to a uniformly better testing procedure. When based on the dependent multiplier bootstrap, the procedure based on $T_{m,q}$ in~\eqref{eq:Tmq} was observed to have the best behavior, overall, in terms of empirical level, power and mean detection delay. In the case of univariate independent observations, when the threshold function can be estimated using Monte Carlo simulation, the number of step $p$ of the threshold function can be chosen as large as the number of monitoring steps. However, in the time series case, when the estimation of the threshold function is based on the dependent multiplier bootstrap, $p$~should not be taken too large because of an error propagation effect.


As already hinted at in Section~\ref{sec:thresh:generic}, a straightforward extension of the generic results on the estimation of the threshold function consists of allowing the conditional probability in~\eqref{eq:thresh:proc2} to change with the underlying monitoring sub-interval (or, equivalently, to have monitoring sub-intervals of different lengths). The choice of the $p$ conditional probabilities (or, equivalently, of the $p$ monitoring sub-intervals) could for instance be carried out according to the user's prior knowledge. In future work, we also plan to investigate the validity of additional bootstraps for monitoring multivariate time series. The current and future theoretical results would further need to be complemented by additional Monte Carlo simulations, involving in particular multivariate experiments. Such finite-sample investigations are however a real computational challenge given the complexity and cost of execution of such change-point detection procedures.

\begin{appendix}
  
\section{Proofs of Propositions~\ref{prop:H0} and~\ref{prop:H1}}
\label{sec:proof:asym:det}

\begin{proof}[Proof of Proposition~\ref{prop:H0}]
  From Theorem~1 of \cite{Buc15}, we have that $\Bb_n \leadsto \Bb_C$ in $\ell^\infty([0,1]^{d+1})$, where $n = \ip{m(T+1)}$.  Let $\psi$ be the map from $\ell^\infty([0,1]^{d+1})$ to $\ell^\infty([0,T+1] \times [0,1]^d)$ defined, for any $f \in \ell^\infty([0,1]^{d+1})$, by
\begin{equation}
  \label{eq:psi}
  \psi(f)(s, \bm u) = \sqrt{T+1} f\{ s/(T+1), \bm u \}, \qquad s \in [0,T+1], \bm u \in [0,1]^d.
\end{equation}
It is straightforward to verify that $\psi$ is a continuous map which immediately implies by the continuous mapping theorem that $\psi(\Bb_n) \leadsto \psi(\Bb_C)$ in $\ell^\infty([0,T+1] \times [0,1]^d)$. Furthermore, it is easy to check that $\psi(\Bb_n) - \Bb_m = o_\Pr(1)$ and that $\psi(\Bb_C)$ is a tight centered Gaussian process such that, for any $s,t \in [0,T+1]$ and $\bm u, \bm v \in [0,1]^d$,
\begin{align*}
  \Cov\{\psi(\Bb_C)(s,\bm u), \psi(\Bb_C)(t,\bm v)\} &= (T+1) \Cov[\Bb_C \{ s/(T+1), \bm u \}, \Bb_C\{t/(T+1),\bm v\}] \\ &= (s \wedge t) \Gamma(\bm u, \bm v) = \Cov\{\Bb_C(s,\bm u), \Bb_C(t,\bm v)\},
\end{align*}
where $\Gamma$ is defined in~\eqref{eq:Gamma}. It follows that $\Bb_m \leadsto \Bb_C$ in $\ell^\infty([0,T+1] \times [0,1]^d)$.

It remains to show the subsequent claims. Under $H_0$,~\eqref{eq:GbmH0} holds and the continuous mapping theorem immediately implies that $\Gb_m \leadsto \Gb_C$ in $\ell^\infty(\Delta \times [0,1]^d)$ and, then, that $\Gb_{m,q} \leadsto \Gb_{C,q}$ in $\ell^\infty(\Delta \times [0,1]^d)$ since $\sup_{s \in [0,T+1]} |\lambda_m(0,s) - s| \leq 1/m$, the function $q$ in~\eqref{eq:q} is continuous on $\Delta$ and $\sup_{(s,t) \in \Delta} \{q(s,t)\}^{-1} < \infty$. The continuous mapping theorem further straightforwardly implies that $\Rb_{m,q} \leadsto \Rb_{C,q}$ in $\ell^\infty([1,T+1])$.

Let $\Jb_{m,q}(s,t) = \int_{[0,1]^d} \{  \Gb_{m,q}(s,t,\bm u) \}^2 \dd C_{1:\ip{mt}}(\bm u)$ for any $1 \leq s \leq t \leq T+1$. In order to prove that $\Sb_{m,q} \leadsto \Sb_{C,q}$ and  $\Tb_{m,q} \leadsto \Tb_{C,q}$ in $\ell^\infty([1,T+1])$, we shall first prove that $\Jb_{m,q} \leadsto \Jb_{C,q}$ in $\ell^\infty(\Delta \cap [1,T+1]^2)$, where $\Jb_{C,q}(s,t) = \int_{[0,1]^d} \{  \Gb_{C,q}(s,t,\bm u) \}^2 \dd C(\bm u)$ for any $1 \leq s \leq t \leq T+1$.

We start by showing that the finite-dimensional distributions of $\Jb_{m,q}$ converge weakly to those of $\Jb_{C,q}$.  Let $p \in \N$, $p > 1$, and $(s_1,t_1),\dots,(s_p,t_p) \in \Delta \cap [1,T+1]^2$ be arbitrary. The result is proven if we show that
\begin{equation}
  \label{eq:fidiJb}
\big( \Jb_{m,q}(s_1, t_1),\dots,\Jb_{m,q}(s_p, t_p) \big) \leadsto \big( \Jb_{C,q}(s_1, t_1),\dots,\Jb_{C,q}(s_p, t_p) \big).
\end{equation}
From the already proven weak convergence of $\Bb_m$ to $\Bb_C$ in $\ell^\infty([1,T+1] \times [0,1]^d)$, we obtain that
$$
\sup_{s \in [1,T+1] \atop \bm u \in [0,1]^d} |C_{1:\ip{ms}}(\bm u) - C(\bm u)| \leq \frac{1}{\sqrt{m}} \times  \sup_{s \in [1,T+1]} \frac{1}{\lambda_m(0,s)} \times \sup_{s \in [1,T+1] \atop \bm u \in [0,1]^d} |\Bb_m(s,\bm u)| \p 0,
$$
since $\sup_{s \in [1,T+1]} \{\lambda_m(0,s)\}^{-1} \leq 1$ and $\sup_{s \in [1,T+1], \bm u \in [0,1]^d} |\Bb_m(s,\bm u)| = O_\Pr(1)$. From the fact that $\Gb_{m,q} \leadsto \Gb_{C,q}$ in $\ell^\infty\{ (\Delta \cap [1,T+1]^2) \times [0,1]^d \}$, we then have that
\begin{multline*}
\big( \Gb_{m,q}(s_1, t_1, \cdot),\dots,\Gb_{m,q}(s_p, t_p, \cdot), C_{1:\ip{mt_1}}, \dots, C_{1:\ip{mt_p}} \big) \\ \leadsto \big( \Gb_{C,q}(s_1, t_1, \cdot),\dots,\Gb_{C,q}(s_p, t_p, \cdot), C, \dots, C \big)
\end{multline*}
in $\{ \ell^\infty([0,1]^d) \}^{2p}$. The latter can be combined with the fact that $C$ is continuous, $\Gb_{C,q}$ has continuous sample paths with probability one, Lemma~1 in \cite{KojSegYan11} and the continuous mapping theorem to obtain~\eqref{eq:fidiJb}.

It remains to show that the process $\Jb_{m,q}$ is asymptotically tight \citep[see, e.g.,][Section 1.5]{vanWel96}. From Section 2.1.2 and Problem 2.1.5 in the same reference, the latter is shown if, for every sequence $\delta_m \downarrow 0$,
\begin{equation}
  \label{eq:asym:equi:cont:Jb}
  \sup_{(s,t), (s',t') \in \Delta \cap [1,T+1]^2 \atop |s - s'|< \delta_m, |t - t'|< \delta_m}  | \Jb_{m,q}(s, t) - \Jb_{m,q}(s',t') | \p 0.
\end{equation}
The supremum on the left-hand side of the previous display is smaller than $I_m + J_m$, where
\begin{align*}
  I_m  =& \sup_{(s,t), (s',t') \in \Delta \cap [1,T+1]^2 \atop |s - s'|< \delta_m, |t - t'|< \delta_m}  \left| \int_{[0,1]^d} \{  \Gb_{m,q}(s,t,\bm u) \}^2 \dd C_{1:\ip{mt}}(\bm u) \right. \\ &- \left. \int_{[0,1]^d} \{  \Gb_{m,q}(s',t',\bm u) \}^2 \dd C_{1:\ip{mt}}(\bm u)  \right|, \\
  J_m =&  \sup_{(s,t), (s',t') \in \Delta \cap [1,T+1]^2 \atop |s - s'|< \delta_m, |t - t'|< \delta_m}  \left| \int_{[0,1]^d} \{  \Gb_{m,q}(s',t',\bm u) \}^2 \dd C_{1:\ip{mt}}(\bm u) \right. \\ &- \left. \int_{[0,1]^d} \{  \Gb_{m,q}(s',t',\bm u) \}^2 \dd C_{1:\ip{mt'}}(\bm u)  \right|.
\end{align*}
Now,
\begin{align*}
  I_m &\leq \sup_{(s,t), (s',t') \in \Delta \cap [1,T+1]^2 \atop |s - s'|< \delta_m, |t - t'|< \delta_m}  \int_{[0,1]^d} \left|  \{ \Gb_{m,q}(s,t,\bm u) \}^2  - \{  \Gb_{m,q}(s',t',\bm u) \}^2 \right| \dd C_{1:\ip{mt}}(\bm u) \\
      &\leq \sup_{(s,t), (s',t') \in \Delta \cap [1,T+1]^2 \atop |s - s'|< \delta_m, |t - t'|< \delta_m}  \sup_{\bm u \in [0,1]^d} \left|  \{ \Gb_{m,q}(s,t,\bm u) \}^2  - \{  \Gb_{m,q}(s',t',\bm u) \}^2 \right| \int_{[0,1]^d} \dd C_{1:\ip{mt}}(\bm u) \\
      &\leq \sup_{(s,t), (s',t') \in \Delta \cap [1,T+1]^2 \atop |s - s'|< \delta_m, |t - t'|< \delta_m}  \sup_{\bm u, \bm u' \in [0,1]^d \atop \| \bm u - \bm u'\| \leq \delta_m} \left| \{ \Gb_{m,q}(s,t,\bm u) \}^2  - \{  \Gb_{m,q}(s',t',\bm u') \}^2 \right| \p 0
\end{align*}
by the asymptotic uniform equicontinuity in probability of $\Gb_{m,q}^2$. Concerning $J_m$, we have
\begin{align*}
  J_m \leq& 2 \sup_{(s,t) \in \Delta \cap [1,T+1]^2 \atop t' \in [1,T+1], 0 \leq t' - t < \delta_m} \left| \frac{1}{\ip{mt}} \sum_{i=1}^{\ip{mt}} \{ \Gb_{m,q}(s,t,\bm U_i) \}^2 - \frac{1}{\ip{mt'}} \sum_{i=1}^{\ip{mt'}} \{ \Gb_{m,q}(s,t,\bm U_i) \}^2 \right| \\
      \leq& 2 \sup_{(s,t) \in \Delta \cap [1,T+1]^2 \atop t' \in [1,T+1], 0 \leq t' - t < \delta_m}  \sum_{i=1}^{\ip{mt}} \{ \Gb_{m,q}(s,t,\bm U_i) \}^2 \left( \frac{1}{\ip{mt}} - \frac{1}{\ip{mt'}} \right)  \\ &+ 2 \sup_{(s,t) \in \Delta \cap [1,T+1]^2 \atop t' \in [1,T+1], 0 \leq t' - t < \delta_m}  \frac{1}{\ip{mt'}} \sum_{i=\ip{mt} + 1}^{\ip{mt'}} \{ \Gb_{m,q}(s,t,\bm U_i) \}^2  \\
  \leq& 4 \sup_{t,t' \in [1,T+1] \atop 0 \leq t' - t < \delta_m} \frac{\ip{mt'} - \ip{mt}}{\ip{mt'}} \times \sup_{(s,t) \in \Delta \cap [1,T+1]^2 \atop \bm u \in [0,1]^d} \{ \Gb_{m,q}(s,t,\bm u) \}^2 \leq \delta_m \times O_\Pr(1) + o_\Pr(1),
\end{align*}
since $\ip{mt'} \geq m$ for all $t' \in [1,T+1]$,
$$
\sup_{t,t' \in [0,T+1]} \left| \frac{\ip{mt'} - \ip{mt}}{m} - (t' - t) \right| \leq 2 \sup_{t \in [0,T+1]} \left| \frac{\ip{mt}}{m} - t \right| \leq \frac{2}{m} \to 0,
$$
and the fact that $\sup_{(s,t) \in \Delta \cap [1,T+1]^2, \bm u \in [0,1]^d} \{ \Gb_{m,q}(s,t,\bm u) \}^2$ converges weakly by the continuous mapping theorem. Hence,~\eqref{eq:asym:equi:cont:Jb} holds and therefore $\Jb_{m,q} \leadsto \Jb_{C,q}$ in $\ell^\infty(\Delta \cap [1,T+1]^2)$. The fact that $\Sb_{m,q} \leadsto \Sb_{C,q}$ and  $\Tb_{m,q} \leadsto \Tb_{C,q}$ in $\ell^\infty([1,T+1])$ is finally and again an immediate consequence of the continuous mapping theorem.

Finally, we have to show that, for any $[t_1,t_2] \subset [1,T+1]$ such that $t_2 > 1$, the distributions of $\sup_{t \in [t_1,t_2]}\Rb_{C,q}(t)$, $\sup_{t \in [t_1,t_2]}\Sb_{C,q}(t)$ and $\sup_{t \in [t_1,t_2]}\Tb_{C,q}(t)$ are absolutely continuous with respect to the Lebesgue measure. To show the latter, we adapt the proof of Proposition~3.3 of \cite{BucFerKoj19} to the current setting. Let $\Cc(\Sc)$ denote the space of all continuous real-valued functions on $\Sc$ equipped with the uniform metric. Since the sample paths of $\Bb_C$ are elements of $\Cc([0,T+1] \times [0,1]^d)$ with probability one and $q$ in~\eqref{eq:q} is continuous, the sample paths of $\Gb_{C,q}$ are elements of $\Cc(\Delta \times [0,1]^d)$ with probability one. Fix $[t_1,t_2] \subset [1,T+1]$ such that $t_2 > 1$ and let $\vartheta_1$, $\vartheta_2$ and $\vartheta_3$ be the maps from $\Cc(\Delta \times [0,1]^d)$ to $\R$ defined, for any $f \in \Cc(\Delta \times [0,1]^d)$, by
\begin{align*}
  \vartheta_1(f) &= \sup_{t \in [t_1,t_2]} \sup_{s \in [1,t]} \sup_{\bm u \in [0,1]^d}  | f(s,t,\bm u) |, \\
  \vartheta_2(f) &= \sup_{t \in [t_1,t_2]} \sup_{s \in [1,t]} \left\{ \int_{[0,1]^d} \{  f(s,t,\bm u) \}^2 \dd C(\bm u) \right\}^{1/2}, \\
  \vartheta_3(f) &= \sup_{t \in [t_1,t_2]} \left\{ \int_1^t  \int_{[0,1]^d} \{  f(s,t,\bm u) \}^2 \dd C(\bm u) \dd s \right\}^{1/2}.
\end{align*}
Then, we have that $\sup_{t \in [t_1,t_2]}\Rb_{C,q}(t) = \vartheta_1(\Gb_{C,q})$, $\sup_{t \in [t_1,t_2]}\Sb_{C,q}(t) = \{ \vartheta_2(\Gb_{C,q}) \}^2$ and $\sup_{t \in [t_1,t_2]}\Tb_{C,q}(t) = \{ \vartheta_3(\Gb_{C,q}) \}^2$ and, to show the desired result, it suffices to prove that the distributions of $\vartheta_1(\Gb_{C,q})$, $\vartheta_2(\Gb_{C,q})$ and $\vartheta_2(\Gb_{C,q})$, denoted respectively by $\Lc\{ \vartheta_1(\Gb_{C,q}) \}$, $\Lc \{ \vartheta_2(\Gb_{C,q}) \}$ and $\Lc \{ \vartheta_2(\Gb_{C,q})\}$, are absolutely continuous with respect to the Lebesgue measure. Since the maps $\vartheta_1$, $\vartheta_2$ and $\vartheta_3$ are continuous and convex, from Theorem 7.1 in \cite{DavLif84}, we obtain that, for any $i \in \{1,2,3\}$, $\Lc\{ \vartheta_i(\Gb_{C,q}) \}$ is concentrated on $[a_i, \infty)$ and absolutely continuous on $(a_i,\infty)$, where
$$
a_i = \inf \{ \vartheta_i(f) : f \text{ belongs to the support of } \Lc(\Gb_{C,q}) \}.
$$
By Lemma~1.2~(e) in \cite{DerFehMatSch03}, we have that, for any $\eps > 0$, $\Pr \{ \vartheta_1(\Gb_{C,q}) \leq \eps \} > 0$. Hence, for any $\eps > 0$, 
$$
\Pr \{ \vartheta_3(\Gb_{C,q}) \leq \eps \} \geq \Pr \{ \vartheta_2(\Gb_{C,q}) \leq \eps \} \geq \Pr \{ \vartheta_1(\Gb_{C,q}) \leq \eps \} > 0.
$$
It follows that, for any $i \in \{1,2,3\}$ and any $\eps > 0$, there exists functions $f$ in the support of $\Lc(\Gb_{C,q})$ such that $\vartheta_i(f) \leq \eps$, which implies that $a_1 = a_2 = a_3 = 0$.

To conclude, it remains to show that $\Lc\{ \vartheta_1(\Gb_{C,q}) \}$, $\Lc \{ \vartheta_2(\Gb_{C,q}) \}$ and $\Lc \{ \vartheta_3(\Gb_{C,q})\}$ have no atom at 0. For any $f \in \Cc(\Delta \times [0,1]^d)$, we have that $\vartheta_2(f) = \vartheta_3(f) = 0$ if and only if $f(s,t,\bm u) = 0$ for all $(s,t) \in \Delta \cap ([1,t_2] \times [t_1,t_2] )$ and all $\bm u$ in the support of the distribution induced by $C$. Let $\bm u^* \in (0,1)^d$ be an arbitrary point in the latter support such that $\Var\{\Bb_C(1, \bm u^*)\} = \Gamma(\bm u^*,\bm u^*) > 0$, where $\Gamma$ is defined in~\eqref{eq:Gamma}, and let $(s^*,t^*) \in [1, t_2] \times [t_1,t_2]$ such that $s^* < t^*$. Then, $\Var\{ \Gb_C(s^*,t^*, \bm u^*) \} = s^* t^* (t^* - s^*) \Gamma(\bm u^*,\bm u^*) > 0$, which implies that $\Var\{ \Gb_{C,q}(s^*,t^*, \bm u^*) \} = s^* t^* (t^* - s^*) \Gamma(\bm u^*,\bm u^*) \{ q(s^*,t^*) \}^{-2} > 0$. The proof is complete since 
$$
\Pr \{ \vartheta_1(\Gb_{C,q})  = 0 \} = \Pr \{ \vartheta_2(\Gb_{C,q})  = 0 \}  = \Pr \{ \vartheta_3(\Gb_{C,q})  = 0 \} \leq \Pr \{ \Gb_{C,q}(s^*,t^*, \bm u^*) = 0 \} = 0.
$$
\end{proof}

\begin{proof}[Proof of Proposition~\ref{prop:H1}]
Let $\Kb_m(s,t, \bm x) = m^{-1/2} \Hb_m(s,t, \bm x) - K_c(s,t,\bm x)$, $(s,t) \in \Delta$, $\bm x \in \R^d$. The first claim is proven if
\begin{equation}
  \label{eq:claim1}
\sup_{(s,t) \in \Delta \atop \bm x \in \R^d} | \Kb_m(s,t,\bm x) | \p 0.
\end{equation}
The supremum on the left-hand side of the previous display is equal to
\begin{equation}
\label{eq:max:sup}
\max \left\{\sup_{0 \leq s \leq t \leq c \atop \bm x \in \R^d} | \Kb_m(s,t,\bm x) |,  \sup_{0 \leq s \leq c\leq t \leq T+1 \atop \bm x \in \R^d} | \Kb_m(s,t,\bm x) |, \sup_{c \leq s \leq t \leq T+1 \atop \bm x \in \R^d} | \Kb_m(s,t,\bm x) | \right\}.
\end{equation}
To prove~\eqref{eq:claim1}, we shall show that each of the three suprema in the previous display converge in probability to zero. Notice first that
\begin{equation}
\label{eq:Kc:parts}
K_c(s,t,\bm x) = \left\{
  \begin{array}{ll}
    0, &\text{if } 0 \leq s \leq t \leq c, \\
    s(t-c) \{F(\bm x) - G(\bm x)\}, &\text{if } 0 \leq s \leq c\leq t \leq T+1, \\
    c(t-s)  \{F(\bm x) - G(\bm x)\}, &\text{if } c \leq s \leq t \leq T+1.
  \end{array}
\right.
\end{equation}
Furthermore, for any $(s,t) \in \Delta$, $\bm x \in \R^d$ and $H \in \{F,G\}$, let
$$
\Fb_{m,H}(s,t,\bm x) = \sqrt{m} \lambda_m(s,t) \{F_{\ip{ms}+1:\ip{mt}}(\bm x) - H(\bm x) \}  =  \Fb_{m,H}^\circ(t,\bm x) - \Fb_{m,H}^\circ(s,\bm x),
$$
where $\Fb_{m,H}^\circ(s,\bm x) = \sqrt{m} \lambda_m(0,s) \{F_{1:\ip{ms}}(\bm x) - H(\bm x) \}$, $\lambda_m$ is defined in~\eqref{eq:lambda} and $F_{1:\ip{ms}}$ is generically defined by~\eqref{eq:Fjk}. By proceeding as in Section~\ref{sec:margin:free:H0}, it can be verified that, under $H_0$ in~\eqref{eq:H0}, $\Fb_{m,F}^\circ(s,\bm x) = \Bb_m \{ s,\bm F(\bm x) \}$ for all $s \in [0,T+1]$ and $\bm x \in \R^d$, where $\Bb_m$ is defined in~\eqref{eq:Bbm} and $\bm F(\bm x) = (F_1(x_1),\dots,F_d(x_d))$. By the continuous mapping theorem, it thus immediately follows that, under $H_0$, $\Fb_{m,F}$ converges weakly in $\ell^\infty(\Delta \times \R^d)$ to a tight limit. Some thought then reveals that, under the conditions of the proposition, $\Fb_{m,F}$ (resp.\ $\Fb_{m,G}$) converges weakly in $\ell^\infty \{ (\Delta \cap [0,c]^2) \times \R^d \}$ (resp.\ in $\ell^\infty\{ (\Delta \cap [c,T+1]^2) \times \R^d \}$) to a tight limit.

From the expression of $K_c$ given in~\eqref{eq:Kc:parts}, for the first supremum in~\eqref{eq:max:sup}, we obtain that
$$
\sup_{0 \leq s \leq t \leq c \atop \bm x \in \R^d} | \Kb_m(s,t,\bm x) | = m^{-1/2} \sup_{0 \leq s \leq t \leq c \atop \bm x \in \R^d} | \Hb_m(s,t,\bm x) | = o(1) \times O_\Pr(1) \p 0,
$$
since $\Hb_m$ converges weakly in $\ell^\infty \{ (\Delta \cap [0,c]^2) \times \R^d \}$ to a tight limit as a consequence of the fact that, for any $0 \leq s \leq t \leq c$ and $\bm x \in \R^d$, $\Hb_m(s,t, \bm x) = \lambda_m(s,t) \Fb_{m,F}(0,s) -\lambda_m(0,s) \Fb_{m,F}(s,t)$ and the continuous mapping theorem.

Regarding the second supremum, for any $0 \leq s \leq c\leq t \leq T+1$ and $\bm x \in \R^d$, we have that
$$
\lambda_m(s,t) F_{\ip{ms}+1:\ip{mt}}(\bm x) = \lambda_m(s,c) F_{\ip{ms}+1:\ip{mc}}(\bm x) + \lambda_m(c,t) F_{\ip{mc}+1:\ip{mt}}(\bm x).
$$
Thus, on one hand,
\begin{multline*}
m^{-1/2} \Hb_m(s,t, \bm x) \\ = \lambda_m(0,s) \{ \lambda_m(s,t) F_{1:\ip{ms}}(\bm x) - \lambda_m(s,c) F_{\ip{ms}+1:\ip{mc}}(\bm x) - \lambda_m(c,t) F_{\ip{mc}+1:\ip{mt}}(\bm x) \}.
\end{multline*}
On the other hand, from~\eqref{eq:Kc:parts} and using the fact that $\sup_{(s,t) \in \Delta} |\lambda_m(s,t) - (t-s)| \leq 2/m$,
$$
K_c(s,t,\bm x) =  \lambda_m(0,s) \{ \lambda_m(s,t)  F(\bm x) - \lambda_m(s,c) F(\bm x) - \lambda_m(c,t) G(\bm x) \} + O(1/m).
$$
By the triangle inequality and using the fact that $\sup_{0 \leq s \leq t \leq T+1} |\lambda_m(s,t)| \leq T+1$, it then follows that
\begin{multline*}
  \sup_{0 \leq s \leq c\leq t \leq T+1 \atop \bm x \in \R^d} | \Kb_m(s,t,\bm x) | \leq m^{-1/2} (T+1) \left[ \sup_{0 \leq s \leq c \atop \bm x \in \R^d} |\Fb_{m,F}(0,s,\bm x)|  \right. \\ \left. + \sup_{0 \leq s \leq c \atop \bm x \in \R^d} |\Fb_{m,F}(s,c,\bm x)| + \sup_{c\leq t \leq T+1 \atop \bm x \in \R^d} | \Fb_{m,G}(c,t,\bm x) |  \right] = o(1) \times O_\Pr(1).
\end{multline*}
Similarly, for the third supremum, for any $c \leq s \leq T+1$ and $\bm x \in \R^d$,
$$
\lambda_m(0,s) F_{1:\ip{ms}}(\bm x) = \lambda_m(0,c) F_{1:\ip{mc}}(\bm x) + \lambda_m(c,s) F_{\ip{mc}+1:\ip{ms}}(\bm x),
$$
and, hence, on one hand, for any $c \leq s \leq t \leq T+1$ and $\bm x \in \R^d$,
\begin{multline*}
m^{-1/2} \Hb_m(s,t, \bm x) \\ = \lambda_m(s,t) \{ \lambda_m(0,c) F_{1:\ip{mc}}(\bm x) + \lambda_m(c,s) F_{\ip{mc}+1:\ip{ms}}(\bm x) - \lambda_m(0,s) F_{\ip{ms}+1:\ip{mt}}(\bm x)\},
\end{multline*}
while, on the other hand,
$$
K_c(s,t,\bm x) =  \lambda_m(s,t) \{ \lambda_m(0,c)  F(\bm x) +  \lambda_m(c,s) G(\bm x) - \lambda_m(0,s) G(\bm x) \} + O(1/m).
$$
Finally, by the triangle inequality,
\begin{multline*}
  \sup_{c \leq s \leq t \leq T+1 \atop \bm x \in \R^d} | \Kb_m(s,t,\bm x) |  \leq m^{-1/2} (T+1) \left[ \sup_{\bm x \in \R^d} | \Fb_{m,F}(0,c,\bm x) | \right. \\ \left. + \sup_{c\leq s \leq T+1 \atop \bm x \in \R^d} | \Fb_{m,G}(c,s,\bm x) | + \sup_{c\leq s \leq t \leq T+1 \atop \bm x \in \R^d}  | \Fb_{m,G}(s,t,\bm x) |  \right] = o(1) \times O_\Pr(1),
\end{multline*}
which completes the proof of~\eqref{eq:claim1}.

Using the fact that $\sup_{s \in [0,T+1]} |\lambda_m(0,s) - s| \leq 1/m$, the function $q$ in~\eqref{eq:q} is continuous on $\Delta$ and $\sup_{(s,t) \in \Delta} \{q(s,t)\}^{-1} < \infty$, we obtain, from the continuous mapping theorem, that $\Hb_{m,q} \p K_{c,q}$ in $\ell^\infty(\Delta \times \R^d)$. From~\eqref{eq:Rmq} and~\eqref{eq:Hbm}, and proceeding as in~\eqref{eq:Rmq:Gmq}, it is easy to verify that $\Rb_{m,q}(t) = \sup_{s \in [1,t]} \sup_{\bm x \in \R^d} |\Hb_m(s,t,\bm x)|$, $t \in [1,T+1]$. Hence, again by the continuous mapping theorem, $m^{-1/2} \Rb_{m,q} \p L_{c,q}$ in $\ell^\infty([1,T+1])$, where $L_{c,q}(t) = \sup_{s \in [1,t]} \sup_{\bm x \in \R^d} |K_{c,q}(s,t,\bm x)|$, and, thus, $m^{-1/2} \sup_{t \in [1,T+1]} \Rb_{m,q}(t) \p  \sup_{t \in [1,T+1]} L_{c,q}(t)$. Since
$$
\sup_{t \in [1,T+1]} L_{c,q}(t) = \sup_{1 \leq s \leq t \leq T+1} \frac{(s \wedge c) \{(t \vee c) - (s \vee c) \}}{q(s,t)} \sup_{\bm x \in \R^d} |F(\bm x) - G(\bm x)| > 0,
$$
we immediately obtain that $\sup_{t \in [1,T+1]} \Rb_{m,q}(t) \p \infty$.

To show the two last remaining claims, we shall first prove that
\begin{equation}
  \label{eq:unifcp}
\sup_{1 \leq s \leq t \leq T+1} \left| m^{-1} \int_{\R^d} \{ \Hb_{m,q}(s,t,\bm x) \}^2 \dd F_{1:\ip{mt}}(\bm x) -  \int_{\R^d} \{ K_{c,q}(s,t,\bm x) \}^2 \dd H_t(\bm x) \right| \p 0,
\end{equation}
where, for any $t \in [1,T+1]$ and $\bm x \in \R^d$,
\begin{equation}
  \label{eq:Ht}
  H_t(\bm x) = \frac{c \wedge t}{t} F(\bm x) +  \frac{t - c \wedge t}{t} G(\bm x).
\end{equation}
By the triangle inequality, the supremum on the left hand-side of~\eqref{eq:unifcp} is smaller than $I_m + J_m$, where
$$
I_m = \sup_{1 \leq s \leq t \leq T+1} \left| m^{-1} \int_{\R^d} \{ \Hb_{m,q}(s,t,\bm x) \}^2 \dd F_{1:\ip{mt}}(\bm x) -  \int_{\R^d} \{ K_{c,q}(s,t,\bm x) \}^2 \dd F_{1:\ip{mt}}(\bm x) \right|
$$
and
$$
J_m = \sup_{1 \leq s \leq t \leq T+1} \left| \int_{\R^d} \{ K_{c,q}(s,t,\bm x) \}^2 \dd F_{1:\ip{mt}}(\bm x) -  \int_{\R^d} \{ K_{c,q}(s,t,\bm x) \}^2 \dd H_t(\bm x) \right|.
$$
On one hand, some thought reveals that
$$
I_m \leq \sup_{1 \leq s \leq t \leq T+1 \atop \bm x \in \R^d} \left| \{ m ^{-1/2} \Hb_{m,q}(s,t,\bm x) \}^2  - \{ K_{c,q}(s,t,\bm x) \}^2 \right| \p 0
$$
as a consequence of the continuous mapping theorem. On the other hand, from~\eqref{eq:Kc}, we have that
\begin{multline*}
J_m \leq \sup_{1 \leq s \leq t \leq T+1}   \frac{[(s \wedge c) \{(t \vee c) - (s \vee c) \}]^2}{\{q(s,t)\}^2}  \\ \times  \sup_{1 \leq t \leq T+1} \left| \int_{\R^d} \{F(\bm x) - G(\bm x)\}^2 \dd F_{1:\ip{mt}}(\bm x) -  \int_{\R^d} \{F(\bm x) - G(\bm x)\}^2 \dd H_t(\bm x) \right|.
\end{multline*}
Using~\eqref{eq:Ht}, the second supremum on the right-hand side of the previous display is smaller than $\max\{K_m,L_m\}$, where
\begin{align*}
  K_m =&  \sup_{1 \leq t \leq c} \left| \int_{\R^d} \{F(\bm x) - G(\bm x)\}^2 \dd F_{1:\ip{mt}}(\bm x) -  \int_{\R^d} \{F(\bm x) - G(\bm x)\}^2 \dd F(\bm x) \right|, \\
  L_m =&  \sup_{c \leq t \leq T+1} \left| \int_{\R^d} \{F(\bm x) - G(\bm x)\}^2 \dd F_{1:\ip{mt}}(\bm x) -  \int_{\R^d} \{F(\bm x) - G(\bm x)\}^2 \dd H_t(\bm x) \right|.
\end{align*}
From the assumptions on the strong mixing coefficients and Theorem~1.2 in \cite{Ber87} \citep[see also][Chapter~3]{Rio17}, the strong law of large numbers implies that, as $k \to \infty$,
$$
M_k = \left| \frac{1}{k} \sum_{i=1}^{k} \{F(\bm Y_i) - G(\bm Y_i)\}^2  -  \int_{\R^d} \{F(\bm x) - G(\bm x)\}^2 \dd F(\bm x) \right| \as 0,
$$
where the arrow~`$\as$' denotes almost sure convergence. Since the previous convergence is equivalent to the fact that $\limsup_{m \to \infty} M_m = 0$ almost surely, we obtain that $K_m = \sup_{m \leq k \leq \ip{mc}} M_k \leq \sup_{m \leq k} M_k \as 0$. Using the fact that, for any $c \leq t \leq T+1$ and $\bm x \in \R^d$,
$$
F_{1:\ip{mt}}(\bm x) = \frac{\ip{mc}}{\ip{mt}} F_{1:\ip{mc}}(\bm x) + \frac{\ip{mt} - \ip{mc}}{\ip{mt}} F_{\ip{mc}+1:\ip{mt}}(\bm x)
$$
and that $\sup_{c \leq t \leq T+1} | \ip{mc} / \ip{mt} - c/t | = O(1/m)$, we obtain that
\begin{multline*}
  L_m \leq \left| \int_{\R^d} \{F(\bm x) - G(\bm x)\}^2 \dd F_{1:\ip{mc}}(\bm x) -  \int_{\R^d} \{F(\bm x) - G(\bm x)\}^2 \dd F(\bm x) \right| \\
         + \sup_{c \leq t \leq T+1} \frac{\ip{mt} - \ip{mc}}{\ip{mt}} \left| \int_{\R^d} \{F(\bm x) - G(\bm x)\}^2 \dd F_{\ip{mc}+1:\ip{mt}}(\bm x) \right. \\ \left. -  \int_{\R^d} \{F(\bm x) - G(\bm x)\}^2 \dd G(\bm x) \right| + O(1/m).
\end{multline*}
The first term on the right-hand side of the previous display is equal to $M_{\ip{mc}}$ and thus converges to zero almost surely. The second term can be written as
\begin{equation}
\label{eq:term:sum}
  \sup_{\ip{mc}+1 \leq k \leq \ip{m(T+1)}}  \left| \frac{1}{k} \sum_{i=\ip{mc} + 1}^{k} \{F(\bm Z_i) - G(\bm Z_i)\}^2  -  \frac{k - \ip{mc}}{k} \int_{\R^d} \{F(\bm x) - G(\bm x)\}^2 \dd G(\bm x) \right|.
\end{equation}
Letting
$$
N_k = \left| \frac{1}{k} \sum_{i=1}^{k} \{F(\bm Z_i) - G(\bm Z_i)\}^2  -  \int_{\R^d} \{F(\bm x) - G(\bm x)\}^2 \dd G(\bm x) \right|
$$
and decomposing the sum in~\eqref{eq:term:sum}, by the triangle inequality, \eqref{eq:term:sum} is smaller than
$$
\sup_{\ip{mc}+1 \leq k \leq \ip{m(T+1)}} N_k + N_{\ip{mc}} \leq \sup_{\ip{mc}+1 \leq k} N_k + N_{\ip{mc}} \as 0,
$$
since $N_k \as 0$ as $k \to \infty$.

Hence, $J_m \as 0$. It follows that~\eqref{eq:unifcp} is proven and, from the continuous mapping theorem, we immediately obtain that $m^{-1} \Sb_{m,q} \p M_{c,q}$ and $m^{-1} \Tb_{m,q} \p N_{c,q}$ in $\ell^\infty([1,T+1])$, where
$$
M_{c,q}(t) = \sup_{s \in [1,t]} \int_{\R^d} \{ K_{c,q}(s,t,\bm x) \}^2 \dd H_t(\bm x) \, \text{and} \, N_{c,q}(t) = \int_1^t \int_{\R^d} \{ K_{c,q}(s,t,\bm x) \}^2 \dd H_t(\bm x) \dd s,
$$
and then that
\begin{align*}
  m^{-1} &\sup_{t \in [1,T+1]} \Sb_{m,q}(t) \p  \sup_{t \in [1,T+1]} M_{c,q}(t), \\
  m^{-1} &\sup_{t \in [1,T+1]} \Tb_{m,q}(t) \p  \sup_{t \in [1,T+1]} N_{c,q}(t).
\end{align*}
Let $\Lambda_F = \int_{\R^d}  \{ F(\bm x) - G(\bm x) \}^2 \dd F(\bm x)$ and $\Lambda_G= \int_{\R^d}  \{ F(\bm x) - G(\bm x) \}^2 \dd G(\bm x)$. Since $F$ and $G$ are continuous and $F \neq G$, $\Lambda_F > 0$ and $\Lambda_G > 0$. As a consequence, for all $t \in [1,T+1]$,
$$
\Lambda_t = \frac{c \wedge t}{t} \Lambda_F + \frac{t - c \wedge t}{t} \Lambda_G > 0.
$$
Furthermore, since $q(s,t) > 0$ for all $(s,t) \in \Delta$ and since, for all $c < t \leq T+1$,
$$
\sup_{1 \leq s \leq t} (s \wedge c) \{(t \vee c) - (s \vee c) \}  =  c (t - c ) > 0,
$$
we have, from the continuity of $q$, $t \mapsto \Lambda_t$ and $(s,t) \mapsto (s \wedge c) \{(t \vee c) - (s \vee c) \}$, that
$$
\sup_{t \in [1,T+1]} M_{c,q}(t) \geq \sup_{t \in [c,T+1]} \Lambda_t \sup_{s \in [1,t]} \left[ \frac{(s \wedge c) \{t - (s \vee c) \}}{q(s,t)} \right]^2  > 0,
$$
and
$$
\sup_{t \in [1,T+1]} N_{c,q}(t) \geq \sup_{t \in [c,T+1]} \Lambda_t \int_1^t \left[ \frac{(s \wedge c) \{t - (s \vee c) \}}{q(s,t)} \right]^2 \dd s > 0.
$$
\end{proof}


\section{Auxiliary lemmas for the proof of Theorem~\ref{thm:thresh:boot}}
\label{sec:aux:thresh:generic}

This section, which is, to a large extent, notationally independent of the rest of the paper, provides the proofs of two lemmas, possibly of independent interest, necessary for showing Theorem~\ref{thm:thresh:boot}.

Let $\bm \Xc_n$ denote available data. No assumptions are made on $\bm \Xc_n$ apart from measurability. To fix ideas, one can think of $\bm \Xc_n$ as a sequence of $n$ multivariate serially dependent random vectors. Let $\bm S_n = \bm S_n(\bm \Xc_n)$ be a $\R^p$-valued statistic such that $\bm S_n = (S_{1,n},\dots,S_{p,n}) \leadsto \bm S = (S_1,\dots,S_p)$ as $n \to \infty$, where the random vector $\bm S$ is assumed to have a continuous d.f. We additionally suppose that we have available \emph{bootstrap replicates} $\bm S_{n}^{[i]} = \bm S_{n}^{[i]}(\bm \Xc_n, \bm \Wc_n^{[i]})$ of $\bm S_n$, where the $\bm \Wc_n^{[i]}$, $i \in \N$, are $n$-dimensional independent and identically distributed random vectors representing the additional sources of randomness involved in the underlying bootstrap mechanism. We shall further assume that, as $n \to \infty$,
\begin{equation}
\label{eq:jwconv}
\big(\bm S_{n},\bm S_{n}^{[1]}, \bm S_{n}^{[2]} \big) \\ \leadsto \big ( \bm S, \bm S^{[1]}, \bm S^{[2]} \big),
\end{equation}
in $(\R^p)^3$, where $\bm S^{[1]}$ and $\bm S^{[2]}$ are independent copies of $\bm S$. Note that, from Lemma~2.2 of \cite{BucKoj19}, \eqref{eq:jwconv} is equivalent to the usual conditional bootstrap consistency statement, that is,
$$
\sup_{\bm x \in \R^p} | \Pr(\bm S_n^{[1]} \leq \bm x \mid \bm \Xc_n) - \Pr(\bm S_n \leq \bm x)| \p 0, \qquad \text{as } n \to \infty.
$$

Before stating and proving the two lemmas, we introduce some additional notation and list useful results. For any $q\in\{1,\dots,p\}$, $\{j_1,\dots,j_q\}\subset\{1,\dots,p\}$, $x_{j_1},\dots, x_{j_q} \in \R$ and $B \in \N$, let
\begin{align*}
   F_{\bm S, \{j_1,\dots,j_q\}}(x_{j_1},\dots,x_{j_q}) &= \Pr(S_{j_1} \leq x_{j_1}, \dots, S_{j_q} \leq x_{j_q}), \\
  F_{\bm S_n, \{j_1,\dots,j_q\}}(x_{j_1},\dots, x_{j_q}) &= \Pr(S_{j_1,n} \leq x_{j_1}, \dots, S_{j_q,n} \leq x_{j_q}), \\
F_{\bm S_n, \{j_1,\dots,j_q\}}^B(x_{j_1},\dots,x_{j_q}) &= \frac{1}{B} \sum_{i=1}^B \1(S_{j_1,n}^{[i]} \leq x_{j_1}, \dots, S_{j_q,n}^{[i]} \leq x_{j_q}).
\end{align*}
Since $\bm S_{n} \leadsto \bm S$ in $\R^p$ as $n \to \infty$ and $\bm S$ has a continuous d.f., we have from Lemma~2.11 of \cite{van98} that, for any $q\in\{1,\dots,p\}$ and $\{j_1,\dots,j_q\}\subset\{1,\dots,p\}$, as $n \to \infty$,
\begin{equation}
  \label{eq:unif}
  \sup_{(x_{j_1},\dots,x_{j_q}) \in \R^q} |F_{\bm S_n, \{j_1,\dots,j_q\}}(x_{j_1},\dots,x_{j_q}) - F_{\bm S, \{j_1,\dots,j_q\}}(x_{j_1},\dots,x_{j_q})| \to 0.
\end{equation}
Proceeding as in the proof of the aforementioned lemma, it can actually also be shown that, for any $q\in\{1,\dots,p\}$ and $\{j_1,\dots,j_q\}\subset\{1,\dots,p\}$, as $n \to \infty$,
\begin{equation}
  \label{eq:unif:sign}
  \sup_{(x_{j_1},\dots,x_{j_q}) \in \R^q} |\Pr(S_{j_1,n} < x_{j_1}, \dots, S_{j_q,n} < x_{j_q}) - \Pr(S_{j_1} < x_{j_1}, \dots, S_{j_q} < x_{j_q})| \to 0.
\end{equation}
Combining~\eqref{eq:jwconv} with Assertion~(f) of Lemma 2.2 in \cite{BucKoj19} and~\eqref{eq:unif}, we further obtain that, for any $q\in\{1,\dots,p\}$ and $\{j_1,\dots,j_q\}\subset\{1,\dots,p\}$, as $n,B \to \infty$,
\begin{equation}
  \label{eq:unif:B}
  \sup_{(x_{j_1},\dots,x_{j_q}) \in \R^q} |F_{\bm S_n, \{j_1,\dots,j_q\}}^B(x_{j_1},\dots,x_{j_q}) - F_{\bm S, \{j_1,\dots,j_q\}}(x_{j_1},\dots,x_{j_q})| \p 0.
\end{equation}
Let $\xi \in (0,1)$ be arbitrary. The following notation will also be used in the lemmas. Let
$$
F_{\bm S, 1}(x) = F_{\bm S, \{1\}}(x) = \Pr(S_1 \leq x), \, x \in \R, \qquad g_1 =  F_{\bm S, 1}^{-1}(1 - \xi),
$$
and, recursively, for $j$ successively equal to $2,\dots,p$,
\begin{equation}
  \label{eq:Fj:gj}
F_{\bm S, j}(x) =  \frac{F_{\bm S, \{1,\dots,j\}}(g_1,\dots,g_{j-1},x)}{F_{\bm S, \{1,\dots,j-1\}}(g_1,\dots,g_{j-1})}, \, x \in\R, \qquad g_j =  F_{\bm S, j}^{-1}(1 - \xi).
\end{equation}
Similarly, let
$$
F_{\bm S_n, 1}^B(x) = F_{\bm S_n, \{1\}}^B(x), \, x \in \R, \qquad g_{1,n}^B =  F_{\bm S_n, 1}^{B,-1}(1 - \xi),
$$
and, recursively, for $j$ successively equal to $2,\dots,p$,
\begin{equation}
  \label{eq:FjB:gjB}
F_{\bm S_n, j}^B(x) = \frac{F_{\bm S_n, \{1,\dots,j\}}^B(g_{1,n}^B,\dots,g_{j-1,n}^B,x)}{F_{\bm S_n, \{1,\dots,j-1\}}^B(g_{1,n}^B,\dots,g_{j-1,n}^B)}, \, x \in\R, \qquad g_{j,n}^B =  F_{\bm S_n, j}^{B,-1}(1 - \xi).
\end{equation}

The following two lemmas are instrumental for proving Theorem~\ref{thm:thresh:boot}.

\begin{lem}
\label{lem:threshold}
As $n,B \to \infty$,
\begin{gather}
  \label{eq:rec11}
  \sup_{x \in \R} |F_{\bm S_n,1}^B(x) - F_{\bm S,1}(x)| \p 0, \\
  \label{eq:rec12}
  \Pr ( S_{1,n} \leq g_{1,n}^B ) \to 1 - \xi, \\
  \label{eq:rec13}
  F_{\bm S,1}( g_{1,n}^B ) \to 1 - \xi,
\end{gather}
and, for any $j \in \{2,\dots,p\}$,
\begin{gather}
  \label{eq:recj1}
  \sup_{x \in \R} |F_{\bm S_n,j}^B(x) - F_{\bm S,j}(x)| \p 0, \\
  \label{eq:recj2}
  \Pr ( S_{j,n} \leq g_{j,n}^B \mid S_{1,n}  \leq g_{1,n}^B, \dots,S_{j-1,n} \leq g_{j-1,n}^B ) \to 1 - \xi, \\
  \label{eq:recj3}
  F_{\bm S,j}( g_{j,n}^B ) \to 1 - \xi.
\end{gather}
The statements~\eqref{eq:rec12} and~\eqref{eq:recj2} with `$\leq$' replaced by `$<$' hold as well.

\end{lem}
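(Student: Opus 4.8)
The plan is to argue by induction on $j\in\{1,\dots,p\}$, establishing~\eqref{eq:recj1}, \eqref{eq:recj3} and~\eqref{eq:recj2} (in that order, and with their `$<$'~analogues) at step $j$ from the same statements at steps $1,\dots,j-1$, with~\eqref{eq:rec11}--\eqref{eq:rec13} as the base case $j=1$. Three elementary facts are used throughout. First, a \emph{quantile continuity bound}: if $F$ is a continuous d.f.\ on $\R$ and $G$ is any d.f.\ on $\R$, then $|F(G^{-1}(r))-r|\leq\sup_{x\in\R}|G(x)-F(x)|$ for every $r\in(0,1)$ (immediate from $G(G^{-1}(r))\geq r$, from $G(y)<r$ for $y<G^{-1}(r)$, and from the continuity of $F$). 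Second, the continuity of the d.f.\ of $\bm S$ forces $\Pr(S_i=x)=0$ for all $i,x$, hence every ``sectioned'' map $(z_1,\dots,z_q)\mapsto\Pr(S_1\leq z_1,\dots,S_q\leq z_q)$ is jointly continuous. Third, the recursively defined limiting objects in~\eqref{eq:Fj:gj} are non-degenerate: each $F_{\bm S,j}$ is then continuous, $F_{\bm S,j}(g_j)=1-\xi$, and so $F_{\bm S,\{1,\dots,j\}}(g_1,\dots,g_j)=(1-\xi)^j>0$ by telescoping; the same telescoping (with `$\geq$') applied to the empirical d.f.s in~\eqref{eq:FjB:gjB} shows their denominators are bounded below by $(1-\xi)^{j-1}>0$, so the $F_{\bm S_n,j}^B$ are bona fide d.f.s.

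For the inductive step I would first prove~\eqref{eq:recj1}. By~\eqref{eq:unif:B} for the index set $\{1,\dots,j\}$ one controls $\sup_x|F^B_{\bm S_n,\{1,\dots,j\}}(g^B_{1,n},\dots,g^B_{j-1,n},x)-F_{\bm S,\{1,\dots,j\}}(g^B_{1,n},\dots,g^B_{j-1,n},x)|$; it then remains to replace each random argument $g^B_{i,n}$ by the deterministic $g_i$ inside the continuous function $F_{\bm S,\{1,\dots,j\}}$. Since $F_{\bm S,i}$ need not be strictly increasing at $g_i$, the key device is a \emph{deterministic sandwiching}: for fixed $\delta>0$ and $i<j$, the induction hypothesis~\eqref{eq:reci1} together with the quantile bound gives, on an event of probability tending to $1$, $g^B_{i,n}\in[F_{\bm S,i}^{-1}(1-\xi-\delta),\,F_{\bm S,i}^{-1}(1-\xi+\delta)]$, and these intervals shrink as $\delta\downarrow0$ to the ``flat interval'' $[g_i,g_i^+]$, $g_i^+:=\sup\{x:F_{\bm S,i}(x)\leq1-\xi\}<\infty$, on which both $F_{S_i}$ and $x\mapsto\Pr(S_1\leq g_1,\dots,S_{i-1}\leq g_{i-1},S_i\leq x)$ are constant. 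Replacing the $g^B_{i,n}$ by $g_i$ one coordinate at a time and bounding each change by a probability of the form $\Pr(S_i\in(g_i-\eta,g_i^++\eta])$ (or the same sectioned by the earlier constraints), which $\to0$ as $\eta\downarrow0$ by the continuity/flatness above, yields~\eqref{eq:recj1} after dividing by the denominator, which is bounded away from $0$. Then~\eqref{eq:recj3} is immediate from~\eqref{eq:recj1}, the continuity of $F_{\bm S,j}$, and the quantile bound.

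Finally,~\eqref{eq:recj2} would be reduced to the single statement $(\star_j)$: $\Pr(S_{1,n}\leq g^B_{1,n},\dots,S_{j,n}\leq g^B_{j,n})\to(1-\xi)^j$ as $n,B\to\infty$, because~\eqref{eq:recj2} is then the ratio $(\star_j)/(\star_{j-1})$ with $(\star_0)=1$ (each $(\star_{j-1})\to(1-\xi)^{j-1}>0$ making the division legitimate). For $(\star_j)$, sandwich once more: on the high-probability event $\{g^B_{i,n}\in[g_i-\eta,g_i^++\eta]\ \forall i\leq j\}$ one has $\{S_{i,n}\leq g_i-\eta\ \forall i\}\subset\{S_{i,n}\leq g^B_{i,n}\ \forall i\}\subset\{S_{i,n}\leq g_i^++\eta\ \forall i\}$, so $\Pr(S_{i,n}\leq g^B_{i,n}\ \forall i)$ lies between $F_{\bm S_n,\{1,\dots,j\}}(g_1-\eta,\dots,g_j-\eta)-o(1)$ and $F_{\bm S_n,\{1,\dots,j\}}(g_1^++\eta,\dots,g_j^++\eta)+o(1)$; letting $n,B\to\infty$ via~\eqref{eq:unif}, then $\eta\downarrow0$, both bounds converge to $F_{\bm S,\{1,\dots,j\}}(g_1,\dots,g_j)=(1-\xi)^j$ — the upper one because $F_{\bm S,\{1,\dots,j\}}(g_1^+,\dots,g_j^+)=F_{\bm S,\{1,\dots,j\}}(g_1,\dots,g_j)$ by the same telescoping-over-flat-intervals argument. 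The `$<$'~versions of~\eqref{eq:recj2} and~\eqref{eq:rec12} follow identically, using~\eqref{eq:unif:sign} in place of~\eqref{eq:unif} and slightly shrinking the sandwiching thresholds so that strict inequalities survive on the favourable event. I expect the main obstacle to be exactly this lack of any strict-monotonicity assumption on the $F_{\bm S,j}$: it is what blocks the naive claim $g^B_{i,n}\overset{\Pr}{\to}g_i$ and forces the deterministic-quantile sandwiching together with the (somewhat delicate) bookkeeping showing that every relevant probability is insensitive to moving a threshold within its flat interval.
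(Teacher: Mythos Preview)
Your proposal is correct and complete, but it proceeds by a genuinely different mechanism from the paper's proof. The paper never localizes the random thresholds $g_{i,n}^B$ around deterministic intervals; instead, it stays at the level of the finite-sample d.f.\ $F_{\bm S_n,\{1,\dots,j\}}$ and, to compare $F_{\bm S_n,\{1,\dots,j\}}(g_{1,n}^B,\dots,x)$ with $F_{\bm S_n,\{1,\dots,j\}}(g_1,\dots,x)$, rewrites quantile events via right-continuity as $\{S_{i,n}\ge g\}=\{F(S_{i,n})\ge 1-\xi\}$ (for $F\in\{F_{\bm S,i},F_{\bm S_n,i}^B\}$) and then applies the indicator inequality $|\1(y\le a)-\1(y\le b)|\le\1(|y-a|\le\eps)+\1(|a-b|>\eps)$ together with the already established uniform convergences $F_{\bm S_n,i}^B\to F_{\bm S,i}$. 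The base case~\eqref{eq:rec12} is taken from Lemma~4.2 of \cite{BucKoj19}. You, by contrast, pass to the limiting d.f.\ $F_{\bm S,\{1,\dots,j\}}$ first via~\eqref{eq:unif:B}, then trap each $g_{i,n}^B$ in a shrinking deterministic window $[g_i-\eta,g_i^++\eta]$ using your quantile bound, and exploit monotonicity plus the flat-interval identity $\Pr(S_1\le g_1,\dots,S_{i-1}\le g_{i-1},\,g_i<S_i\le g_i^+)=0$ to show the window is probabilistically negligible. The paper's transformation trick hides the flat-interval bookkeeping inside the identity $\{F_{\bm S,i}(S_{i,n})\ge 1-\xi\}$ and is somewhat slicker; your route is more elementary (it yields~\eqref{eq:rec12} directly from the sandwich, without invoking an external lemma) and makes the role of the continuity of the d.f.\ of $\bm S$ completely explicit. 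One point worth stating carefully when you write it out: when you replace the coordinates ``one at a time'' in the upper sandwich bound, the order matters --- change coordinate $i$ only after coordinates $1,\dots,i-1$ are already at $g_1,\dots,g_{i-1}$, so that the resulting increment is bounded by $\Pr(S_1\le g_1,\dots,S_{i-1}\le g_{i-1},\,g_i<S_i\le g_i^++\eta)$, which is exactly the ``sectioned'' probability you flag in your parenthetical and which indeed vanishes as $\eta\downarrow 0$; the unconditional bound $\Pr(S_i\in(g_i,g_i^+])$ need not be zero.
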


\begin{lem}
  \label{lem:thresh:inf}
Let $\bm T_n = \bm T_n(\bm \Xc_n) = (T_{1,n}, \dots, T_{p,n})$ be a $\R^p$-valued statistic such that $\max_{1 \leq j \leq p} T_{j,n} \p \infty$ as $n \to \infty$. Then
$$
\Pr(T_{1,n} \leq g_{1,n}^B,\dots,T_{p,n} \leq g_{p,n}^B) \to 0 \qquad \text{ as } n,B \to \infty.
$$
\end{lem}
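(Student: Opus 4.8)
The plan is to dominate the joint event by a single inequality between $\max_{1\le j\le p}T_{j,n}$ and $\max_{1\le j\le p}g_{j,n}^B$, and then to exploit the tightness of the estimated thresholds against the divergence $\max_{1\le j\le p}T_{j,n}\p\infty$. First I would note that on the event $\{T_{1,n}\le g_{1,n}^B,\dots,T_{p,n}\le g_{p,n}^B\}$ one has $T_{j,n}\le g_{j,n}^B\le\max_{1\le k\le p}g_{k,n}^B$ for every $j$, hence $\max_{1\le j\le p}T_{j,n}\le\max_{1\le j\le p}g_{j,n}^B$; thus it suffices to show that $\Pr(\max_{1\le j\le p}T_{j,n}\le\max_{1\le j\le p}g_{j,n}^B)\to0$ as $n,B\to\infty$.

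The key ingredient is then that $g_{j,n}^B=O_\Pr(1)$ uniformly in $n,B$ large. This I would deduce from~\eqref{eq:rec13} and~\eqref{eq:recj3} of Lemma~\ref{lem:threshold}, namely $F_{\bm S,j}(g_{j,n}^B)\to1-\xi$, combined with the observation that each $F_{\bm S,j}$ defined in~\eqref{eq:Fj:gj} is a genuine continuous distribution function: it is nondecreasing and right-continuous, it is continuous because $\bm S$ has a continuous d.f.\ (the normalizing denominators being strictly positive), and it tends to $1$ at $+\infty$. Given $\eta>0$, I would fix $K$ such that $F_{\bm S,j}(K)>1-\xi/2$ for all $j\in\{1,\dots,p\}$; by monotonicity $\{g_{j,n}^B>K\}\subseteq\{F_{\bm S,j}(g_{j,n}^B)>1-\xi/2\}$, and the probability of the latter event tends to $0$ since $1-\xi/2>1-\xi$. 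A union bound over $j$ then yields $\Pr(\max_{1\le j\le p}g_{j,n}^B>K)<\eta$ for all $n,B$ large enough.

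Finally I would assemble the pieces via the inclusion $\{\max_jT_{j,n}\le\max_jg_{j,n}^B\}\subseteq\{\max_jT_{j,n}\le K\}\cup\{\max_jg_{j,n}^B>K\}$. The second event has probability at most $\eta$ for $n,B$ large by the previous step, while the first has probability $o(1)$ as $n\to\infty$ because $\max_{1\le j\le p}T_{j,n}\p\infty$. Letting $\eta\downarrow0$ gives the claim.

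The step I expect to be most delicate is the tightness of the $g_{j,n}^B$: everything there hinges on the limiting distribution of $\bm S$ being continuous (so that the $F_{\bm S,j}$ are proper continuous d.f.'s converging at the right level) and on $\xi\in(0,1)$; once this is in place the remainder of the argument is elementary.
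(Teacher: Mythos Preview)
Your proof is correct and relies on the same key input as the paper's, namely the convergence $F_{\bm S,j}(g_{j,n}^B)\to 1-\xi$ from Lemma~\ref{lem:threshold} (equations~\eqref{eq:rec13} and~\eqref{eq:recj3}). The initial domination $\Pr(T_{1,n}\le g_{1,n}^B,\dots,T_{p,n}\le g_{p,n}^B)\le\Pr(M_n\le\max_j g_{j,n}^B)$ with $M_n=\max_j T_{j,n}$ is also identical to the paper's first step.

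Where you diverge is in how the comparison is finished. You extract tightness of each $g_{j,n}^B$ from $F_{\bm S,j}(g_{j,n}^B)\to 1-\xi<1$ and the fact that $F_{\bm S,j}$ is a proper continuous d.f., then argue on the original scale via the split $\{M_n\le K\}\cup\{\max_j g_{j,n}^B>K\}$. The paper instead applies the union bound $\Pr(M_n\le\max_j g_{j,n}^B)\le\sum_j\Pr(M_n\le g_{j,n}^B)$, and for each $j$ transforms both sides by $F_{\bm S,j}$ to work directly on $[0,1]$: since $F_{\bm S,j}(M_n)\p 1$ and $F_{\bm S,j}(g_{j,n}^B)\to 1-\xi$, the difference converges in probability to $\xi>0$, whence $\Pr(M_n\le g_{j,n}^B)\le\Pr\{F_{\bm S,j}(M_n)-F_{\bm S,j}(g_{j,n}^B)\le 0\}\to 0$. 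The paper's route is a bit slicker because it bypasses the explicit tightness argument and the choice of $K$; your route makes the boundedness of the thresholds explicit, which is arguably more transparent. Both are equally valid and of comparable length.
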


\begin{proof}[Proof of Lemma~\ref{lem:threshold}]
The claims in~\eqref{eq:rec11} and~\eqref{eq:rec12} are immediate consequences of~\eqref{eq:unif:B} and Lemma~4.2 in \cite{BucKoj19}, respectively. The claim~\eqref{eq:rec13} follows from the fact that $F_{\bm S,1}(g_1) = 1-\xi$, the triangle inequality,~\eqref{eq:unif} and~\eqref{eq:rec12}.

  To prove~\eqref{eq:recj1},~\eqref{eq:recj2} and~\eqref{eq:recj3}, we proceed by induction on $j$.

\emph{Proof of~\eqref{eq:recj1},~\eqref{eq:recj2} and~\eqref{eq:recj3} for j=2:}
By the triangle inequality, $|F_{\bm S_n,1}^B(g_{1,n}^B) - F_{\bm S,1}(g_1)|$ is smaller than
$$
|F_{\bm S_n,1}^B(g_{1,n}^B) - F_{\bm S,1}(g_{1,n}^B)| + |F_{\bm S,1}(g_{1,n}^B) - F_{\bm S_n,1}(g_{1,n}^B)|+ |F_{\bm S_n,1}(g_{1,n}^B) - F_{\bm S,1}(g_1)|.
$$
The first term converges in probability to zero by~\eqref{eq:unif:B} as $n,B \to \infty$.  The second term converges to zero by~\eqref{eq:unif} as $n,B \to \infty$. The third term converges to zero as a consequence of~\eqref{eq:rec12} since $F_{\bm S,1}(g_1) = 1 - \xi$. Hence, $|F_{\bm S_n,1}^B(g_{1,n}^B) - F_{\bm S,1}(g_1)| \p 0$ as $n,B \to \infty$. From~\eqref{eq:Fj:gj} and~\eqref{eq:FjB:gjB}, the latter implies that to prove that $\sup_{x \in \R} |F_{\bm S_n,2}^B(x) - F_{\bm S,2}(x)| \p 0$ as $n,B \to \infty$, it suffices to prove that
$$
\sup_{x \in \R} |F_{\bm S_n,\{1,2\}}^B(g_{1,n}^B,x) - F_{\bm S,\{1,2\}}(g_1,x)| \p 0
$$
as $n,B \to \infty$. From the triangle inequality,~\eqref{eq:unif} and~\eqref{eq:unif:B}, the latter will hold if
\begin{equation}
  \label{eq:aim1}
\sup_{x \in \R} |F_{\bm S_n,\{1,2\}}(g_{1,n}^B,x) - F_{\bm S_n,\{1,2\}}(g_1,x)| \to 0
\end{equation}
as $n,B \to \infty$. For any $x \in \R$, we have
\begin{multline}
  \label{eq:already2}
  F_{\bm S_n,\{1,2\}}(g_{1,n}^B,x) - F_{\bm S_n,\{1,2\}}(g_1,x) 
  = \Pr(S_{1,n} \geq g_1, S_{2,n} \leq x ) -  \Pr(S_{1,n} = g_1, S_{2,n} \leq x ) \\ - \Pr ( S_{1,n} \geq g_{1,n}^B, S_{2,n} \leq x )  + \Pr ( S_{1,n} = g_{1,n}^B, S_{2,n} \leq x ).
\end{multline}
Hence, by the triangle inequality,
\begin{multline}
  \label{eq:already1}
\sup_{x \in \R} |F_{\bm S_n,\{1,2\}}(g_{1,n}^B,x) - F_{\bm S_n,\{1,2\}}(g_1,x)| \\ \leq \sup_{x \in \R} |\Pr ( S_{1,n} \geq g_{1,n}^B, S_{2,n} \leq x ) - \Pr(S_{1,n} \geq g_1, S_{2,n} \leq x )| + 2 \sup_{x \in \R} \Pr(S_{1,n} = x).
\end{multline}
From the triangle inequality,~\eqref{eq:unif} and~\eqref{eq:unif:sign}, we obtain that the last supremum on the right-hand side of the previous display converges to zero a $n \to \infty$ since $\bm S$ has a continuous d.f. Hence, to show~\eqref{eq:aim1}, it remains to show that the first supremum on the right converges to zero in probability as $n,B \to \infty$. From the right-continuity of $F_{\bm S, 1}$ and $F_{\bm S_n,1}^B$, we obtain that, for any $x \in \R$,
\begin{align*}
  |\Pr ( S_{1,n} &\geq g_{1,n}^B, S_{2,n} \leq x ) - \Pr(S_{1,n} \geq g_1, S_{2,n} \leq x )| \\
                 &= | \Pr \{ S_{1,n} \geq F_{\bm S_n,1}^{B,-1}(1 -\xi), S_{2,n} \leq x \} - \Pr \{ S_{1,n} \geq F_{\bm S, 1}^{-1}(1 - \xi), S_{2,n} \leq x \} | \\
                 &= | \Pr \{ F_{\bm S_n,1}^B(S_{1,n}) \geq 1 -\xi, S_{2,n} \leq x \} - \Pr \{ F_{\bm S, 1}(S_{1,n}) \geq 1 - \xi, S_{2,n} \leq x \} |.
\end{align*}
Furthermore, using the fact that, for any $a, b, y \in \R$ and $\eps > 0$,
\begin{equation}
  \label{eq:ineq:ind}
|\1(y \leq a) - \1(y \leq b)| \leq \1(|y - a| \leq \eps) + \1(|a-b| > \eps),
\end{equation}
we obtain that, for any $\eps > 0$,
\begin{equation}
  \label{eq:already3}
  \begin{split}
    | \Pr &\{ F_{\bm S_n,1}^B(S_{1,n}) \geq 1 -\xi, S_{2,n} \leq x \} - \Pr \{ F_{\bm S, 1}(S_{1,n}) \geq 1 - \xi, S_{2,n} \leq x \} | \\
    &= | \Ex [ \1 \{ F_{\bm S_n,1}^B(S_{1,n}) \geq 1 -\xi, S_{2,n} \leq x \} - \1 \{ F_{\bm S, 1}(S_{1,n}) \geq 1 - \xi, S_{2,n} \leq x \} ] | \\
    &\leq \Ex | \1 \{ F_{\bm S_n,1}^B(S_{1,n}) \geq 1 -\xi, S_{2,n} \leq x \} - \1 \{ F_{\bm S, 1}(S_{1,n}) \geq 1 - \xi, S_{2,n} \leq x \} | \\
    &\leq \Ex | \1 \{ F_{\bm S_n,1}^B(S_{1,n}) \geq 1 -\xi \} - \1 \{ F_{\bm S, 1}(S_{1,n}) \geq 1 - \xi \} | \\
    &\leq \Pr\{ |F_{\bm S, 1}(S_{1,n}) - 1 + \xi| \leq \eps \} + \Pr\{ |F_{\bm S_n,1}^B(S_{1,n}) - F_{\bm S, 1}(S_{1,n}) | > \eps \}.
    \end{split}
\end{equation}
By the continuous mapping theorem and the Portmanteau theorem, the first probability on the right converges to $\Pr\{  | F_{S_1}(S_1) - 1 + \xi | \leq \eps\}$ as $n \to \infty$ and can be made arbitrarily small by decreasing~$\eps$. The second probability converges to zero as $n,B \to \infty$ by~\eqref{eq:rec11} for any~$\eps > 0$. Hence,~\eqref{eq:aim1} holds and so does~\eqref{eq:recj1} for $j=2$.

Let us now prove~\eqref{eq:recj2} for $j=2$. Since, from~\eqref{eq:rec12}, $\Pr ( S_{1,n} \leq g_{1,n}^B ) \to P(S_1 \leq g_1) = 1 - \xi$, as $n,B \to \infty$, it remains to show that
$$
\Pr(S_{1,n}  \leq g_{1,n}^B, S_{2,n}  \leq g_{2,n}^B) \to \Pr(S_1  \leq g_1, S_2 \leq g_2)
$$
as $n,B \to \infty$. From the triangle inequality and~\eqref{eq:unif}, if suffices to prove that, as $n,B \to \infty$,
$$
| F_{\bm S_n,\{1,2\}}(g_{1,n}^B, g_{2,n}^B) - F_{\bm S_n,\{1,2\}}(g_1, g_2) | \to 0.
$$
The term on the left-hand side of the previous display is smaller than
$$
|F_{\bm S_n,\{1,2\}}(g_{1,n}^B, g_{2,n}^B) - F_{\bm S_n,\{1,2\}}(g_1, g_{2,n}^B)| + |F_{\bm S_n,\{1,2\}}(g_1, g_{2,n}^B) - F_{\bm S_n,\{1,2\}}(g_1, g_2)|.
$$
The first difference between absolute values is smaller than~\eqref{eq:already1}, which was already shown to converge to zero  as $n,B \to \infty$. Proceeding as in~\eqref{eq:already2}, to show that the second difference between absolute values converges to zero as $n,B \to \infty$, it suffices to prove that
\begin{multline*}
|\Pr ( S_{1,n} \leq g_1, S_{2,n} \geq g_{2,n}^B ) - \Pr(S_{1,n} \leq g_1, S_{2,n} \geq g_2 )| \\ = | \Pr \{ S_{1,n} \leq g_1, S_{2,n} \geq F_{\bm S_n,2}^{B,-1}(1 - \xi) \} - \Pr \{ S_{1,n} \leq g_1, S_{2,n} \geq F_{\bm S,2}^{-1}(1 - \xi) \} | \to 0
\end{multline*}
as $n,B \to \infty$. Using again~\eqref{eq:ineq:ind} and proceeding as in~\eqref{eq:already3}, the probability on the left can be shown to be smaller than
$$
\Pr\{ |F_{\bm S, 2}(S_{2,n}) - 1 + \xi| \leq \eps \} + \Pr\{ |F_{\bm S_n,2}^B(S_{2,n}) - F_{\bm S, 2}(S_{2,n}) | > \eps \}
$$
for any $\eps > 0$. Using the continuity of $F_{\bm S, 2}$ and the Portmanteau theorem, the first probability converges as $n \to \infty$ to a probability that can be made arbitrarily small by decreasing~$\eps$. The second probability converges to zero as $n,B \to \infty$ for any $\eps > 0$ by~\eqref{eq:recj1} for $j=2$, which was proven previously.

It remains to prove~\eqref{eq:recj3} for $j=2$. From the triangle inequality,~\eqref{eq:unif} and~\eqref{eq:aim1}, we have that, as $n,B \to \infty$,
$$
\sup_{x \in \R} |F_{\bm S_n,\{1,2\}}(g_{1,n}^B,x) - F_{\bm S,\{1,2\}}(g_1,x)| \to 0.
$$
Starting from~\eqref{eq:recj2} for $j=2$, the desired result follows from the convergence in the previous display and~\eqref{eq:rec12}.

\emph{Proof of~\eqref{eq:recj1},~\eqref{eq:recj2} and~\eqref{eq:recj3} for all $j \in \{3,\dots,p\}$:} As mentioned previously, we proceed by induction. Let $k \in \{3,\dots,p\}$, assume that~\eqref{eq:recj1},~\eqref{eq:recj2} and~\eqref{eq:recj3} hold for all $j \in \{2,\dots,k-1\}$ and let us show that they also hold for $k$.

For any $x \in \R$, let $F_{\bm S_n, 1}(x) = F_{\bm S_n, \{1\}}(x)$, and, for $j$ successively equal to $2,\dots,p$, let
\begin{equation*}
F_{\bm S_n, j}(x) = \frac{F_{\bm S_n, \{1,\dots,j\}}(g_{1,n}^B,\dots,g_{j-1,n}^B,x)}{F_{\bm S_n, \{1,\dots,j-1\}}(g_{1,n}^B,\dots,g_{j-1,n}^B)}.
\end{equation*}
Then, from the induction hypothesis for~\eqref{eq:recj2}, as $n,B \to \infty$,
\begin{multline}
  \label{eq:decomp:F}
F_{\bm S_n,\{1,\dots,k-1\}}(g_{1,n}^B,\dots,g_{k-1}^B) = F_{\bm S_n,k-1}(g_{k-1,n}^B) F_{\bm S_n,k-2}(g_{k-2,n}^B) \cdots F_{\bm S_n,1}(g_{1,n}^B)  \\ \to
 F_{\bm S,k-1}(g_{k-1}) F_{\bm S,k-2}(g_{k-2}) \cdots F_{\bm S,1}(g_1) = (1 - \xi)^{k-1} = F_{\bm S,\{1,\dots,k-1\}}(g_1,\dots,g_{k-1}).
\end{multline}
Thus, from~\eqref{eq:unif:B}, as $n,B \to \infty$, $F_{\bm S_n,\{1,\dots,k-1\}}^B(g_{1,n}^B,\dots,g_{k-1}^B) \p F_{\bm S,\{1,\dots,k-1\}}(g_1,\dots,g_{k-1})$. Hence, to show~\eqref{eq:recj1} for $j=k$, it suffices to prove that, as $n,B \to \infty$,
$$
\sup_{x \in \R} |F_{\bm S_n,\{1,\dots,k\}}^B(g_{1,n}^B,\dots,g_{k-1,n}^B,x) - F_{\bm S,\{1,\dots,k\}}(g_1,\dots,g_{k-1},x)| \p 0.
$$
From the triangle inequality,~\eqref{eq:unif} and~\eqref{eq:unif:B}, the latter will hold if, as $n,B \to \infty$,
\begin{equation}
  \label{eq:aim2}
\sup_{x \in \R} |F_{\bm S_n,\{1,\dots,k\}}(g_{1,n}^B,\dots,g_{k-1,n}^B,x) - F_{\bm S_n,\{1,\dots,k\}}(g_1,\dots,g_{k-1},x)| \to 0.
\end{equation}
For any $x \in \R$, $|F_{\bm S_n,\{1,\dots,k\}}(g_{1,n}^B,\dots,g_{k-1,n}^B,x) - F_{\bm S_n,\{1,\dots,k\}}(g_1,\dots,g_{k-1},x)|$ is smaller than the following sum of $k-1$ terms:
\begin{align*}
  &|F_{\bm S_n,\{1,\dots,k\}}(g_{1,n}^B,g_{2,n}^B,\dots,g_{k-1,n}^B,x) - F_{\bm S_n,\{1,\dots,k\}}(g_1,g_{2,n}^B,\dots,g_{k-1,n}^B,x)| \\
  +& |F_{\bm S_n,\{1,\dots,k\}}(g_1,g_{2,n}^B,\dots,g_{k-1,n}^B,x) - F_{\bm S_n,\{1,\dots,k\}}(g_1,g_2,g_{3,n}^B,\dots,g_{k-1,n}^B,x)| \\
  &\vdots \\
  +& |F_{\bm S_n,\{1,\dots,k\}}(g_1,\dots,g_{k-2},g_{k-1,n}^B,x) - F_{\bm S_n,\{1,\dots,k\}}(g_1,\dots,g_{k-1},x)|.
\end{align*}
Proceeding as in~\eqref{eq:already2},~\eqref{eq:already1} and~\eqref{eq:already3}, the $j$th term, $j \in \{1,\dots,k-1\}$, can be shown to converge to zero as $n,B \to \infty$ as a consequence of the fact that
$$
\Pr\{ |F_{\bm S, j}(S_{j,n}) - 1 + \xi| \leq \eps \} + \Pr\{ |F_{\bm S_n,j}^B(S_{j,n}) - F_{\bm S, j}(S_{j,n}) | > \eps \}
$$
converges to zero as $n,B \to \infty$ followed by $\eps \downarrow 0$ using the continuity of $F_{\bm S, j}$ and the induction hypothesis. Hence,~\eqref{eq:aim2} holds.

Let us now show~\eqref{eq:recj2} for $j=k$. From~\eqref{eq:decomp:F}, it suffices to prove that, as $n,B \to \infty$,
$$
| F_{\bm S_n,\{1,\dots,k\}}(g_{1,n}^B, \dots, g_{k,n}^B ) - F_{\bm S,\{1,\dots,k\}}(g_1, \dots,g_k )| \to 0.
$$
From the triangle inequality and~\eqref{eq:unif}, the latter will hold if, as $n,B \to \infty$,
$$
| F_{\bm S_n,\{1,\dots,k\}}(g_{1,n}^B, \dots, g_{k,n}^B ) - F_{\bm S_n,\{1,\dots,k\}}(g_1, \dots,g_k )| \to 0.
$$
The difference between absolute values on the left of the previous display is smaller than
\begin{multline*}
| F_{\bm S_n,\{1,\dots,k\}}(g_{1,n}^B, \dots, g_{k-1,n}^B, g_{k,n}^B ) - F_{\bm S_n,\{1,\dots,k\}}(g_1, \dots,g_{k-1},g_{k,n}^B)| \\ + | F_{\bm S_n,\{1,\dots,k\}}(g_1, \dots,g_{k-1},g_{k,n}^B) - F_{\bm S_n,\{1,\dots,k\}}(g_1, \dots, g_{k-1}, g_k )|.
\end{multline*}
The first term converges to zero in probability as $n,B \to \infty$ as a consequence of~\eqref{eq:aim2}. Proceeding as in~\eqref{eq:already2},~\eqref{eq:already1} and~\eqref{eq:already3}, the second term can be shown to converge to zero as $n,B \to \infty$ as a consequence of the fact that
$$
\Pr\{ |F_{\bm S, k}(S_{k,n}) - 1 + \xi| \leq \eps \} + \Pr\{ |F_{\bm S_n,k}^B(S_{k,n}) - F_{\bm S, k}(S_{k,n}) | > \eps \}
$$
converges to zero as $n,B \to \infty$ followed by $\eps \downarrow 0$ using the continuity of $F_{\bm S, k}$ and the already proven claim~\eqref{eq:recj1} for $j=k$.

It finally remains to show~\eqref{eq:recj3} for $j=k$. From the triangle inequality,~\eqref{eq:unif} and~\eqref{eq:aim2}, we have that, as $n,B \to \infty$,
$$
\sup_{x \in \R} |F_{\bm S_n,\{1,\dots,k\}}(g_{1,n}^B,\dots,g_{k-1,n}^B,x) - F_{\bm S,\{1,\dots,k\}}(g_1,\dots,g_{k-1},x)| \to 0.
$$
Starting from~\eqref{eq:recj2} for $j=k$, the desired result follows by combining the latter convergence with~\eqref{eq:decomp:F}.

The induction is thus complete. The fact that the statements~\eqref{eq:rec11} and~\eqref{eq:recj1} with `$\leq$' replaced by `$<$' hold as well is a consequence of~\eqref{eq:unif} and~\eqref{eq:unif:sign}.
\end{proof}

\begin{proof}[Proof of Lemma~\ref{lem:thresh:inf}]
Let $M_n = \max_{1 \leq j \leq p} T_{j,n}$. Then,
\begin{align*}
  \Pr(T_{1,n} \leq g_{1,n}^B,\dots,T_{p,n} \leq g_{p,n}^B) &\leq \Pr \left(M_n \leq \max_{1 \leq j \leq p} g_{j,n}^B \right) \\
                                                           &\leq \Pr (M_n \leq g_{1,n}^B ) + \dots + \Pr( M_n \leq g_{p,n}^B ).
\end{align*}
The proof is complete if we show that, for any $j \in \{1,\dots,p\}$, $\Pr ( M_n \leq g_{j,n}^B ) \to 0$ as $n,B \to \infty$. Let $j \in \{1,\dots,p\}$ and recall the definition of the continuous d.f.\ $F_{\bm S,j}$ defined in~\eqref{eq:Fj:gj}. Then,
$$
\Pr( M_n \leq g_{j,n}^B ) \leq \Pr \{ F_{\bm S,j}(M_n) \leq F_{\bm S,j} (g_{j,n}^B ) \} = \Pr \{ F_{\bm S,j}(M_n) - F_{\bm S,j} (g_{j,n}^B ) \leq 0\}.
$$
Since $M_n \p \infty$ as $n \to \infty$, we have that $F_{\bm S,j}(M_n) \p 1$ as $n \to \infty$. Furthermore, from Lemma~\ref{lem:threshold}, $F_{\bm S,j}(g_{j,n}^B)  \to 1 - \xi$ as $n,B \to \infty$. Hence, $F_{\bm S,j}(M_n) - F_{\bm S,j}(g_{j,n}^B) \p \xi > 0$  as $n,B \to \infty$, or, equivalently, for any $\eps > 0$,
$$
\Pr \{ | F_{\bm S,j}(M_n) - F_{\bm S,j}(g_{j,n}^B) - \xi | > \eps) \to 0
$$
as $n,B \to \infty$. Hence, for any $0 < \eps < \xi$,
$$
\Pr ( M_n \leq g_{j,n}^B ) \leq \Pr \{ F_{\bm S,j}(M_n) - F_{\bm S,j} (g_{j,n}^B ) \leq 0\} \leq \Pr \{ | F_{\bm S,j}(M_n) - F_{\bm S,j}(g_{j,n}^B) - \xi | > \eps) \to 0
$$
as $n,B \to \infty$, which completes the proof.
\end{proof}


\section{Proof of Theorem~\ref{thm:thresh:boot}}
\label{sec:proof:thm:thres:boot}

\begin{proof}
Let $\bm S_m = \big( \sup_{t \in I_1} \Db_m(t), \dots, \sup_{t \in I_p} \Db_m(t) \big)$ and
$$
\bm S_m^{[b]} = \left( \sup_{t \in I_1} \Db_m^{[b]}(t), \dots, \sup_{t \in I_p} \Db_m^{[b]}(t) \right), \qquad b \in \N,
$$
be the corresponding bootstrap replicates of $\bm S_m$. From~\eqref{eq:boot:val} and the continuous mapping theorem, we immediately obtain that, under $H_0$, $(\bm S_m, \bm S_m^{[1]}, \bm S_m^{[2]}) \leadsto (\bm S, \bm S^{[1]}, \bm S^{[2]})$ in $(\R^p)^3$, where
$$
\bm S = \left( \sup_{t \in I_1} \Db_F(t), \dots, \sup_{t \in I_p} \Db_F(t) \right)
$$
and $\bm S^{[1]}$ and $\bm S^{[2]}$ are independent copies of $\bm S$. Since $\bm S$ is assumed to have a continuous d.f., the assumptions of Appendix~\ref{sec:aux:thresh:generic} are satisfied, and, therefore, Lemma~\ref{lem:threshold} with $n = m$ and $\xi = 1 - (1 - \alpha)^{1/p}$ implies that~\eqref{eq:thresh:val:1} and~\eqref{eq:thresh:val:i} hold. The fact that, under $H_0$, $\Pr(\Db_m \leq \tau_m^B) \to 1 - \alpha$ as $m,B \to \infty$ follows from a decomposition similar to the one used in~\eqref{eq:decomp:cond}.

To prove the last claim, it suffices to show that, when $\sup_{t \in [1,T+1]} \Db_m(t) \p \infty$, $\Pr ( \Db_m \leq \tau_m^B ) \to 0$ as $m, B \to \infty$. Let $T_{i,m} = \sup_{t \in I_i} \Db_m(t)$, $i \in \{1,\dots,p\}$, which implies that $\sup_{t \in [1,T+1]} \Db_m(t) = \max_{1 \leq i \leq p} T_{i,m} \p \infty$. Lemma~\ref{lem:thresh:inf} with $n = m$ then implies that, as $m,B \to \infty$,
$$
\Pr(T_{1,m} \leq g_{1,m}^B,\dots,T_{p,m} \leq g_{p,m}^B) = \Pr ( \Db_m \leq \tau_m^B ) \to 0,
$$
which completes the proof.
\end{proof}


\section{Proof of Proposition~\ref{prop:mult}}
\label{sec:proof:thresh:estim}

\begin{proof}
Recall that $n = \ip{m(T+1)}$ and, for any $b \in \N, s \in [0,1]$ and $\bm u \in [0,1]^d$, let
\begin{equation*}
  \label{eq:tilde:Bbmb}
  \tilde \Bb_n^{[b]}(s, \bm u) = \frac{1}{\sqrt{n}} \sum_{i=1}^{\ip{ns}} \xi_{i,m}^{[b]} \{ \1(\bm U_i \leq \bm u) - C_{1:n}(\bm u) \}.
\end{equation*}
From Corollary~2.2 of~\cite{BucKoj16}, we have that
$$
(\Bb_n, \tilde \Bb_n^{[1]}, \tilde \Bb_n^{[2]}) \leadsto (\Bb_C, \Bb_C^{[1]}, \Bb_C^{[2]})
$$
in  $\{ \ell^\infty([0,1]^{d+1}) \}^3$. Proceeding as at the beginning of the proof of Proposition~\ref{prop:H0}, from the continuous mapping theorem and with the map $\psi$ defined in~\eqref{eq:psi}, we obtain that
$$
\big( \psi(\Bb_n), \psi(\tilde \Bb_n^{[1]}), \psi(\tilde \Bb_n^{[2]}) \big) \leadsto (\Bb_C, \Bb_C^{[1]}, \Bb_C^{[2]})
$$
in $\{ \ell^\infty([0,T+1] \times [0,1]^d) \}^3$. Recall the definition of $\check \Bb_m^{[b]}$ in \eqref{eq:check:Bbmb}. Since $\psi(\Bb_n) - \Bb_m = o_\Pr(1)$ and $\psi(\tilde \Bb_n^{[b]}) - \check \Bb_m^{[b]} = o_\Pr(1)$, $b \in \{1,2\}$, it follows that
\begin{equation}
  \label{eq:inter:jwc}
( \Bb_m, \check \Bb_m^{[1]}, \check \Bb_m^{[2]}) \leadsto (\Bb_C, \Bb_C^{[1]}, \Bb_C^{[2]})
\end{equation}
in $\{ \ell^\infty([0,T+1] \times [0,1]^d) \}^3$. Let $\phi$ be the continuous map from $\ell^\infty([0,T+1] \times [0,1]^d)$ to $\ell^\infty([0,T+1] \times [0,1]^d)$ defined, for any $f \in \ell^\infty([0,T+1] \times [0,1]^d)$, by
$$
\phi(f)(s,\bm u) = \sqrt{T+1} f \{ s/(T+1), \bm u \}, \qquad s \in [0,T+1], \bm u \in [0,1]^d.
$$
Furthermore, for any $b \in \N$, $s \in [0,T+1]$ and $\bm u \in [0,1]^d$, let
$$
\bar \Bb_m^{[b]}(s, \bm u) = \frac{1}{\sqrt{m'}} \sum_{i=1}^{\ip{m' s}}  \xi_{i,m}^{[b]} \{ \1(\bm U_i \leq \bm u) - C_{1:n}(\bm u) \}.
$$
Then, starting from~\eqref{eq:inter:jwc} and using the continuous mapping theorem with the map $\phi$ as well as the fact that $\phi(\check \Bb_m^{[b]}) - \bar \Bb_m^{[b]} = o_\Pr(1)$ and $\phi(\Bb_C^{[b]})$ and $\Bb_C^{[b]}$ have the same distribution, $b \in \{1,2\}$, we obtain that
$$
( \Bb_m, \bar \Bb_m^{[1]}, \bar \Bb_m^{[2]}) \leadsto (\Bb_C, \Bb_C^{[1]}, \Bb_C^{[2]})
$$
in $\{ \ell^\infty([0,T+1] \times [0,1]^d) \}^3$.

Fix $b \in \{1,2\}$. To prove the first claim, it remains to show that
\begin{equation}
  \label{eq:goal}
  I_m = \sup_{s \in [0,T+1]} \sup_{\bm u \in [0,1]^d} |\hat \Bb_m^{[b]}(s, \bm u) - \bar \Bb_m^{[b]}(s, \bm u) | \p 0,
\end{equation}
where $\hat \Bb_m^{[b]}$ is defined in~\eqref{eq:hat:Bbmb}. Let $J_m = \sup_{\bm u \in [0,1]^d} \sqrt{m'} |C_{1:m}(\bm u) - C_{1:n}(\bm u) |$ and
$$
K_m = \sup_{s \in [0,T+1]} \frac{1}{m'} | \sum_{i=1}^{\ip{m' s}}  \xi_{i,m}^{[b]} |,
$$
and notice that $I_m \leq J_m \times K_m$. On one hand,
\begin{align*}
  J_m &\leq \sup_{\bm u \in [0,1]^d} \sqrt{m'} |C_{1:m}(\bm u) - C(\bm u)| + \sup_{\bm u \in [0,1]^d} \sqrt{m'} |C_{1:n}(\bm u) - C(\bm u)| \\
  &= O(1) \times \sup_{s \in [0,T+1]} \sup_{\bm u \in [0,1]^d} |\Bb_m(s, \bm u)| = O_\Pr(1).
\end{align*}
On the other hand, we have
$$
K_m = O(1) \times m^{-1} \max_{1 \leq k \leq m}  \left| \sum_{i=1}^{k}  \xi_{i,m}^{[b]} \right|.
$$
Let $Z_{k,m} = \sum_{i=1}^k \xi_{i,m}^{[b]}$, $k \in \{1,\dots,m\}$, and let $\nu > 1$. Then,
\begin{align*}
\Ex \left\{ \left( m^{-1}\max_{1 \leq k \leq m} | Z_{k,m} | \right)^\nu \right\} &= m^{-\nu} \Ex \left( \max_{1 \leq k \leq m} | Z_{k,m} |^\nu \right) \\ &\leq m^{-\nu} \Ex \left( \sum_{k = 1}^m | Z_{k,m} |^\nu \right) \leq m^{1-\nu} \max_{1 \leq k \leq m} \Ex \left( | Z_{k,m} |^\nu \right).
\end{align*}
Proceeding as in the proof of Lemma~D.2 in the supplementary material of \cite{BucKoj16}, for $\nu=4$, using the fact that the sequence $(\xi_{i,m}^{[b]})_{i \in \Z}$ is stationary and $\ell_m$-dependent,
\begin{align*}
  \Ex \left( | Z_{k,m} |^4 \right) = \Ex \left( Z_{k,m}^4 \right) &=  \sum_{i_1,i_2,i_3,i_4=1}^k \Ex \left(\xi_{i_1,m}^{[b]} \xi_{i_2,m}^{[b]} \xi_{i_3,m}^{[b]}  \xi_{i_4,m}^{[b]} \right) \\
                                   &\leq 4! \sum_{1 \leq i_1 \leq i_2 \leq i_3 \leq i_4 \leq m} \left| \Ex \left( \xi_{i_1,m}^{[b]} \xi_{i_2,m}^{[b]} \xi_{i_3,m}^{[b]}  \xi_{i_4,m}^{[b]} \right) \right| \\
                                  &\leq 4! \, m \sum_{1 \leq i_2 \leq i_3 \leq i_4 \leq m} \left| \Ex \left( \xi_{1,m}^{[b]} \xi_{i_2,m}^{[b]} \xi_{i_3,m}^{[b]}  \xi_{i4,m}^{[b]} \right) \right| \\
                                  &\leq 4! \, m \sum_{i_2=1}^{\ell_m} \sum_{i_3=i_2}^m \sum_{i_4=i_3}^{\ell_m} \left| \Ex \left( \xi_{1,m}^{[b]} \xi_{i_2,m}^{[b]} \xi_{i_3,m}^{[b]}  \xi_{i_4,m}^{[b]} \right) \right|.
\end{align*}
From Cauchy-Swcharz's inequality, there holds $| \Ex ( \xi_{1,m}^{[b]} \xi_{i_2,m}^{[b]} \xi_{i_3,m}^{[b]}  \xi_{i_4,m}^{[b]} ) | \leq \Ex \{ ( \xi_{1,m}^{[b]} )^4 \}$, and therefore $\max_{1 \leq k \leq m} \Ex \left( | Z_{k,m} |^4 \right) = O(m^2 \ell_m^2)$ since $\sup_{m \geq 1} \Ex \{ ( \xi_{1,m}^{[b]} )^4 \} < \infty$. It follows that
$$
\Ex \left\{ \left( m^{-1}\max_{1 \leq k \leq m} | Z_{k,m} | \right)^4 \right\} = O(m^{-2\eps}) \to 0,
$$
which implies that $K_m \p 0$ and therefore that~\eqref{eq:goal} holds. The proof of the first claim is thus complete. The remaining claims are essentially consequences of the continuous mapping theorem and can be shown by proceeding as in the proof of Proposition~\ref{prop:H0}.
\end{proof}

\end{appendix}


\section*{Acknowledgments}

The authors would like to thank an anonymous Referee for his/her constructive comments on an earlier version of this manuscript.


\bibliographystyle{imsart-nameyear}
\bibliography{biblio}

\end{document}